\typeout{START}

\documentclass[12pt]{article}
\newcommand{\notACMVer}[1]{#1}
\newcommand{\ACMVer}[1]{}

\makeatletter
\def\input@path{{styles/}}
\makeatother

\usepackage{myntabbing}%

\usepackage{lmodern}
\usepackage[T1]{fontenc}
\usepackage{bm}%
\usepackage{mleftright}%
\usepackage{xcolor}
\usepackage[bibencoding=utf8,style=alphabetic,backend=biber]{biblatex}%
\usepackage{sariel_biblatex}%

\usepackage{graphicx}
\usepackage{euscript}%
\usepackage{xspace}%
\usepackage{color}%
\usepackage{caption}%
\usepackage{amssymb}%
\usepackage{amsmath}%
\usepackage{paralist}%
\usepackage{enumerate}
\usepackage{multirow}
\usepackage[normalem]{ulem}%

\usepackage{courier}

\notACMVer{%
   \usepackage[hyperref, amsmath,thmmarks]{ntheorem}

   \theoremseparator{.}%
}

\notACMVer{%
   \usepackage{titlesec}
   \titlelabel{\thetitle. }
}

   \numberwithin{figure}{section}
   \numberwithin{table}{section}
   \numberwithin{equation}{section}%

\definecolor{sarielChangeColor}{rgb}{0,0.351,0.055}%

    \newcommand{\translation}{drift\xspace}

\newcommand{\WSPD}{\Term{WSPD}\index{well-separated pair
      decomposition}%
   \xspace}

\newcommand{\True}{\Term{TRUE}\xspace}

\newcommand{\ddSetY}[1]{\DistSetChar\pth{#1}}
\newcommand{\ddSetX}[2]{\DistSetChar\pth{#1, #2}\xspace}

\newcommand{\ddiSetX}[3]{\DistSetChar_{#1}\pth{ #2, #3 }}

\newcommand{\ddiX}[3]{\DistChar_{#1}\pth{#2, #3}}

\newcommand{\ddRankX}[3]{\DistChar^{#1}\pth{#2, #3}}
\newcommand{\ddRankY}[2]{\DistChar^{#1}\pth{#2}}

\newcommand{\ddiRankX}[4]{\DistChar_{#1}^{#2} \pth{#3, #4}}
\newcommand{\ddiRankY}[3]{\DistChar_{#1}^{#2} \pth{#3}}

\newcommand{\Context}{\Gamma}
\newcommand{\target}{\ensuremath{f}}

\newcommand{\kCenter} {\PStyle{{k{}Center}}\xspace}
\newcommand{\kCenterFunc} {\target_{cen}}

\newcommand{\WellBehavedDistanceProblem} {\PStyle{{Nice{}Distance{}Problem}}\xspace}

\newcommand{\NDP}{\PStyle{{NDP}}\xspace}

\newcommand{\FPTAS}{\Term{FP{T}AS}\xspace}

\renewcommand{\th}{\si{th}\xspace}

\providecommand{\si}[1]{#1}

\newcommand{\BenAddress}{%
   Department of Computer Science; University of Illinois; %
   201 N. Goodwin Avenue; %
   Urbana, IL, 61801, USA; %
   {\tt \si{raichel}2\atgen{}\si{uiuc}.\si{edu}}; %
   {\tt \url{http://illinois.edu/\string~raichel2}.%
   }
}%

\newcommand{\BenThanks}[1]{\thanks{%
      \BenAddress{} %
      #1}}

\newcommand{\atgen}{\symbol{'100}}
\newcommand{\SarielAddress}{%
   Department of
   Computer Science; University of Illinois; 201 N. Goodwin Avenue;
   Urbana, IL, 61801, USA; {\tt
      \si{sariel}\atgen{}\si{uiuc.edu}}; %
   {\tt \url{http://illinois.edu/\string~sariel}.
   }%
}

\newcommand{\SarielThanks}[1]{\thanks{%
      \SarielAddress{} %
      #1}}

\definecolor{blue25}{rgb}{0,0,0.55}%
\newcommand{\emphic}[2]{%
   \textcolor{blue25}{%
      \textbf{\emph{#1}}}%
   \index{#2}}

\newcommand{\emphi}[1]{\emphic{#1}{#1}}
\newcommand{\pbrc}[2][\!\!]{#1\left[ {#2} \Bigr. \right]}

\definecolor{red25}{rgb}{0.4,0,0.0}%

\newcommand{\PStyle}[1]{\textcolor{red25}{\textrm{\textsf{#1}}}}

\notACMVer{%
   \numberwithin{figure}{section}

   \newtheorem{theorem}{Theorem}[subsection]
   \newtheorem{lemma}[theorem]{Lemma}%
   \newtheorem{corollary}[theorem]{Corollary}

   {\theorembodyfont{\rm} }
   {\theorembodyfont{\rm} \newtheorem{remark}[theorem]{Remark}}
   {\theorembodyfont{\rm} \newtheorem{definition}[theorem]{Definition}}
   {\theorembodyfont{\rm} \newtheorem{example}[theorem]{Example}}
}
\newtheorem{algorithm}[theorem]{Algorithm}
\newtheorem{problem}[theorem]{Problem}%
\newtheorem{claim}[theorem]{Claim}%

\newcommand{\brc}[1]{\left\{ {#1} \right\}}

\newcommand{\SSet}{\mathsf{S}}
\newcommand{\XSet}{\mathsf{X}}
\newcommand{\ZSet}{\mathsf{Z}}
\newcommand{\RSample}{\mathsf{U}}%

\newcommand{\constNet}{{c_{\mathrm{net}}}}%

\newcommand{\constC}{\eta}%

\newcommand{\cC}{c_3}%
\newcommand{\cD}{c_4}%
\newcommand{\cE}{c_5}%
\newcommand{\cF}{c_6}%

\newcommand{\diameterX}[1]{\mathrm{diam}\pth{#1}}

\newcommand{\DistChar}{\mathtt{d}}
\newcommand{\ddX}[2]{\DistChar\pth{#1, #2}}
\newcommand{\DistSetChar}{\mathtt{D}}

\usepackage{mathcalb}

\newcommand{\dC}{\mathcalb{d}}

\newcommand{\dDirY}[2]{\dC\pth{#1 \rightarrow #2}}
\newcommand{\dmY}[2]{\dC\pth{#1, #2}}
\newcommand{\DistHDY}[2]{\dC_H\pth{#1, #2}}
\newcommand{\dHDY}[2]{\DistHDY{#1}{#2}}

\newcommand{\distChar}{\mathsf{d}}%
\newcommand{\distX}[2]{\distChar\pth{#1, #2}}

\newcommand{\abs}[1]{\left| {#1}  \right|}

\newcommand{\dY}[2]{\left\| #1 - #2 \right\|}

\newcommand{\nRad}{\ell}%
\newcommand{\nxRad}{\mathsf{l}}%

\newcommand{\nnX}[2]{\nu\pth{#1, #2}}%

\newcommand{\ddmX}[3]{\mathsf{d}_{#1}\pth{#2, #3}}

\newcommand{\Term}[1]{\textsf{#1}}%

\newcommand{\LP}{\Term{LP}\xspace}%
\newcommand{\PSPACE}{\Term{PSPACE}\xspace}%
\newcommand{\RAM}{\Term{RAM}\xspace}%

\renewcommand{\Re}{{\rm I\!\hspace{-0.025em} R}}

\newcommand{\HLinkShort}[2]{\hyperref[#2]{#1\ref*{#2}}}
\newcommand{\HLink}[2]{\hyperref[#2]{#1~\ref*{#2}}}
\newcommand{\HLinkPage}[2]{\hyperref[#2]{#1~\ref*{#2}%
      $_\text{p\pageref{#2}}$}}
\newcommand{\HLinkPageOnly}[1]{\hyperref[#1]{Page~\refpage*{#1}%
      $_\text{p\pageref{#1}}$}}

\newcommand{\HLinkSuffix}[3]{\hyperref[#2]{#1\ref*{#2}{#3}}}
\newcommand{\HLinkPageSuffix}[3]{\hyperref[#2]{#1\ref*{#2}%
      #3$_\text{p\pageref{#2}}$}}

\newcommand{\figlab}[1]{\label{fig:#1}}
\newcommand{\figref}[1]{\HLink{Figure}{fig:#1}}

\newcommand{\seclab}[1]{\label{sec:#1}}
\newcommand{\secref}[1]{\HLink{Section}{sec:#1}}

\newcommand{\corlab}[1]{\label{cor:#1}}
\newcommand{\corref}[1]{\HLink{Corollary}{cor:#1}}%

\providecommand{\deflab}[1]{\label{def:#1}}
\newcommand{\defref}[1]{\HLink{Definition}{def:#1}}
\newcommand{\defrefY}[2]{\hyperref[def:#1]{#2}}

\newcommand{\clmlab}[1]{\label{claim:#1}}
\newcommand{\clmref}[1]{\HLink{Claim}{claim:#1}}

\newcommand{\remlab}[1]{\label{rem:#1}}
\newcommand{\remref}[1]{\HLink{Remark}{rem:#1}}%

\newcommand{\lemlab}[1]{\label{lemma:#1}}
\newcommand{\lemref}[1]{\HLink{Lemma}{lemma:#1}}%

\newcommand{\linelab}[1]{\label{line:#1}}%
\newcommand{\lineref}[1]{\HLink{Line}{line:#1}}%

\newcommand{\tablab}[1]{\label{table:#1}}%
\newcommand{\tabref}[1]{\HLink{Table}{table:#1}}%

\newcommand{\alglab}[1]{\label{Algorithm:#1}}%
\newcommand{\algref}[1]{\HLink{Algorithm}{Algorithm:#1}}%

\newcommand{\thmlab}[1]{{\label{theo:#1}}}
\newcommand{\thmref}[1]{\HLink{Theorem}{theo:#1}}

\providecommand{\eqlab}[1]{}%
\renewcommand{\eqlab}[1]{\label{equation:#1}}

\notACMVer{%
\newenvironment{proof}{\trivlist\item[]\emph{Proof}:}%
   {\unskip\nobreak\hskip 1em plus 1fil\nobreak%
      \myqedsymbol%
      \parfillskip=0pt%
      \endtrivlist}
}

{\unskip\nobreak\hskip 1em plus 1fil\nobreak%
   \myqedsymbol%
   \parfillskip=0pt%
   \endtrivlist}

\newcommand{\myqedsymbol}{\rule{2mm}{2mm}}
\newcommand{\cardin}[1]{\left| {#1} \right|}
\newcommand{\pbrcx}[1]{\left[ {#1} \right]}
\newcommand{\Ex}[2][\!]{\mathop{\mathbf{E}}#1\pbrcx{#2}}
\newcommand{\Prob}[1]{\mathop{\mathbf{Pr}}\!\pbrcx{#1}}
\newcommand{\sep}[1]{\left|\, {#1} \bigr. \right.}

\newcommand{\Set}[2]{\left\{ #1 \;\middle\vert\; #2 \right\}}

\providecommand{\ds}{\displaystyle}

\newcommand{\eps}{\varepsilon}%

\newcommand{\Family}{\mathcal{F}}

\providecommand{\remove}[1]{}

\newcommand{\ceil}[1]{\left\lceil {#1} \right\rceil}
\newcommand{\floor}[1]{\left\lfloor {#1} \right\rfloor}

\newcommand{\etal}{\textit{et~al.}\xspace}

\usepackage{pifont}%

\usepackage[symbol*]{footmisc}

\usepackage{hyperref}%
\usepackage[ocgcolorlinks]{ocgx2}

\hypersetup{%
      unicode,
      breaklinks,%
      colorlinks=true,%
      urlcolor=[rgb]{0.25,0.0,0.0},%
      linkcolor=[rgb]{0.5,0.0,0.0},%
      citecolor=[rgb]{0,0.2,0.445},%
      filecolor=[rgb]{0,0,0.4},
      anchorcolor=[rgb]={0.0,0.1,0.2}%
}
\usepackage[all]{hypcap}

\DefineFNsymbols*{stars}[text]{%
   {\ding{172}} %
   {\ding{173}} %
   {\ding{174}} %
   {\ding{175}} %
   {\ding{176}} %
   {\ding{177}} %
   {\ding{178}} %
   {\ding{179}} %
   {\ding{180}} %
   }

\usepackage{microtype}
\usepackage{xurl}

\newcommand{\pth}[2][\!]{\mleft({#2}\mright)}

\newcommand{\GridNbrChar}{\mathsf{N}}
\newcommand{\GridNbr}[2]{\mathsf{N}_{\leq #2}\pth{#1}}

\newcommand{\gidX}[1]{\mathrm{id}\pth{#1}}

\newcommand{\pnt}{{\mathsf{p}}}%
\newcommand{\pntA}{{\mathsf{q}}}%
\newcommand{\pntB}{{\mathsf{s}}}

\newcommand{\pntX}{{\mathsf{x}}}

\newcommand{\PntSet}{\mathsf{P}}%

\newcommand{\PS}{\mathsf{P}}%

\newcommand{\PntSetA}{\mathsf{W}}%
\newcommand{\PntSetB}{\mathsf{Q}}%

\renewcommand{\th}{th\xspace}

\newcommand{\centers}{C}%

\newcommand{\cset}{\mathcal{C}}
\newcommand{\ropt}{r_{opt}}

\newcommand{\ndpAlg}{\AlgorithmRef{{ndpAlg}}{ndp_Alg}\xspace}

\newcommand{\delete}{\AlgorithmRef{delFar}{delete}\xspace}

\newcommand{\net}{\AlgorithmRef{net}{net}\xspace}

\DefineNamedColor{named}{RedViolet} {cmyk}{0.07,0.90,0,0.34}
\providecommand{\AlgorithmI}[1]{{\textcolor[named]{RedViolet}{\texttt{\bf{#1}}}}}
\providecommand{\Algorithm}[1]{{\AlgorithmI{#1}\index{algorithm!#1@{\AlgorithmI{#1}}}}}
\providecommand{\AlgorithmRef}[2]{%
   \hyperref[algorithm:#2]%
   {\AlgorithmI{#1}%
      \index{algorithm!#1@{\AlgorithmI{#1}}}%
   }%
}

\newcommand{\AlgorithmAnchor}[1]{%
   \phantomsection
   \label{algorithm:#1}}

\newcommand{\polylog}{\operatorname{polylog}}

\newcommand{\netX}[2]{\net\pth{#1, #2}}

\newcommand{\gset}{\mathcal{G}}
\newcommand{\Grid}{\mathsf{G}}

\newcommand{\MST}{\ensuremath{\mathsf{MST}}\xspace}

\newcommand{\Graph}{{G}}%

\newcommand{\FarX}[2]{{#1}^{\geq #2}}
\newcommand{\CloseX}[2]{#1^{< #2}}
\newcommand{\CloseY}[2]{#1^{\leq #2}}

\newcommand{\oracle}[1]{\Algorithm{orac}}%
\newcommand{\descrip}[1]{\mathrm{sk}\pth{#1}}%

\newcommand{\Interval}{\mathcal{I}}
\newcommand{\kthDist} {\PStyle{{$k$\th{}Distance}}\xspace}

\newcommand{\distinct} {\PStyle{{Smallest{}\si{NonZeroDist}}}\xspace}

\newcommand{\ccx}{\mathcal{C}}
\newcommand{\CCX}[2]{\ccx_{\leq #2}\pth{#1}}

\newcommand{\Partition}{\mathcal{P}}
\newcommand{\PartitionA}{\mathcal{Q}}
\newcommand{\PartitionOpt}{\mathcal{P}_\mathrm{opt}}

\newcommand{\fLEMST}{\ell_{MST}}

\newcommand{\ball}{\mathsf{b}}

\newcommand{\ballCR}[2]{\mathsf{b}\pth{#1,#2}}
\newcommand{\vecA}{\psi}

\newcommand{\decider}{\Algorithm{{decider}}\xspace}

\newcommand{\contextUpdate}{\Algorithm{context{}Update}\xspace}

\newcommand{\WeightX}[2][\!]{\omega\pth[#1]{#2}}

\newcommand{\NWeightX}[1]{{\omega}_{\GridNbrChar}\pth{#1}}
\newcommand{\cell}{\Box}

\newcommand{\rmin}{\mathsf{r}_\mathrm{min}}%
\newcommand{\NetSet}{\mathcal{N}}%
\newcommand{\Schonhage}{Sch{\"o}nhage\xspace}

\newlength{\savedparindent}
\newcommand{\SaveIndent}{\setlength{\savedparindent}{\parindent}}
\newcommand{\RestoreIndent}{\setlength{\parindent}{\savedparindent}}

\newcommand{\Frechet}{Fr\'{e}chet\xspace}%

\providecommand{\CodeComment}[1]{\textcolor{blue}{\texttt{#1}}}

\newcommand{\constNetVal}{37}
\newcommand{\constCVal}{28}

\newcommand{\cDecider}{\varphi}

\newcommand{\SpreadX}[1]{\mathrm{sprd}\pth{#1}}

\newcommand{\estLogDist}{\AlgorithmRef{{est{}Log{}Dist}}{est_log_dist}\xspace}

\newcommand{\medianNN}{\AlgorithmRef{midNN}{midNN}\xspace}
\newcommand{\medianNNExact}{\Algorithm{midNNExact}\xspace}
\newcommand{\smallComp}{\AlgorithmRef{{smallComp}}{small_comp}\xspace}
\newcommand{\lowSpread}{\AlgorithmRef{{lowSpread}}{log_n_n}\xspace}

\newcommand{\DTbl}[1]{%
   \begin{minipage}{0.55\linewidth}
       \smallskip%
       #1%
       \smallskip
   \end{minipage}
}

\newcommand{\PDecider}{\textsf{Decider}}
\newcommand{\PLipschitz}{\textsf{Lipschitz}}
\newcommand{\PPrune}{\textsf{Prune}}

\newcommand{\PDeciderRef}{\PDecider\xspace}
\newcommand{\PLipschitzRef}{\PLipschitz\xspace}
\newcommand{\PPruneRef}{\PPrune\xspace}

\newcommand{\itemPDecider}{\item[\quad\PDecider:]}
\newcommand{\itemPLipschitz}{\item[\quad\PLipschitz:]}
\newcommand{\itemPPrune}{\item[\quad\PPrune:]}

\makeatletter
\def\xnamedlabel#1#2{\begingroup
   \phantomsection%
   \def\@currentlabel{#2}%
   \label{#1}\endgroup
}
\makeatother

\newcommand{\clinelab}[2]{%
}%

\newcommand{\Do}{{\small\bf do}\ }

\newcommand{\Return}{{\small\bf return\ }}

\newcommand{\If}{{\small\bf if}\ }
\newcommand{\Then}{{\small\bf then}\ }

\newcommand{\While}{{\small\bf while}\ }

\newcommand{\res}{\mathrm{res}}
\newcommand{\reslog}{\res_{\log}}

\newcommand{\xbeginlgox}{\begin{minipage}{1in}\begin{tabbing}
           \quad\=\qquad\=\qquad\=\qquad\=\qquad\=\qquad\=\qquad\=\kill}
        \newcommand{\xendlgox}{\end{tabbing}\end{minipage}}

\newenvironment{myprogram}{
   \begin{minipage}{4.0in}
   \begin{tabbing}
       \ \ \ \ \= \ \ \ \= \ \ \ \ \= \ \ \ \ \= \ \ \ \ \=
      \ \ \ \ \= \ \ \ \ \= \ \ \ \ \= \ \ \ \ \=
      \ \ \ \ \= \ \ \ \ \= \ \ \ \ \= \ \ \ \ \= \kill
}{
   \end{tabbing}
   \end{minipage}
}
\newenvironment{nprogram}{%
   \begin{minipage}{0.9\linewidth}
   \begin{ntabbing}
       \reset%
       \ \ \ \ \= \ \ \ \= \ \ \ \ \= \ \ \ \ \= \ \ \ \ \=
      \ \ \ \ \= \ \ \ \ \= \ \ \ \ \= \ \ \ \ \=
      \ \ \ \ \= \ \ \ \ \= \ \ \ \ \= \ \ \ \ \= \kill
}{
   \end{ntabbing}
   \end{minipage}
}

\providecommand{\TPDF}[2]{\texorpdfstring{#1}{#2}}

\makeatletter
\newlength\tdima
\newcommand\tabright[1]{%
      \setlength\tdima{\linewidth}%
      \addtolength\tdima{\@totalleftmargin}%
      \addtolength\tdima{-\dimen\@curtab}%
      \makebox[\tdima][r]{#1}}
\makeatother

\newcommand{\deciderRank}{\AlgorithmRef{deciderM}{decider_rank}\xspace}

\newcommand{\ItemSpacing}{\quad\quad}

\makeatletter
\def\theorem@checkbold{}
\makeatother

\newcommand{\qedhere}{\ifmmode\qed\else\hfill\qedsymbol\fi}

\usepackage{xcolor}
\usepackage[framemethod=default]{mdframed}
\definecolor{arxivblue}{RGB}{0, 0, 100} %
\surroundwithmdframed[
    linecolor=arxivblue,
    linewidth=0.8pt,
    backgroundcolor=blue!3,   %
    innertopmargin=8pt,
    innerbottommargin=5pt,   %
    innerleftmargin=10pt,
    innerrightmargin=10pt,
    skipabove=2pt,          %
    skipbelow=0pt           %
]{proof}
\AfterEndEnvironment{proof}{\addvspace{9pt}\noindent}

\bibliography{net_prune}

\begin{document}

\title{%
   Net and Prune%
   : A Linear Time Algorithm for Euclidean Distance Problems%
      \footnote{Work on this paper was partially supported by NSF AF awards CCF-0915984 and CCF-1217462.
         The full updated version of this paper is available online \cite{hr-nplta-14}. %
         A preliminary version of this paper appeared in STOC 2012 \cite{hr-nplta-13}. %
         There are also journal \cite{hr-nplta-15} and arXiv \cite{hr-nplta-14} versions.
         Note that the arXiv version was recently updated with the results of \secref{Hausdorff}.
      }%
}%

\author{%
   Sariel Har-Peled%
   \ACMVer{\affil{University of Illinois, Urbana-Champaign}}%
   \notACMVer{\SarielThanks{}}%
   \notACMVer{\and}%
   Benjamin Raichel%
   \ACMVer{\affil{University of Illinois, Urbana-Champaign}}%
   \notACMVer{\BenThanks{}}%
}

\date{\today}

\notACMVer{\maketitle}

\begin{abstract}
    We provide a general framework for getting expected linear time constant factor approximations (and in many cases \FPTAS) to several well-known problems in Computational Geometry, such as $k$-center clustering and farthest nearest neighbor. The new approach is robust to variations in the input problem, and yet it is simple, elegant, and practical. In particular, many of these well-studied problems, which fit easily into our framework, either previously had no linear time approximation algorithm, or required rather involved algorithms and analysis. A short list of the problems we consider includes farthest nearest neighbor, $k$-center clustering, smallest disk enclosing $k$ points, Hausdorff distance, $k$\th largest distance, $k$\th smallest $m$-nearest neighbor distance, $k$\th heaviest edge in the \MST, and other spanning forest-type problems, problems involving upward closed set systems, and more.  Finally, we show how to extend our framework such that the linear running time bound holds with high probability.
\end{abstract}

\section{Introduction}

In many optimization problems, one is given a geometric input and the goal is to compute some real-valued function defined over it.  For example, for a finite point set, the function might be the distance between the closest pair of points.  For such a function, there is a search space associated with the function, encoding possible candidate values, and the function returns the minimum value over all possible solutions. For the closest pair problem, the search space is the set of all pairs of input points and their associated distances, and the target function computes the minimum pairwise distance in this space. As such, computing the value of such a function requires searching for the optimal solution, and returning its value; that is, \emph{optimization}.  Other examples of such functions include: \smallskip%
\begin{compactenum}[\quad(A)]
    \item The minimum cost of a clustering of the input. Here, the search space is the possible partitions of the data into clusters together with their associated prices.
    \item The price of a minimum spanning tree. Here, the search space is the set of all possible spanning trees of the input (and their associated prices).
    \item The radius of the smallest enclosing ball. Here, the search space is the set of all possible balls that enclose the input and their radii (i.e., it is not necessarily finite).
\end{compactenum}
\smallskip%
Naturally, many other problems can be described in this way.

It is often possible to construct a decision procedure for such an optimization problem, which, given a query value, can decide whether the query is smaller or larger than the value of the function, without computing the function explicitly.  Naturally, one would like to use this decider to compute the function itself by (say) performing a binary search, to compute the minimum value which is still feasible (which is the value of the function). However, often this is inherently not possible as the set of possible query values is a real interval (which is not finite).  Instead, one first identifies a finite set of critical values that must contain the optimum value, and then searches over these values.%

Naturally, searching over these values directly can be costly, as often the number of such critical values is much larger than the desired running time. An alternative is to attempt to perform an implicit search.

One of the most powerful techniques to solve optimization problems efficiently in Computational Geometry, using such an implicit search, is the \emphi{parametric search} technique of Megiddo \cite{m-apcad-83}.  It is relatively complicated, as it involves implicitly extracting values from a simulation of a parallel decision procedure (often a parallel sorting algorithm).  For this reason, it is inherently not possible for parametric search to lead to algorithms that run faster than $O(n\log n)$ time.  Nevertheless, it is widely used in designing efficient geometric optimization algorithms, see \cite{ast-apsgo-94,s-ps-97}. Luckily, in many cases, one can replace parametric search by simpler techniques (see prune-and-search below, for example), and in particular, it can be replaced by randomization, see the work by van Oostrum and Veltkamp \cite{ov-psmp-04}. Another example of replacing parametric search by randomization is the new simplified algorithm for the \Frechet distance \cite{hr-fdre-11}.  Surprisingly, sometimes these alternative techniques can actually lead to linear-time algorithms.

\SaveIndent

\paragraph{Linear time algorithms.} %
There seem to be three main ways to get linear-time algorithms for geometric optimization problems (exact or approximate):

\begin{compactenum}[\quad(A)]%
    \RestoreIndent

    \item \textbf{Coreset/sketch}. One can quickly extract a compact sketch of the input that contains the desired quantity (either exactly or approximately). As an easy example, consider the problem of computing the axis-parallel bounding box of a set of points - an easy linear scan suffices. There is by now a robust theory of what quantities one can extract a coreset of small size for, such that one can do the (approximate) calculation on the coreset, where usually the coreset size depends only on the desired approximation quality. This approach leads to many linear time algorithms, from shape fitting \cite{ahv-aemp-04}, to $(1+\eps)$-approximate $k$-center/median/mean clustering \cite{h-cm-01-conf,h-cm-04,hm-ckmkm-04,hk-sckmk-07} in constant dimension, and many other problems \cite{ahv-gavc-05}. The running times of the resulting algorithms are usually $O\pth{ n + \mathrm{func}( \text{sketch size})}$. The limitation of this technique is that there are problems for which there is no small sketch, from clustering when the number of clusters is large, to problems where there is no sketch at all \cite{h-ncnc-04} -- for example, for finding the closest pair of points, one needs all the given input, and no sketching is possible.

    \item \textbf{Prune and search}. Here, one prunes away a constant fraction of the input and continues the search recursively on the remaining input. The paramount example of such an algorithm is the linear-time median finding algorithm. However, one can interpret many of the randomized algorithms in Computational Geometry as pruning algorithms \cite{c-arscg-88,cs-arscg-89,m-cgitr-94}.  For example, linear programming in constant dimension in linear time \cite{m-lpltw-84}, and its extension to \LP-type problems \cite{sw-cblpr-92,msw-sblp-96}. Intuitively, \LP-type problems include low-dimensional convex programming (a standard example is the smallest enclosing ball of a point set in constant dimension). However, surprisingly, such problems also include problems that are not convex in nature -- for example, deciding if three points can pierce a set of (axis-parallel) rectangles is an \LP-type problem. Other examples of prune-and-search algorithms that work in linear time include
    \begin{inparaenum}[(i)]
        \item computing an ear in a triangulation of a polygon \cite{eet-seupr-93},
        \item searching in sorted matrices \cite{fj-gsrsm-84}, and
        \item ham-sandwich cuts in two dimensions \cite{lms-ahsc-94}.
    \end{inparaenum}
    Of course, there are many other examples of using prune and search with running time that is super-linear.

    \item \textbf{Grids.} %
    Rabin \cite{r-pa-76} used randomization, the floor function, and hashing to compute the closest pair of a set of points in the plane, in expected linear time.  Golin \etal{} \cite{grss-sracp-95} presented a simplified version of this algorithm, and Smid provides a survey of algorithms on closest pair problems \cite{s-cppcg-00}. A prune and search variant of this algorithm was suggested by Khuller and Matias \cite{km-srsac-95}.  Of course, the usage of grids and hashing to perform approximate point-location is quite common in practice. By itself, this is already sufficient to break lower bounds in the comparison model, for example, for $k$-center clustering \cite{h-cm-04}.  The only direct extension of Rabin's algorithm the authors are aware of is the work by Har-Peled and Mazumdar \cite{hm-facsk-05} showing a linear time $2$-approximation to the smallest ball containing $k$ points (out of $n$ given points).

    There is some skepticism of algorithms using the floor function, since \Schonhage \cite{s-pram-79} showed how to solve a \PSPACE complete problem, in polynomial time, using the floor function in the real \RAM model -- the trick being packing many numbers into a single word (which can be arbitrarily long in the \RAM model, and still each operation on it takes only constant time). Note that as Rabin's algorithm does not do any such packing of numbers (i.e., its computation model is considerably more restricted), this criticism does not seem to be relevant in the case of this algorithm and its relatives.  %
\end{compactenum}

\medskip

In this paper, we present a new technique that combines all of the above techniques to yield linear-time approximation algorithms.

\paragraph{Nets.} %
Given a point set $\PntSet$, an $r$-net $\NetSet$ of $\PntSet$ is a subset of $\PntSet$ that represents well the structure of $\PntSet$ in resolution $r$.  Formally, we require that for any point in $\PntSet$ there is a net point in distance at most $r$ from it, and no two net points are closer than $r$ to each other, see \secref{compNets} for a formal definition. Thus, nets provide a sketch of the point set for distances that are $r$ or larger. Nets are a useful tool in presenting point sets hierarchically.  In particular, computing nets of different resolutions and linking between different levels leads to a tree-like data structure that can be used to facilitate many tasks, see for example the net-tree \cite{kl-nnsap-04,hm-fcnld-06} for such a data structure for doubling metrics.  Nets can be defined in any metric space, but in Euclidean space, a grid can sometimes provide an equivalent representation.  In particular, net-trees can be interpreted as an extension of (compressed) quadtrees to more abstract settings.

Computing nets is closely related to $k$-center clustering. Specifically, Gonzalez~\cite{g-cmmid-85} shows how to compute an approximate net that has $k$ points in $O(n k)$ time, which is also a $2$-approximation to the $k$-center clustering. This was later improved to $O(n)$ time, for low-dimensional Euclidean space \cite{h-cm-04}, if $k$ is sufficiently small (using grids and hashing).  Har-Peled and Mendel showed how to preprocess a point set in a metric space with constant doubling dimension, in $O( n \log n)$ time, such that an (approximate) $r$-net can be extracted in (roughly) linear time in the size of the net.

\subsection*{Our contribution.}
In this paper, we consider problems of the following form: Given a set $\PntSet$ of weighted points in $\Re^d$, one wishes to solve an optimization problem whose solution is one of the pairwise distances of $\PntSet$ (or ``close'' to one of these values).  Problems of this kind include computing the optimal $k$-center clustering, or the length of the $k$\th edge in the \MST of $\PntSet$, and many others. Specifically, we are interested in problems for which there is a fast approximate decider. That is, given a value $r>0$, we can, in linear time, decide if the desired value is (approximately) smaller than $r$ or larger than $r$.  The goal is then to use this decider to approximate the optimum solution in linear time.  As a first step towards a linear time approximation algorithm for such problems, we point out that one can compute nets in linear time in $\Re^d$, see \secref{compNets}.

However, even if we could implicitly search over the critical values (which we cannot), then we still would require a logarithmic number of calls to the decider, which would yield a running time of $O(n\log n)$, as the number of critical values is at least linear (and usually polynomial).  So instead we use the return values of the decision procedure as we search to thin out the data so that future calls to the decision procedure become successively cheaper.  However, we still cannot search over the critical values (since there are too many of them), and so we also introduce random sampling in order to overcome this.

\paragraph{Outline of the new technique.} %
The new algorithm works by randomly sampling a point and computing the distance to its nearest neighbor.  Let this distance be $r$.  Next, we use the decision procedure to decide if we are in one of the following three cases. %
\medskip %
\begin{compactenum}[\quad(A)]
    \item \textbf{Net.} If $r$ is too small, then we zoom out to a resolution of $r$ by computing an $r$-net and continuing the computation on the net instead of on the original point set. That is, we net the point set into a smaller point set, such that one can solve the original problem (approximately) on this smaller sketch of the input.

    \item \textbf{Prune.} If $r$ is too large, then we remove all points whose nearest neighbor is further than $r$ away (of course, this implies we should only consider problems for which such pruning does not affect the solution).  That is, we isolate the optimal solution by pruning away irrelevant data -- this is similar in nature to what is being done by prune-and-search algorithms.

    \item \textbf{Done.} The value of $r$ is the desired approximation.
\end{compactenum}
\medskip%
We then continue recursively on the remaining data.  In either case, the number of points being handled (in expectation) goes down by a constant factor, and thus the overall expected running time is linear.  In particular, getting this constant factor decrease is the reason we chose to sample from the set of nearest neighbor distances rather than from the set of all inter-point distances.

\paragraph{Significance of results.} %
Our basic framework is presented in a general enough manner to cover, and in many cases greatly simplify, many problems for which linear-time algorithms had already been discovered.  At the same time, the framework provides new linear-time algorithms for an extensive collection of problems, for which previously no linear-time algorithm was known.  The framework should also lead to algorithms for many other problems that are not mentioned.

At a conceptual level, the basic algorithm is simple enough (with its basic building blocks already having efficient implementations) to be highly practical from an implementation standpoint.  Perhaps more importantly, with increasing shifts toward large data sets, algorithms with super-linear running time can be impractical.  Additionally, our framework seems amenable to distributed implementation in frameworks like MapReduce.  Indeed, every iteration of our algorithm breaks the data into grid cells, a step that is similar to the map phase.  In addition, the aggressive thinning of the data by the algorithm guarantees that after the first few iterations, the algorithm is resigned to working on only a tiny fraction of the data.

\paragraph{Significance of the netting step.}
Without the net stage, our framework is already sufficient to solve the closest pair problem, and in this specific case, the algorithm boils down to the one provided by Khuller and Matias \cite{km-srsac-95}. This case seems to be the only application of the new framework where the netting is not necessary. In particular, the broad applicability of the framework seems to be the result of the netting step.

\paragraph{Framework and results.} %
We provide a framework that classifies which optimization problems can be solved using the new algorithm. We get the following new algorithms (all of them have an expected linear running time, for any fixed $\eps$): \smallskip%
\begin{compactenum}[ (A)]
    \item \textbf{$k$-center clustering} (\secref{k_center}). We provide an algorithm that $2$-approximates the optimal $k$-center clustering of a point set in $\Re^d$. Unlike the previous algorithm \cite{h-cm-04} that was restricted to $k= O(n^{1/6})$, the new algorithm works for any value of $k$. This new algorithm is also simpler.

    \item \textbf{$k$\th smallest distance} (\secref{k_th_distance}).
    In the distance selection problem, given a set of points in $\Re^d$, one would like to compute the $k$\th smallest distance defined by a pair of points of $\PntSet$.  It is believed that such exact distance selection requires $\Omega\pth{n^{4/3}}$ time in the worst case \cite{e-rcsgp-95}, even in the plane (in higher dimensions the bound deteriorates). We present an $O(n/\eps^d)$ time algorithm that $(1+\eps)$-approximates the $k$\th smallest distance. Previously, Bespamyatnikh and Segal \cite{bs-faad-02} presented an $O(n \log n + n/\eps^d )$ time algorithm using a well-separated pairs decomposition (see also \cite{dhw-afdrc-12}).

    Given two sets of points $\PntSet$ and $\PntSetA$ with a total of $n$ points, using the same approach, we can $(1+\eps)$-approximate the $k$\th smallest distance in a bichromatic set of distances
    \begin{math}
        X = \Set{ \distX{\pnt}{\pntA}}{\pnt \in \PntSet, \pntA \in \PntSetA}.
    \end{math}

    Intuitively, the mechanism behind distance selection underlines many optimization problems, as it (essentially) performs a binary search over the distances induced by a set of points. As such, being able to do approximate distance selection in linear time should lead to faster approximation algorithms that perform a similar search over such distances. %

    \item \textbf{The $k$\th smallest $m$-nearest neighbor distance} (\secref{k_smallest_m_n_n}).
    For a set $\PntSet = \brc{\pnt_1,\ldots, \pnt_n} \subseteq \Re^d$ of $n$ points, and a point $\pnt \in \PntSet$, its \emphi{$m$\th nearest neighbor} in $\PntSet$ is the $m$\th closest point to $\pnt$ in $\PntSet\setminus \brc{\pnt}$. In particular, let $\ddmX{m}{\pnt}{\PntSet}$ denote this distance.  Here, consider the set of these distances defined for each point of $\PntSet$; that is, $X = \brc{ \ddmX{m}{\pnt_1}{\PntSet}, \ldots, \ddmX{m}{\pnt_n}{\PntSet}}$.  We can approximate the $k$\th smallest number in this set in linear time.

    \item \textbf{Approximate Hausdorff distance}\footnote{This application did not appear in the journal and conference versions of this paper. Although the result was implicitly already there, see \remref{h_is_new}.}
    Given two point sets $R$ (red) and $B$ (blue) of total size $n$, their \emph{Hausdorff distance} is the minimum distance $\ell$, such that all the points of $R$ are at a distance at most $\ell$ from points of $B$, and vice versa. In two dimensions, computing the Hausdorff distance can be done in $O(n\log n)$ time. Still, in higher dimensions, the natural $(1+\eps)$-approximation algorithm uses approximate nearest-neighbor queries. It requires $\Omega( n \log n+ n \tfrac{1}{\eps^d} \log \tfrac{1}{\eps} )$ time (by computing for each point its nearest neighbor in the other point-set). Our framework provides $(1+\eps)$-approximation algorithm for the Hausdorff distance in $\Re^d$, with running time $O(n /\eps^d)$.  See \thmref{h_really_2} and \remref{h_is_new} for details.

    \item \textbf{Exact nearest neighbor distances} (\secref{f_n_n}).
    For the special case where $m=1$, one can turn the above approximation into an exact computation of the $k$\th nearest neighbor distance with only a minor post-processing grid computation.

    As an application, when $k=n$, one can compute in linear time, exactly, the {furthest nearest neighbor distance}; that is, the nearest neighbor distance of the point whose nearest neighbor is furthest away. This measure can be useful, for example, in meshing applications where such a point is a candidate for a region where the local feature size is large and further refinement is needed.

    We are unaware of any previous work directly on this problem, although one can compute this quantity by solving the all-nearest-neighbor problem, which can be done in $O(n \log n)$ time \cite{c-faann-83}.  This algorithm is to some extent the ``antithesis'' to Rabin's algorithm for the closest pair problem, and it is somewhat surprising that it can also be solved in linear time.

    \item \textbf{The $k$\th longest \MST edge} %
    (\secref{longest_MST}).
    Given a set $\PntSet$ of $n$ points in $\Re^d$, we can $(1+\eps)$-approximate, in $O(n/\eps^d)$ time, the $k$\th longest edge in the \MST of $\PntSet$.

    \item \textbf{Smallest ball with a monotone property} (\secref{min_cluster}). %
    Consider a property defined over a set of points, $\PntSet$, that is monotone; that is, if $\PntSetA \subseteq \PntSetB \subseteq \PntSet$ has this property, then $\PntSetB$ must also have this property. Consider such a property that can be easily checked, for example, (i) whether the set contains $k$ points, or (ii) if the points are colored, whether all colors are present in the given point set.  Given a point set $\PntSet$, one can $(1+\eps)$-approximate, in $O(n /\eps^d)$ time, the smallest radius ball $\ball$, such that $\ball \cap \PntSet$ has the desired property. For example, we get a linear time algorithm to approximate the smallest ball enclosing $k$ points of $\PntSet$. The previous algorithm for this problem \cite{hm-facsk-05} was significantly more complicated.  Furthermore, we can approximate the smallest ball such that all colors appear in it (if the points are colored), or the smallest ball such that at least $t$ different colors appear in it, etc.

    More generally, the kind of monotone properties supported are \emph{sketchable}; that is, properties for which there is a small summary of a point set that enables one to decide if the property holds, and given summaries of two disjoint point sets, the summary of the union point set can be computed in constant time.  We believe that formalizing this notion of sketchability is a useful abstraction. See \secref{sketchable} for details.

    \item \textbf{Smallest connected component with a monotone property} %
    (\secref{min_cluster_c}). %
    Consider the connected components of the graph where two points are connected if they are at most $r$ distance from each other. Using our techniques, one can approximate, in linear time, the smallest $r$ such that there is a connected component of this graph for which a required sketchable property holds for the points in this connected component.

    As an application, consider ad hoc wireless networks. Here, we have a set $\PntSet$ of $n$ nodes and their locations (say in the plane), and each node can broadcast in a certain radius $r$ (the larger the $r$, the higher the energy required, so naturally we would like to minimize it). Assume there are two special nodes.  It is natural to ask for the minimum $r$ such that there is a connected component of the above graph that contains both nodes.  That is, these two special nodes can send a message to each other, by message hopping (with distance at most $r$ at each hop). We can approximate this connectivity radius in linear time.

    \item \textbf{Clustering for a monotone property} (\secref{cluster_monotone}). %
    Imagine that we want to break the given point set into clusters, such that the maximum diameter of a cluster is minimized (as in $k$-center clustering). Furthermore, the points assigned to each cluster comply with some sketchable monotone property. We present a $(4+\eps)$-approximation algorithm for these types of problems that runs in $O(n /\eps^d)$ time. This algorithm can be applied to compute lower bounded clustering (i.e., every cluster must contain at least $\alpha$ points), for which the authors recently presented an $O(n \log n)$ approximation algorithm \cite{ehr-fclb-12}. One can get a $2$-approximation using network flow, but the running time is significantly worse \cite{apftk-aac-10}.  See \secref{cluster_monotone_examples} for examples of clustering problems that can be approximated using this algorithm.

    \item \textbf{Connectivity clustering for a monotone property} (\secref{cluster_spanning}). %
    Consider the problem of computing the minimum $r$, such that each connected component (of the graph where points in distance at most $r$ from each other are adjacent) has some sketchable monotone property. We approximate the minimum $r$ for which this holds in linear time.

    An application of this for ad hoc networks is the following -- we have a set $\PntSet$ of $n$ wireless clients, and some of them are base stations; that is, they are connected to the outside world. We want to find the minimum $r$ such that each connected component of this graph contains a base station.

    \item \textbf{Closest pair and smallest non-zero distance} %
    (\secref{closest_non_zero_dist}). %
    Given a set of points in $\Re^d$, consider the problem of finding the smallest non-zero distance defined by these points. This problem is an extension of the closest pair distance, as there might be many identical points in the given point set. We provide a linear-time algorithm for computing this distance \emph{exactly}, which follows easily from our framework.
\end{compactenum}

\paragraph{High probability.} %
Finally, we show in \secref{high_prob} how to modify our framework such that the linear running time holds with high probability. Since there is very little elbow room in the running time when committed to linear running time, this extension is quite challenging and requires several new ideas and insights. See \secref{high_prob} for details.

\paragraph{What can be done without the net \& prune framework?} %
For the problems studied in this paper, an approximation algorithm with running time $O(n \log n)$ follows by using a $(1+\eps)$-\WSPD (well-separated pairs decomposition \cite{ck-dmpsa-95,h-gaa-11}), for some fixed $\eps$, to compute $O(n)$ relevant resolutions, and then using binary search with the decider to find the correct resolution. Even if the \WSPD is given, it is not clear, though, how to get a linear time algorithm for the problems studied without using the net \& prune framework. In particular, if the input has bounded spread, one can compute an $O(1)$-\WSPD in linear time, and then use the \WSPD inside the net \& prune framework to get a deterministic linear time algorithm, see \secref{bounded_spread}.

\paragraph{Paper organization.} %
We describe how to compute nets in \secref{prelim}, and how to remove faraway points efficiently in \secref{point_removal}.  We define the abstract framework, and describe and analyze the new approximation algorithm, in \secref{framework}.  We describe the applications in \secref{applications}.  We show how to modify the framework to achieve linear running time with high probability in \secref{high_prob}.  We conclude in \secref{conclusions}.

\section{Preliminaries}
\seclab{prelim}

\subsection{Basic definitions}

The \emphi{spread} of an interval $\Interval = [x,y] \subseteq \Re^{+}$ is $\SpreadX{\Interval} = y/x$.

\begin{definition}%
    \deflab{net}%
    For a point set $\PntSet$ in a metric space with a metric $\distChar$, and a parameter $r>0$, an $r$-\emphi{net} of $\PntSet$ is a subset $\cset \subseteq \PntSet$, such that
    \begin{inparaenum}[(i)]
        \item for every $\pnt,\pntA \in \cset$, $\pnt \neq \pntA$, we have that $\distX{\pnt}{\pntA} \geq r$, and
        \item for all $\pnt\in \PntSet$, we have that $\min_{\pntA\in \cset} \distX{\pnt}{\pntA} < r$.
    \end{inparaenum}
\end{definition}

Intuitively, an $r$-net represents $\PntSet$ in resolution $r$.

\begin{definition}%
    \deflab{grid_stuff}%
    For a number $\Delta > 0$ and a point $\pnt = (\pnt_1, \ldots, \pnt_d) \in \Re^d$, define $\Grid_\Delta(\pnt)$ to be the grid point $\pth[]{\floor{\pnt_1/\Delta}\! \Delta, \ldots, \floor{\pnt_d/\Delta}\! \Delta}$.

    The quantity $\Delta$ is the \emphi{width} or \emphi{sidelength} of the \emphi{grid} $\Grid_\Delta$. Observe that $\Grid_\Delta$ partitions $\Re^d$ into cubes, which are grid \emphic{cells}{cell}. The grid cell of $\pnt$ is uniquely identified by the integer point
    \begin{equation*}
        \gidX{\pnt} = \pth{ \floor{\pnt_1/\Delta}, \ldots,
           \floor{\pnt_d/\Delta} \bigr.}.
    \end{equation*}
\end{definition}

\begin{figure}[h!]
    \centerline{\includegraphics[scale=0.8]{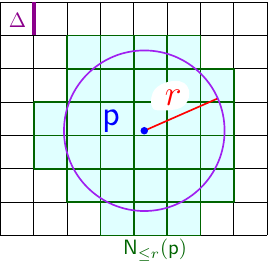}}
    \caption{}
    \figlab{neighborhood}
\end{figure}
\begin{definition}
    For a number $r \geq 0$, let $\GridNbr{\pnt}{r}$ denote the set of grid cells in distance $\leq r$ from $\pnt$, which is the \emphi{neighborhood} of $\pnt$. Note that the neighborhood also includes the grid cell containing $\pnt$ itself. If $\Delta =\Theta(r)$ then $\cardin{\GridNbr{\pnt}{r}} = \Theta\pth{(2+ \ceil{2r/\Delta})^d} = \Theta(1)$. See \figref{neighborhood}.
\end{definition}

\subsection{Computing nets quickly for a point set %
   in \TPDF{$\Re^d$}{Rd}}
\seclab{compNets}

There is a simple algorithm for computing $r$-nets.  Namely, let all the points in $\PntSet$ be initially unmarked.  While there remains an unmarked point, $\pnt$, add $\pnt$ to $\cset$, and mark it and all other points in distance $< r$ from $\pnt$ (i.e., we are scooping away balls of radius $r$).  By using grids and hashing, one can modify this algorithm to run in linear time.  The following is implicit in previous work \cite{h-cm-04}. We include it here for the sake of completeness\footnote{Specifically, the algorithm of Har-Peled \cite{h-cm-04} is considerably more complicated than that of \lemref{net}, and does not work in this setting, as the number of clusters it can handle is limited to $O\pth{n^{1/6}}$. \lemref{net} has no such restriction.} -- it was also described by the authors in \cite{ehr-fclb-12}.

\begin{lemma}%
    \lemlab{net}%
    Given a point set $\PntSet\subseteq \Re^d$ of size $n$ and a parameter $r>0$, one can compute an $r$-net for $\PntSet$ in $O(n)$ time.
\end{lemma}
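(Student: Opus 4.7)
The plan is to impose a regular grid of side $\Delta = r/(2\sqrt{d})$, bucket the points of $\PntSet$ into its cells in $O(n)$ time using the floor function and hashing, and then run the natural greedy net construction restricted to constant-size local neighborhoods. Concretely, first compute $\gidX{\pnt}$ for every $\pnt \in \PntSet$ and store the points in a hash table keyed by cell id; this is $O(n)$ on the real \RAM model used throughout the paper.

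Next, initialize every point as unmarked and scan $\PntSet$: for each point $\pnt$ that is still unmarked, insert $\pnt$ into $\cset$ and mark every unmarked $\pntA$ with $\distX{\pnt}{\pntA} < r$. Since $\Delta = r/(2\sqrt{d})$, any such $\pntA$ must lie in one of the cells of $\GridNbr{\pnt}{r}$, which has size $O(1)$ (depending only on $d$). We enumerate the points in those cells via the hash table, test their distance to $\pnt$, and mark accordingly. Correctness is immediate from the usual greedy argument: no point is inserted into $\cset$ within distance $r$ of a previously inserted one, so property (i) of \defref{net} holds; and every point is marked precisely when some point within distance $r$ of it is selected, so property (ii) holds upon termination.

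The crux is the running-time bound. Observe that a cell of $\Grid_\Delta$ has diameter $\sqrt{d}\cdot\Delta = r/2 < r$, so any two points lying in a common cell are within distance strictly less than $r$. Hence as soon as the greedy picks the first net point from a cell $c$, every other point of $c$ is marked, and no further point of $c$ can ever be added to $\cset$. \emph{Each cell contains at most one point of $\cset$.} Dually, a point $\pntA$ lying in cell $c'$ is inspected only while processing some net point in a cell of $\GridNbrCell{c'}$; the number of such cells is $O(1)$, and each contains at most one net point, so $\pntA$ is touched $O(1)$ times in total. Summing over all $\pntA \in \PntSet$ gives $O(n)$ inspection work, on top of the $O(n)$ bucketing preprocessing, for a total of $O(n)$.

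The only place any subtlety could arise is the charging argument for the marking loop, since a superficial accounting might blame a single net point for inspecting an entire populous neighborhood. That is handled exactly by the ``at most one net point per cell'' observation above, which converts the naive per-net-point bound into an $O(1)$ per-input-point bound. Once this charging is in place, the stated $O(n)$ running time follows.
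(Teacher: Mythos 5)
Your proposal is correct and follows essentially the same route as the paper's proof: the same grid of side $r/(2\sqrt d\ts)$, the same greedy marking procedure, and the same charging argument that each cell holds at most one net point (since its diameter is $r/2 < r$) and each cell is scanned only by the $O(1)$ net points in its constant-size neighborhood. No gaps.
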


\begin{proof}
    Let $\Grid$ denote the grid in $\Re^d$ with side length $\Delta = r/\pth{2\sqrt{d}}$.  First compute for every point $\pnt \in \PntSet$ the grid cell in $\Grid$ that contains $\pnt$; that is, $\gidX{\pnt}$.  Let $\gset$ denote the set of grid cells of $\Grid$ that contain points of $\PntSet$.  Similarly, for every cell $\cell \in \gset$ we compute the set of points of $\PntSet$ that it contains.  This task can be performed in linear time using hashing and bucketing, assuming the floor function can be computed in constant time. Specifically, store the $\gidX{\cdot}$ values in a hash table, and in constant time, hash each point into its appropriate bin.

    Scan the points of $\PntSet$ one at a time, and let $\pnt$ be the current point.  If $\pnt$ is marked, then move on to the next point.  Otherwise, add $\pnt$ to the set of net points, $\cset$, and mark it and each point $\pntA \in \PntSet$ such that $\distX{\pnt}{\pntA} < r$.  Since the cells of $\GridNbr{\pnt}{r}$ contain all such points, we only need to check the lists of points stored in these grid cells.  At the end of this procedure, every point is marked.  Since a point can only be marked if it is in distance $< r$ from some net point, and a net point is only created if it is unmarked when visited, this implies that $\cset$ is an $r$-net.

    As for the running time, observe that a grid cell, $c$, has its list scanned only if $c$ is in the neighborhood of some created net point.  As $\Delta = \Theta(r)$, there are only $O(1)$ cells which could contain a net point $\pnt$ such that $c\in \GridNbr{\pnt}{r}$.  Furthermore, at most one net point lies in a single cell since the diameter of a grid cell is strictly smaller than $r$.  Therefore, each grid cell had its list scanned $O(1)$ times.  Since the only real work done is in scanning the cell lists and since the cell lists are disjoint, this implies an $O(n)$ running time overall.
\end{proof}

Observe that the closest net point, for a point $\pnt \in \PntSet$, must be in one of its neighborhood's grid cells. Since every grid cell can contain only a single net point, it follows that in constant time per point of $\PntSet$, one can compute each point's nearest net point.  We thus have the following.

\begin{corollary}%
    \corlab{valid}%
    For a set $\PntSet \subseteq \Re^d$ of $n$ points, and a parameter $r > 0$, one can compute, in linear time, an $r$-net of $\PntSet$, and for each net point, the set of points of $\PntSet$ for which it is the nearest net point.
\end{corollary}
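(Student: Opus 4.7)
The plan is to reuse the grid $\Grid$ of sidelength $\Delta = r/(2\sqrt{d})$ built in the proof of \lemref{net}, together with its net $\cset$ and its bucketing of $\PntSet$ into cells. After this preprocessing, which by \lemref{net} takes $O(n)$ time, I would simply scan each input point $\pnt \in \PntSet$, enumerate the $O(1)$ cells of $\GridNbr{\pnt}{r}$, collect the net points lying in those cells, and pick the closest one; then I would append $\pnt$ to a list associated with the chosen net point.

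The correctness argument has two pieces. First, by the covering property of an $r$-net (\defref{net}), every $\pnt \in \PntSet$ has some net point within distance $r$, so in particular the \emph{nearest} net point to $\pnt$ lies at distance at most $r$, and hence sits in some cell of $\GridNbr{\pnt}{r}$. Therefore restricting the search to the neighborhood cannot miss it. Second, as already noted in the proof of \lemref{net}, the diameter of a grid cell is $\sqrt{d} \cdot \Delta = r/2 < r$, so two distinct net points cannot share a cell; thus each of the $O(1)$ neighborhood cells contributes at most one candidate, and locating the nearest of them takes $O(1)$ time per point.

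For the grouping, once each $\pnt$ knows its nearest net point, I would make one more linear pass that, using the net point's grid-cell identifier $\gidX{\cdot}$ as a hash key (or, equivalently, an index into an auxiliary array built alongside $\cset$), appends $\pnt$ to that net point's list. Since each operation is $O(1)$ and the lists are disjoint, the total work across all of $\PntSet$ is $O(n)$.

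The only step that requires any care is bounding the per-point search to $O(1)$; but this is immediate from the choice $\Delta = \Theta(r)$ already made in \lemref{net}, which gives $\cardin{\GridNbr{\pnt}{r}} = O(1)$. Everything else is routine hashing and bucketing of the kind already used in that lemma, so the total running time remains $O(n)$.
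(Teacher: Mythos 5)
Your proposal is correct and matches the paper's argument: the paper likewise observes that each point's nearest net point lies within distance $r$ and hence in one of the $O(1)$ neighborhood cells, each of which holds at most one net point, giving an $O(1)$-time lookup per point. The extra bookkeeping you describe for building the per-net-point lists is routine and consistent with what the paper intends.
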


In the following, a \emphi{weighted point} is a point that is assigned a positive integer weight.  For any subset $S$ of a weighted point set $\PntSet$, let $\cardin{S}$ denote the number of points in $S$ and let $\WeightX{S} = \sum_{\pnt \in S} \WeightX{\pnt}$ denote the total weight of $S$.

In particular, \corref{valid} implies that for a weighted point set, one can compute the following quantity in linear time.

\begin{algorithm}[\net]%
    \AlgorithmAnchor{net}%
    \alglab{net}%
    Given a weighted point set $\PntSet \subseteq \Re^d$, let $\netX{r}{\PntSet}$ denote an $r$-net of $\PntSet$, where the weight of each net point $\pnt$ is the total sum of the weights of the points assigned to it.  We slightly abuse notation, and also use $\netX{r}{\PntSet}$ to designate the algorithm (of \corref{valid}) for computing this net, which has linear running time.
\end{algorithm}

\subsection{Identifying close and far points}
\seclab{point_removal}

For a given point $\pnt \in \PntSet$, let
\begin{equation}
    \nnX{\pnt}{\PntSet} =
    \arg \min_{\pntA \in \PntSet \setminus \brc{\pnt}}
    \distX{\pnt}{\pntA}%
    \qquad\text{ and } \qquad
    \ddX{\pnt}{\PntSet}  = \min_{\pntA \in \PntSet \setminus \brc{\pnt}}
    \distX{\pnt}{\pntA},%
    \eqlab{dd}
\end{equation}
denote the nearest neighbor of $\pnt$ in $\PntSet \setminus \brc{\pnt}$, and the distance to it, respectively.
The quantity $\ddX{\pnt}{\PntSet}$ can be computed (naively) in linear time by scanning the points.  For a set of points $\PntSet$, and a parameter $r$, let $\FarX{\PntSet}{r}$ denote the set of \emphi{$r$-far} points; that is, it is the set of all points $\pnt \in \PntSet$, such that the nearest-neighbor of $\pnt$ in $\PntSet\setminus \brc{\pnt}$ is at least distance $r$ away (i.e., $\ddX{\pnt}{\PntSet} \geq r$).  Similarly, $\CloseX{\PntSet}{r}$ is the set of \emphi{$r$-close} points; that is, all points $\pnt \in \PntSet$, such that $\ddX{\pnt}{\PntSet} < r$.

\begin{lemma}
    \AlgorithmAnchor{delete}%
    \lemlab{delete}%
    Given a weighted set $\PntSet \subseteq \Re^d$ of $n$ points, and a distance $\nRad > 0$, in $O\pth{n}$ time, one can compute the sets $\CloseX{\PntSet}{\nRad}$ and $\FarX{\PntSet}{\nRad}$.  Let $\delete(\nRad, \PntSet)$ denote the algorithm which computes these sets and returns $\CloseX{\PntSet}{\nRad}$.
\end{lemma}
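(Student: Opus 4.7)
The plan is to use a grid argument analogous to the one already used in the proof of \lemref{net}. Specifically, I would build the grid $\Grid_\Delta$ with side length $\Delta = \nRad / (2\sqrt{d})$, so that every grid cell has diameter strictly less than $\nRad/2$. Each point $\pnt \in \PntSet$ is hashed into its cell $\gidX{\pnt}$ in $O(1)$ time, so in $O(n)$ total time we obtain the lists of points inside each non-empty cell.

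The key observation that drives the algorithm is that any two points lying in the same grid cell are at distance $< \nRad/2 < \nRad$. Thus:
\begin{itemize}
\item For every cell $\cell$ that contains at least two points of $\PntSet$, every such point has a neighbor in $\cell$ within distance $< \nRad$, so all of them belong to $\CloseX{\PntSet}{\nRad}$ and can be added in time proportional to $|\cell \cap \PntSet|$.
\item For every cell $\cell$ containing a single point $\pnt$, a potential witness that $\pnt \in \CloseX{\PntSet}{\nRad}$ must lie in some cell of $\GridNbr{\pnt}{\nRad}$; we scan all points stored in these cells and compute their distance to $\pnt$, stopping if one is found within distance $<\nRad$, and otherwise declaring $\pnt \in \FarX{\PntSet}{\nRad}$.
\end{itemize}
Correctness is immediate from the definitions, since any $\pntA$ with $\distX{\pnt}{\pntA} < \nRad$ must live in $\GridNbr{\pnt}{\nRad}$.

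The main thing to verify is that the total work in the singleton-cell step is still $O(n)$, even though a single $\GridNbr{\pnt}{\nRad}$ may contain many points. Because $\Delta = \Theta(\nRad)$, the neighborhood $\GridNbr{\pnt}{\nRad}$ consists of a constant number $M = M(d)$ of grid cells. Equivalently, every non-empty cell $\cell'$ is contained in the neighborhood of at most $M$ singleton cells. Charging work by destination, the total cost of the singleton-cell scans is
\[
\sum_{\cell \text{ singleton}} \;\sum_{\cell' \in \GridNbr{\cell}{\nRad}} |\cell' \cap \PntSet|
\;\leq\; \sum_{\cell'} M \cdot |\cell' \cap \PntSet| \;=\; O(n),
\]
which together with the $O(n)$ preprocessing and the multi-point-cell bookkeeping yields the claimed linear running time. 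Once every point has been classified, $\CloseX{\PntSet}{\nRad}$ and $\FarX{\PntSet}{\nRad}$ are obtained directly and $\delete(\nRad,\PntSet)$ returns the former.
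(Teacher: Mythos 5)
Your proposal is correct and follows essentially the same approach as the paper: grid with cell diameter a constant fraction of $\nRad$, observe that only singleton cells can contain $\nRad$-far points, and scan the $O(1)$ neighborhood cells of each singleton, with the linear running time following from the fact that each non-empty cell is charged only $O(1)$ times. Your explicit charging argument just spells out what the paper delegates to ``the arguments of \lemref{net}.''
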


\begin{proof}
    Build a grid where every cell has diameter $\nRad/c$, for some constant $c>1$.  Clearly, any point $\pnt\in \PntSet$ such that $\ddX{\pnt}{\PntSet}\geq \nRad$ (i.e., a far point) must be in a grid cell by itself.  Therefore, to determine all such ``far'' points, only grid cells with singleton points in them need to be considered.  For such a point $\pntA$, to determine if $\ddX{\pntA}{\PntSet} \geq \nRad$, one checks the points stored in all grid cells in distance $\leq \nRad$ from it. If $\pntA$ has no such close neighbor, we mark $\pntA$ for inclusion in $\FarX{\PntSet}{r}$.  By the same arguments as in \lemref{net}, the number of such cells is $O(1)$.  Again, by the arguments of \lemref{net} every non-empty grid cell gets scanned $O(1)$ times overall, and so the running time is $O(n)$. Finally, we copy all the marked (resp. unmarked) points to $\FarX{\PntSet}{\nRad}$ (resp. $\CloseX{\PntSet}{\nRad}$).  The algorithm $\delete(\nRad, \PntSet)$ then simply returns $\CloseX{\PntSet}{\nRad}$.
\end{proof}

\section{Approximating nice distance problems}
\seclab{framework}

\subsection{Problem definition and an example}

\begin{definition}
    Let $\PntSet$ and $\PntSetB$ be two sets of weighted points in $\Re^d$ (of the same weight).  The set $\PntSetB$ is a \emphi{$\Delta$-\translation} of $\PntSet$, if $\PntSetB$ can be constructed by moving each point of $\PntSet$ by a distance at most $\Delta$ (and not altering its weight). Formally, there is an onto mapping $f:\PntSet \rightarrow \PntSetB$, such that
    \begin{inparaenum}[(i)]
        \item For $\pnt \in \PntSet$, we have that $\distX{\pnt}{f(\pnt) }\leq \Delta$, and
        \item for any $\pntA \in \PntSetB$, we have that $\WeightX{\pntA}$ is the sum of the weights of all points $\pnt \in \PntSet$ such that $f(\pnt) = \pntA$.
    \end{inparaenum}
\end{definition}

Note that for a (potentially weighted) point set $\PntSetA$, $\netX{\Delta}{\PntSetA}$ is a $\Delta$-\translation of $\PntSetA$.

\begin{definition}
    \deflab{decider}%
    Given a set $X$ and a function $\target: X\rightarrow \Re$, a procedure $\decider$ is a \emphi{$\cDecider$-decider} for $\target$, if for any $x\in X$ and $r > 0$, $\decider(r, x)$ returns one of the following:
    \begin{inparaenum}[(i)]
        \item $\target(x)\in [\alpha, \cDecider \alpha]$, where $\alpha$ is some real number,
        \item $\target(x) < r$, or
        \item $\target(x) > r$.
    \end{inparaenum}
\end{definition}

\begin{definition}[$\cDecider$-\NDP]%
    \deflab{N_D_P}%
    An instance of a \emphi{$\cDecider$-\WellBehavedDistanceProblem{}} consists of a pair $(\PntSetA, \Context)$, where $\PntSetA \subseteq \Re^d$ is a set of $n$ distinct weighted points\footnote{The case when $\PntSetA$ is a multiset can also be handled. See \remref{multi}.}, and $\Context$ is the \emphi{context} of the given instance (of size $O(n)$) and consists of the relevant parameters for the problem\footnote{In almost all cases, the context is a set of integers, usually of constant size.}.  For a fixed \emphi{$\cDecider$-\NDP{}}, the task is to evaluate a function $\target(\PntSetA, \Context) \rightarrow \Re^+$, defined over such input pairs, that has the following properties:%
    \smallskip%
    \begin{compactitem}
        \item \PDecider: %
        There exists an $O(n)$ time $\cDecider$-decider for $\target{}$, for some \emph{constant} $\cDecider \geq 1$.

        \item \PLipschitz: %
        Let $\PntSetB$ be any $\Delta$-\translation of $\PntSetA$.  Then $\abs{\target(\PntSetA, \Context)-\target(\PntSetB, \Context)} \leq 2\Delta$.

        \item \PPrune: %
        If $\target(\PntSetA, \Context) < r$ then $\target(\PntSetA, \Context) = \target( \CloseX{\PntSetA}{r} , \Context ')$, where $\CloseX{\PntSetA}{r}$ is the set of $r$-close points, and $\Context'$ is an updated context which can be computed, in $O(n)$ time, by a procedure denoted \contextUpdate.
    \end{compactitem}
\end{definition}

\begin{remark}[$(1+\eps)$-\NDP]%
    If we are interested in a $(1+\eps)$-approx{}i{m}a\-t{i}on, then $\cDecider = 1+\eps$. In this case, we require that the running time of the provided $(1+\eps$)-decider (i.e., property \PDeciderRef{} above) is $O(n/\eps^c)$, for some constant $c \geq 1$. This condition holds for all the applications presented in this paper.

    Our analysis still holds even if the decider has a different running time, but the overall running time might increase by a factor of $\log (1/\eps)$, see \lemref{result_1} for details.
\end{remark}

\subsubsection{An example -- \TPDF{$k$}{k} center clustering}
\seclab{kCenter}

As a concrete example of an \NDP, consider the problem of $k$-center clustering.

\begin{problem}[\kCenter]
    Let $\PntSetA$ be a set of points in $\Re^d$, and let $k>0$ be an integer parameter.  Find a set of \emphi{centers} $\centers\subseteq \PntSetA$ such that the maximum distance of a point of $\PntSetA$ to its nearest center in $\centers$ is minimized, and $\cardin{\centers} = k$.

    Specifically, the function of interest, denoted by $\kCenterFunc(\PntSetA, k)$, is the radius of the optimal $k$-center clustering of $\PntSetA$.  We now show that \kCenter satisfies the properties of an \NDP.
\end{problem}

\begin{remark}
    Usually, one defines the input for \kCenter to be a set of unweighted points.  However, since the definition of \NDP requires a weighted point set, we naturally convert the initial input into a set of unit-weight points, $\PntSetA$.  Also, for simplicity we assume $\kCenterFunc(\PntSetA,k) \neq 0$, i.e. $\cardin{\PntSetA}>k$ (which can easily be checked).  Both assumptions would be used implicitly throughout the paper.
\end{remark}

\begin{lemma}
    \lemlab{decider_1}%
    The following relations between $\cardin{\netX{r}{\PntSetA}}$ and $\kCenterFunc(\PntSetA, k)$ hold:
    \begin{inparaenum}[(A)]
        \item If we have $\cardin{ \netX{r}{\PntSetA}} \leq k$ then $\kCenterFunc(\PntSetA, k)< r$.
        \item If $\cardin{\netX{r}{\PntSetA}}\allowbreak > k$ then $r\leq 2\kCenterFunc(\PntSetA, k)$.
    \end{inparaenum}
\end{lemma}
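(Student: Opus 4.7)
The plan is to prove each direction using the two defining properties of an $r$-net: the covering property (every point of $\PntSetA$ is within distance $r$ of some net point) and the packing property (any two distinct net points are at distance at least $r$).

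For part (A), I would use the net points themselves as a candidate center set. Since $\netX{r}{\PntSetA} \subseteq \PntSetA$ and $\cardin{\netX{r}{\PntSetA}} \leq k$, this is a feasible set of (at most $k$) centers from $\PntSetA$; if strictly fewer than $k$, just pad arbitrarily with extra points of $\PntSetA$. By the covering property in \defref{net}, every $\pnt \in \PntSetA$ has some $\pntA \in \netX{r}{\PntSetA}$ with $\distX{\pnt}{\pntA} < r$, so this clustering has radius strictly less than $r$. Hence $\kCenterFunc(\PntSetA, k) < r$.

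For part (B), the argument is a pigeonhole one. Let $\centers$ be an optimal $k$-center solution with radius $\Opt = \kCenterFunc(\PntSetA, k)$. Since $\cardin{\netX{r}{\PntSetA}} > k = \cardin{\centers}$, two distinct net points $\pnt, \pntA \in \netX{r}{\PntSetA}$ must be assigned to the same center $c \in \centers$, so $\distX{\pnt}{c} \leq \Opt$ and $\distX{\pntA}{c} \leq \Opt$. The triangle inequality gives $\distX{\pnt}{\pntA} \leq 2\Opt$, and by the packing property of the net, $\distX{\pnt}{\pntA} \geq r$. Combining yields $r \leq 2 \kCenterFunc(\PntSetA, k)$.

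Neither direction presents a real obstacle — the lemma is essentially a direct translation of the two defining axioms of an $r$-net into bounds on the $k$-center cost, and the only subtlety is being careful that in (A) the net is a valid (sub)set of centers of size at most $k$ (trivially padded if smaller), so that the strict inequality $< r$ from the covering property transfers cleanly to the optimum.
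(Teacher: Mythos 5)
Your proposal is correct and matches the paper's proof: part (A) uses the net points as a feasible center set via the covering property, and part (B) is the same pigeonhole-plus-triangle-inequality argument (the paper phrases it contrapositively, noting that $k+1$ points at pairwise distance $\geq r$ cannot be covered by $k$ balls of radius $< r/2$). No gaps.
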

\begin{proof}
    (A) Observe that if $\cardin{\netX{r}{\PntSetA}}\leq k$, then, by definition, the set of net points of $\netX{r}{\PntSetA}$ is a set of $\leq k$ centers such that all the points of $\PntSetA$ are in distance strictly less than $r$ from these centers.

    (B) If $\cardin{\netX{r}{\PntSetA}}> k$ then $\PntSetA$ must contain a set of $k+1$ points whose pairwise distances are all at least $r$.  In particular, any solution to \kCenter with radius $< r/2$ would not be able to cover all these $k+1$ points with only $k$ centers.
\end{proof}

\begin{lemma}%
    \lemlab{k_center_WBDP}%
    For any instance $(\PntSetA, k)$ of \kCenter, $\kCenterFunc(\PntSetA, k)$ satisfies the properties of \defref{N_D_P}; that is, \kCenter is a $(4+\eps)$-\NDP, for any $\eps>0$.
\end{lemma}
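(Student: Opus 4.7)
The plan is to verify each of the three properties of \defref{N:D:P} in turn.

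For \PDecider, I would leverage \lemref{decider:1} by running it with two net radii. Given a query $r$, first compute $\netX{r}{\PntSetA}$ in $O(n)$ time via \algref{net}; if it has at most $k$ points then \lemref{decider:1}(A) lets me report $\kCenterFunc(\PntSetA,k) < r$. Otherwise \lemref{decider:1}(B) gives $\kCenterFunc(\PntSetA,k) \geq r/2$, so I would also compute $\netX{(2+\eps/2)r}{\PntSetA}$. If this second net has at most $k$ points, I return the interval $[r/2,\,(2+\eps/2)r]$, whose spread is exactly $4+\eps$; otherwise \lemref{decider:1}(B) applied to the larger net yields $\kCenterFunc(\PntSetA,k) \geq (1+\eps/4)r > r$, so I report $\kCenterFunc > r$. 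All three branches run in linear time.

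For \PLipschitz, the plan is a standard transfer argument via the translation map $f:\PntSetA \to \PntSetB$. Take an optimal center set $C \subseteq \PntSetA$ achieving radius $r = \kCenterFunc(\PntSetA,k)$ and map it to $f(C) \subseteq \PntSetB$; for any $q = f(p) \in \PntSetB$ with nearest center $c \in C$ of $p$, the triangle inequality gives
\[
\distX{q}{f(c)} \;\leq\; \distX{q}{p} + \distX{p}{c} + \distX{c}{f(c)} \;\leq\; 2\Delta + r,
\]
so $\kCenterFunc(\PntSetB,k) \leq r + 2\Delta$. Picking any preimages $g(d) \in f^{-1}(d)$ of an optimal center set $D \subseteq \PntSetB$ yields the symmetric inequality by the same triangle-inequality computation, giving the bound $2\Delta$ on the difference.

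The most delicate property is \PPrune, where the key observation is that when $\kCenterFunc(\PntSetA,k) < r$ every $r$-far point must be its own cluster center in the optimal solution: an $r$-far point $\pnt$ is at distance $\geq r$ from every other point of $\PntSetA$, so if its assigned center were some $c \neq \pnt$ the cluster radius would be $\geq r$, contradicting optimality. Letting $m = \cardin{\FarX{\PntSetA}{r}}$, I would update the context to $\Context' = k - m$, computed in $O(n)$ time by \lemref{delete}. The equality $\kCenterFunc(\PntSetA,k) = \kCenterFunc(\CloseX{\PntSetA}{r},\, k - m)$ then follows in both directions: forward by removing the singleton far clusters from an optimum of $\PntSetA$, and reverse by appending the far points as singleton centers to any clustering of the close points. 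Weights pose no obstacle here since $\kCenter$ depends only on point locations, and the main conceptual step is really just this structural observation about far points forming forced singletons.
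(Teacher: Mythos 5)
Your proposal is correct and follows essentially the same route as the paper: the two-net decider built on \lemref{decider:1} with radii $r$ and $(2+\eps/2)r$, the $2\Delta$ triangle-inequality transfer for \PLipschitzRef, and the observation that $r$-far points are forced singleton centers (with $k$ decremented accordingly) for \PPruneRef. The only cosmetic difference is that the paper also states explicitly that no other point can be assigned \emph{to} a far center, which your phrase ``singleton far clusters'' presupposes but justifies by the same distance bound.
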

\begin{proof}
    We need to verify the required properties, see \defref{N_D_P}.

    \begin{compactitem}[\ItemSpacing]
        \itemPDecider We need to describe a decision procedure for \kCenter clustering.  To this end, given a distance $r$, the decider first calls $\netX{r}{\PntSetA}$, see \algref{net}.  If we have $\cardin{\netX{r}{\PntSetA}} \leq k$, then by \lemref{decider_1} the answer ``$\kCenterFunc(\PntSetA, k)< r$'' can be returned.

        Otherwise, call $\netX{(2+\eps/2)r}{\PntSetA}$.  If $\cardin{\netX{(2+\eps/2)r}{\PntSetA}}\leq k$, then, by \lemref{decider_1}, we have $r/2 \leq \kCenterFunc(\PntSetA, k)< (2+\eps/2)r$ and ``$\target \in [r/2, (2+\eps/2)r]$'' is returned.  Otherwise $\cardin{\netX{(2+\eps/2)r}{\PntSetA}}> k$, and \lemref{decider_1} implies that the answer ``$r< \kCenterFunc(\PntSetA, k)$'' can be returned by the decider.

        \itemPLipschitz Observe that if $\PntSetB$ is a $\Delta$-\translation of $\PntSetA$, then a point and its respective center in a $k$-center clustering of $\PntSetA$ each move by distance at most $\Delta$ in the transition from $\PntSetA$ to $\PntSetB$. As such, the distance between a point and its center changes by at most $2\Delta$ by this process. This argument also works in the other direction, implying that the $k$-center clustering radius of $\PntSetA$ and $\PntSetB$ are the same, up to an additive error of $2\Delta$.

        \itemPPrune It suffices to show that if $\kCenterFunc(\PntSetA, k)< r$ then
        \begin{equation*}
            \kCenterFunc(\PntSetA, k) =
            \kCenterFunc\pth{\CloseX{\PntSetA}{r}, k-
               \cardin{\FarX{\PntSetA}{r}}}.
        \end{equation*}
        Now, if $\kCenterFunc(\PntSetA, k)< r$ then any point of $\PntSetA$ whose neighbors are all $\geq r$ away must be a center by itself in the optimal $k$-center solution, as otherwise it would be assigned to a center $\geq r> \kCenterFunc(\PntSetA, k)$ away.  Similarly, any point assigned to it would be assigned to a point $\geq r> \kCenterFunc(\PntSetA,k)$ away.  Therefore, for any point $\pnt\in \PntSetA$ whose nearest neighbor is at least distance $r$ away, $\kCenterFunc(\PntSetA, k)=\kCenterFunc(\PntSetA\setminus \brc{\pnt}, k-1)$.  Repeating this observation implies the desired result.  The context update (and computing $\CloseX{\PntSetA}{r}$ and $\FarX{\PntSetA}{r}$) can be done in linear time using \delete, see \lemref{delete}. $\qedhere$
    \end{compactitem}
\end{proof}

\subsection{A linear time approximation algorithm}
\seclab{linear}

\begin{figure}[t]%
    \AlgorithmAnchor{\si{ndp_Alg}}%
    \centerline{%
       \fbox{%
          \begin{nprogram}
              \ndpAlg{}$(\PntSetA, \Context)$:\+\\
              Let $\PntSetA_0=\PntSetA$, $\Context_0=\Context$ and $i=1$.\\ %
              \While {\True} \Do \+\\
              Randomly pick a point $\pnt$ from $\PntSetA_{i-1}$.%
              \linelab{gen_random}%
              \\
              $\nRad_i \leftarrow \ddX{\pnt}{\PntSetA_{i-1}}$.%
              \linelab{rad}\\%
              \medskip\CodeComment{%
                 // Next, estimate the value of %
                 $\target_{i-1} = \target(\PntSetA_{i-1}, \Context_{i-1})$}%
              \\%
              $res_> = \decider(\nRad_i, \PntSetA_{i-1}, \Context_{i-1})$\\
              $res_< = \decider(\constNet\nRad_i, \PntSetA_{i-1}, \Context_{i-1})$\\
              \>\tabright{\CodeComment{// $\constNet = 37$:
                    \corref{same_1}.}}\\
              \If {$res_> = $ ``$f_{i-1} \in [x,y]$'' } \Then%
              \Return ``$f\pth{\PntSetA, \Context} \in [x/2,2y]$''.%
              \linelab{bounded1x}%
              \\ %
              \If {$res_< = $ ``$f_{i-1} \in [x',y']$'' } \Then%
              \Return ``$f\pth{\PntSetA, \Context} \in [x'/2,2y']$''.%
              \linelab{bounded1y}%
              \\%
              \If $res_> = $ ``$\nRad_i\! <\! \target_{i-1}$'' %
              and $res_< = $\ ``$ \target_{i-1} \!<\!  \constNet \nRad_i$''
              \Then \+\\
              \Return ``$f\pth{\PntSetA, \Context} \in [\nRad_i/2, 2\constNet\nRad_i]$''.  %
              \linelab{bounded3}%
              \-\\
              \If $res_> = ``\target_{i-1} < \nRad_i$'' \Then\+\\
              $\PntSetA_i = \delete(\nRad_i, \PntSetA_{i-1})$%
              \>\tabright{ \CodeComment{%
                    // \textbf{Prune}}}%
              \linelab{delete}%
              \\{%
                 $\Context_i = \contextUpdate(\PntSetA_{i-1}, \Context_{i-1}, \PntSetA_i)$ %
              }\-\\
              \If {$res_< = ``\constNet\nRad_i < \target_{i-1}$''} \Then\+\\
              $\PntSetA_i = \netX{3\nRad_i}{\PntSetA_{i-1}}$%
              \>\tabright{ \CodeComment{// \textbf{Net}}} %
              \linelab{net}%
              \\{$\Context_i = \Context_{i-1}$ }
              \-\\
              $i=i+1$
          \end{nprogram}}}
    \caption{%
       The approximation algorithm. The implicit target value being approximated is $\target = \target\pth{ \PntSetA, \Context}$, where $\target$ is a $\cDecider$-\NDP. That is, there is a $\cDecider$-decider for $\target$, denoted by \decider, and the only access to the function $\target$ is via this procedure.%
    }
    \figlab{a_algorithm}%
\end{figure}%

We now describe the general algorithm which given a $\cDecider$-\NDP, $(\PntSetA, \Context)$, and an associated target function, $\target$, computes in linear time a $O(\cDecider)$ spread interval containing $\target(\PntSetA, \Context)$.  In the following, let \decider denote the given $\cDecider$-decider, where $\cDecider \geq 1$ is some parameter.  Also, let \contextUpdate denote the context updater associated with the given \NDP problem, see \defref{N_D_P}. Both \decider and \contextUpdate run in linear time.  The algorithm for bounding the optimum value of an \NDP is shown in \figref{a_algorithm}.

\begin{remark}
    \remlab{zero}%
    For the sake of simplicity of exposition, we assume that $\target(\PntSetA, \Context)\neq 0$.  Since all our applications have $\target(\PntSetA, \Context)\neq 0$, we choose to make this simplifying assumption.  In particular, we later show in \corref{same_1} that if initially $\target(\PntSetA, \Context)\neq 0$, then \ndpAlg preserves this property.
\end{remark}

\begin{remark}
    \remlab{multi}%
    Note that the algorithm of \figref{a_algorithm} can be modified to handle inputs where $\PntSetA$ is a multiset (namely, two points can occupy the same location), and we still assume (as done above) that $\target(\PntSetA, \Context)\neq 0$.  Specifically, it must be ensured that the distance computed in \lineref{rad} is not zero, as this is required for the call to \decider.  This issue can be remedied (in linear time) by first grouping all the duplicates of the point sampled in \lineref{gen_random} into a single weighted point (with the sum of the weights) before calling \lineref{rad}. This issue can be addressed by computing a net for a radius that is smaller than (say) half the smallest non-zero distance. How to compute this distance is described in \secref{closest_non_zero_dist}.
\end{remark}

\subsubsection{Analysis}

A \emphi{net iteration} of the algorithm is an iteration where \net gets called. A \emphi{prune iteration} is one where \delete gets called.  Note that the only other type of iteration is the one where the algorithm returns. %
In the following, $\CloseY{\PntSet}{\nRad}$ is defined analogously to $\CloseX{\PntSet}{\nRad}$ (see \secref{point_removal}).

\begin{lemma}
    \lemlab{remove}%
    Let $\PntSet$ be a point set. A $(2+\eps)\nRad$-net of $\PntSet$, for any $\eps>0$, can contain at most half the points of $\CloseY{\PntSet}{\nRad}$.
\end{lemma}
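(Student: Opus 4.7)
The plan is to exhibit an injection from $\NetSet \cap \CloseY{\PntSet}{\nRad}$ into $\CloseY{\PntSet}{\nRad} \setminus \NetSet$, where $\NetSet$ denotes the given $(2+\eps)\nRad$-net of $\PntSet$. Once such an injection exists, the count $\cardin{\NetSet \cap \CloseY{\PntSet}{\nRad}} \leq \cardin{\CloseY{\PntSet}{\nRad}} - \cardin{\NetSet \cap \CloseY{\PntSet}{\nRad}}$ immediately yields the factor of one half claimed in the lemma.

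The natural candidate for this map is $\pnt \mapsto \nnX{\pnt}{\PntSet}$, sending each net point in $\CloseY{\PntSet}{\nRad}$ to its nearest neighbor in $\PntSet$. First I would check that for $\pnt \in \NetSet \cap \CloseY{\PntSet}{\nRad}$, the image $\pntA = \nnX{\pnt}{\PntSet}$ really lies in $\CloseY{\PntSet}{\nRad} \setminus \NetSet$. Membership in $\CloseY{\PntSet}{\nRad}$ is automatic, since $\distX{\pntA}{\pnt} \leq \nRad$ certifies $\ddX{\pntA}{\PntSet} \leq \nRad$. That $\pntA \notin \NetSet$ follows from the net's separation condition: if $\pntA$ were also a net point, then $\distX{\pnt}{\pntA} \geq (2+\eps)\nRad$, contradicting $\distX{\pnt}{\pntA} \leq \nRad$ (using $\eps > 0$).

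The main content of the proof, and the only place where the slack factor $2+\eps$ (rather than just $2$) is genuinely used, is injectivity. If two distinct net points $\pnt_1, \pnt_2 \in \NetSet \cap \CloseY{\PntSet}{\nRad}$ were to map to the same $\pntA$, the triangle inequality would force $\distX{\pnt_1}{\pnt_2} \leq 2\nRad$, which is strictly smaller than $(2+\eps)\nRad$ for any $\eps > 0$, contradicting the net separation. I do not anticipate any real obstacle beyond keeping the strict versus non-strict inequalities in the definitions of $\CloseY$ and of a net straight when assembling the chain of comparisons.
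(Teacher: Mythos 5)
Your proof is correct and follows essentially the same route as the paper: the paper charges each net point of $\CloseY{\PntSet}{\nRad}$ to a non-net point within distance $\nRad$ and rules out double-charging because balls of radius $\nRad$ around net points are disjoint, which is exactly your triangle-inequality argument for injectivity of the nearest-neighbor map. The only difference is presentational (explicit injection versus a charging scheme), so there is nothing to add.
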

\begin{proof}
    Consider any point $\pnt$ in $\CloseY{\PntSet}{\nRad}$ which became one of the net points.  Since $\pnt \in \CloseY{\PntSet}{\nRad}$, a disk of radius $\nRad$ centered at $\pnt$ must contain another point $\pntA$ from $\CloseY{\PntSet}{\nRad}$ (indeed, $\pnt \in \CloseY{\PntSet}{\nRad}$ only if its distance from its nearest neighbor in $\PntSet$ is at most $\nRad$).  Moreover, $\pntA$ cannot become a net point since it is too close to $\pnt$.  Now, if we place a ball of radius $\nRad$ centered at each point of $\CloseY{\PntSet}{\nRad}$ which became a net point, then these balls will be disjoint because the pairwise distance between net points is $\geq (2+\eps)\nRad$.  Therefore, each point of $\CloseY{\PntSet}{\nRad}$ which became a net point can be charged to at least one point of $\CloseY{\PntSet}{\nRad}$ which did not make it into the net, such that no point gets charged twice.
\end{proof}

\begin{lemma}%
    \lemlab{time}%
    Given an instance $(\PntSetA, \Context)$ of an \NDP, %
    $\ndpAlg(\PntSetA, \Context)$ runs in expected $O(n)$ time.
\end{lemma}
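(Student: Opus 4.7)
The plan is to show that each iteration of \ndpAlg costs time linear in the current point set and that the set size contracts in expectation by a constant factor, so that the total expected work is $O(n)$.

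First, every step in iteration $i$ runs in $O(\cardin{\PntSetA_{i-1}})$ time: the nearest-neighbor computation in \lineref{rad} by a naive scan; the two calls to \decider by \PDeciderRef; the call to \delete by \lemref{delete}; the call to \net by \lemref{net}; and \contextUpdate by \PPruneRef. Hence the total expected running time is proportional to $\sum_{i \ge 1} \Ex{\cardin{\PntSetA_{i-1}}}$.

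The central claim is that $\Ex{\cardin{\PntSetA_i} \mid \cardin{\PntSetA_{i-1}} = m} \le \alpha m$ for an absolute constant $\alpha < 1$. Sort the points of $\PntSetA_{i-1}$ by their nearest-neighbor distance in $\PntSetA_{i-1}$ as $d_1 \le d_2 \le \cdots \le d_m$; the uniform sample in \lineref{gen:random} has rank $J$ uniform on $\brc{1,\ldots,m}$ with $\nRad_i = d_J$. Writing $\target = \target(\PntSetA_{i-1},\Context_{i-1})$, set $j_0 = \cardin{\brc{j : d_j \le \target}}$ and $j_1 = \cardin{\brc{j : d_j < \target/\constNet}}$. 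Inspecting the decision tree in \figref{a:algorithm}, the iteration does a net step only when $J \le j_1$, a prune step only when $J > j_0$, and otherwise returns.

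For the prune case, \delete retains $\CloseX{\PntSetA_{i-1}}{\nRad_i}$ by \PPruneRef, giving $\cardin{\PntSetA_i} \le J - 1$. For the net case, \lemref{remove} (applied with $\eps = 1$) implies the $3\nRad_i$-net contains at most half the points of $\CloseY{\PntSetA_{i-1}}{\nRad_i}$, while the remaining net points lie in its complement, of size at most $m - J$; hence $\cardin{\PntSetA_i} \le J/2 + (m - J) = m - J/2$. Taking $\cardin{\PntSetA_i} = 0$ on returns and averaging over $J$ uniform in $\brc{1,\ldots,m}$ yields
\[
\Ex{\cardin{\PntSetA_i} \mid \cardin{\PntSetA_{i-1}} = m} \;\le\; \frac{1}{m}\pth{\sum_{J=1}^{j_1}\pth{m - J/2} + \sum_{J=j_0+1}^{m}(J-1)}.
\]

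The main obstacle is bounding the right-hand side uniformly in $j_0$ and $j_1$: neither branch alone reduces the set when the sampled rank lands on the wrong side of its threshold, so one must verify that the combined quadratic never exceeds $\alpha m$ for some $\alpha < 1$ regardless of how $j_0$ and $j_1$ are placed. A short maximization (the extremum occurring near $j_1 = j_0 = 2m/3$) gives the bound $(5/6)m$, so $\alpha = 5/6$ suffices. Once this one-step contraction is in hand, iterated expectation yields $\Ex{\cardin{\PntSetA_i}} \le \alpha^i n$, so $\sum_{i \ge 0} \Ex{\cardin{\PntSetA_i}} \le n/(1-\alpha) = O(n)$, and the expected running time is linear.
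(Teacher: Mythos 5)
Your proof is correct and follows the paper's argument in all essentials: linear work per iteration, the prune branch discards everything at or above the sampled rank, the net branch discards at least half of everything at or below it (via the same packing lemma, \lemref{remove}), and averaging over the uniform rank of the sampled point gives a constant-factor contraction, hence a geometric series summing to $O(n)$. The only real difference is in the bookkeeping: the paper sidesteps your explicit $j_0,j_1$ case analysis and maximization by observing that with probability at least $1/2$ the sampled rank lies in $[m/4,3m/4]$, in which case \emph{whichever} branch fires removes at least $m/8$ points, yielding the weaker but equally sufficient contraction factor $15/16$ in place of your $5/6$.
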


\begin{proof}
    In each iteration of the while loop, the only non-trivial work done is in computing $\nRad_i$, the two calls to \decider, and the one call to either \net or \delete.  It has already been shown that all of these can be computed in $O(\cardin{\PntSetA_{i-1}})$ time.  Hence, the total running time for the algorithm is $O\pth{ \sum_{i=0}^{k-1} \cardin{\PntSetA_i} }$, where $k$ denotes the last (incomplete) iteration of the while loop.

    So consider the beginning of iteration $i<k$ of the while loop.  Let the points in $\PntSetA_{i-1}$ be labeled $\pnt_1, \pnt_2, \ldots, \pnt_m$ in increasing order of their nearest neighbor distance in $\PntSetA_{i-1}$.  Let $j$ be the index of the point chosen in \lineref{gen_random} and let $(\PntSetA_{i-1})^{\geq j}$ and $(\PntSetA_{i-1})^{\leq j}$ be the subset of the points with index $\geq j$ and index $\leq j$, respectively.  Now since a point is randomly picked in \lineref{gen_random}, with probability $\geq 1/2$, $j\in [m/4, 3m/4]$.  Let's call this event a \emphi{successful iteration}. %
    We have $\min\pth{\cardin{(\PntSetA_{i-1})^{\geq j}}, \cardin{(\PntSetA_{i-1})^{\leq j}}} \geq \cardin{\PntSetA_{i-1}}/4$ for a successful iteration.

    Since $i<k$ is not the last iteration of the while loop, either \delete or \net must get called.  If $\delete(\nRad_i, \PntSetA_{i-1})$ gets called (i.e. \lineref{delete}) then by \lemref{delete}, all of $(\PntSetA_{i-1})^{\geq j}$ gets removed.  So suppose \net gets called (i.e. \lineref{net}).  In this case \lemref{remove} implies that the call to \net removes at least $\cardin{(\PntSetA_{i-1})^{\leq j}}/2$ points.

    Therefore, for any iteration $i<k$, at least
    \begin{equation*}
        \nu_i = \min{}\allowbreak (\cardin{(\PntSetA_{i-1})^{\geq j}},
        \cardin{(\PntSetA_{i-1})^{\leq j}}/2 )
    \end{equation*}
    points get removed.  If an iteration is successful then $\nu_i \geq \cardin{\PntSetA_{i-1}}/8$.  In particular,
    \begin{math}
        \Ex{\nu_i \sep{ \cardin{\PntSetA_{i-1}}}} \geq \cardin{\PntSetA_{i-1}}/16.
    \end{math}
    Now, $\cardin{\PntSetA_i}\leq \cardin{\PntSetA_{i-1}}-\nu_i$ and as such $\Ex{\cardin{\PntSetA_i} \sep{ \cardin{\PntSetA_{i-1}}}}\leq (15/16) \cardin{\PntSetA_{i-1}}$.

    Therefore, for $0 < i < k$,
    \begin{align*}
      \Ex{ \Bigl. \cardin{\PntSetA_i}}%
      =%
      \Ex{ \Ex{\cardin{\PntSetA_i} \sep{ \PntSetA_{i-1}} \bigr.}
      \Bigr.}%
      \leq%
      \Ex{\frac{15}{16}\cardin{\PntSetA_{i-1}}}%
      =%
      \frac{15}{16} \Ex{ \cardin{\PntSetA_{i-1}}\Bigr.}.
    \end{align*}
    Hence, by induction on $i$,
    \begin{math}
        \Ex{ \cardin{ \PntSetA_i } \bigr. } \leq (15/16)^i \cardin{\PntSetA_0}
    \end{math}
    and so, in expectation, the running time is bounded by
    \begin{align*}
      O\pth{\Ex{\sum_{i=0}^{k-1} \cardin{\PntSetA_i}}}%
      =%
      O\pth{\sum_{i=0}^{k-1} \Ex{ \cardin{\PntSetA_i} \Bigr.}}%
      =%
      O\pth{\sum_{i=0}^{k-1} (15/16)^i \cardin{\PntSetA_0}}%
      =%
      O\pth{\cardin{\PntSetA_0} \Bigr. }. \qedhere
    \end{align*}%
\end{proof}

\paragraph{Correctness.}
The formal proof of correctness is somewhat tedious, but here is the basic idea: At every iteration, either far points are being thrown away (and this does not affect the optimal value), or we net the points. However, the net radius being used is always significantly smaller than the optimal value, and throughout the algorithm execution, the radii of the nets being used grow exponentially. As such, the accumulated error in the end is proportional to the radius of the last net computation before termination, which itself is also much smaller than the optimal value.

\medskip

Before proving that \ndpAlg returns a bounded spread interval containing $\target(\PntSetA_0, \Context_0)$, several helper lemmas will be needed.  For notational ease, in the rest of this section, we use $\target(\PntSetA_i)$ as shorthand for $\target(\PntSetA_i, \Context_i)$.

\begin{lemma}%
    \lemlab{double}%
    Suppose that \net is called in iteration $i$ of the while loop (i.e., \lineref{net} in \figref{a_algorithm}).  Then for any iteration $j>i$ we have, $\nRad_{j} \geq 3\nRad_{i}$.
\end{lemma}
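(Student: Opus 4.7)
The plan is to establish by induction on $j$ the stronger invariant that, for every $j \ge i$, the minimum pairwise distance in $\PntSetA_j$ is at least $3\nRad_i$. Once this invariant is in hand, the lemma is immediate: for $j > i$, the value $\nRad_j = \ddX{\pnt}{\PntSetA_{j-1}}$ computed in \lineref{rad} is a nearest-neighbor distance within $\PntSetA_{j-1}$, so it is at least the minimum pairwise distance of $\PntSetA_{j-1}$, which by the invariant (applied to $j-1 \ge i$) is at least $3\nRad_i$.

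For the base case $j = i$, the algorithm sets $\PntSetA_i = \netX{3\nRad_i}{\PntSetA_{i-1}}$ in \lineref{net}. By the definition of an $r$-net (\defref{net}), any two distinct points of $\PntSetA_i$ are at distance at least $3\nRad_i$, giving the invariant at level $j=i$.

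For the inductive step, suppose the invariant holds at level $j-1 \ge i$ and consider what happens in iteration $j$. If iteration $j$ is a return iteration there is nothing to prove, since $\PntSetA_j$ is never formed. If iteration $j$ is a prune iteration (\lineref{delete}), then $\PntSetA_j = \delete(\nRad_j, \PntSetA_{j-1}) \subseteq \PntSetA_{j-1}$, so the minimum pairwise distance can only go up and the invariant is preserved. If iteration $j$ is a net iteration (\lineref{net}), then $\PntSetA_j = \netX{3\nRad_j}{\PntSetA_{j-1}}$, whose minimum pairwise distance is at least $3\nRad_j$. By the observation in the previous paragraph applied at level $j$ (using the invariant at $j-1$), we have $\nRad_j \ge 3\nRad_i$, so the minimum pairwise distance of $\PntSetA_j$ is at least $3\nRad_j \ge 3\nRad_i$, and the invariant is preserved.

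The only subtle point is making sure the bootstrapping is consistent: the claim $\nRad_j \ge 3\nRad_i$ used inside the net-iteration case relies on the invariant at $j-1$, not at $j$, so there is no circularity. I don't expect any genuine obstacle here; the argument is essentially bookkeeping on top of the two facts that $r$-nets have pairwise separation $\ge r$, and that deletion never shrinks minimum pairwise distances.
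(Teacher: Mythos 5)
Your proof is correct and rests on the same key fact as the paper's: the $3\nRad_i$-separation produced by the net at iteration $i$ is inherited by all later point sets, so every subsequent nearest-neighbor distance is at least $3\nRad_i$. The paper gets this in one step by observing that each $\PntSetA_{j-1}$ with $j>i$ is a subset of the $3\nRad_i$-net (both \AlgorithmI{delFar} and \AlgorithmI{net} return subsets), whereas you carry the separation bound through an explicit induction; the two are the same argument in different packaging.
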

\begin{proof}
    Consider the beginning of iteration $j$ of the while loop.  The current set of points, $\PntSetA_{j-1}$, is a subset of the net points of a $3\nRad_{i}$-net (it is a subset since \lineref{delete} and \lineref{net} might have been executed in between rounds $i$ and $j$).  Therefore, being a net, the distance between any two points of $\PntSetA_{j-1}$ is $\geq 3\nRad_{i}$, see \defref{net}.  In particular, this means that for any point $\pnt$ of $\PntSetA_{j-1}$, we have $\ddX{\pnt}{\PntSetA_{j-1}} \geq 3\nRad_{i}$.
\end{proof}

\begin{lemma}%
    \lemlab{translation}%
    For $i=1, \ldots, k$, we have %
    $\abs{\target(\PntSetA_i) - \target(\PntSetA_0)}\leq 9\nRad_i$.
\end{lemma}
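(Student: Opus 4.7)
The plan is to split the iterations into two types and observe that prune iterations do not change the target value at all, while the error from each net iteration can be bounded via the \PLipschitzRef property. First, I would verify that in a prune iteration $j$, the condition guarding \lineref{delete} ensures $\target(\PntSetA_{j-1}) < \nRad_j$ (modulo the decider's correctness), so by the \PPruneRef property combined with $\PntSetA_j = \CloseX{\PntSetA_{j-1}}{\nRad_j}$ and the updated context produced by \contextUpdate, we get $\target(\PntSetA_j) = \target(\PntSetA_{j-1})$. For a net iteration $j$, $\PntSetA_j = \netX{3\nRad_j}{\PntSetA_{j-1}}$ is, by the definition of a net, a $3\nRad_j$-translation of $\PntSetA_{j-1}$, and the \PLipschitzRef property (with $\Delta = 3\nRad_j$) yields $\abs{\target(\PntSetA_j) - \target(\PntSetA_{j-1})} \leq 6\nRad_j$.

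Next, let $i_1 < i_2 < \cdots < i_m \leq i$ enumerate the net iterations among the first $i$ rounds. Since prune iterations contribute zero error, a telescoping sum gives
\[
    \abs{\target(\PntSetA_i) - \target(\PntSetA_0)}
    \;\leq\; \sum_{t=1}^m 6\nRad_{i_t}.
\]
Now I would invoke \lemref{double}: since a net was called in iteration $i_t$, any later iteration $j > i_t$ (in particular $j = i_{t+1}$) satisfies $\nRad_j \geq 3\nRad_{i_t}$. Applying this inductively yields $\nRad_{i_t} \leq \nRad_{i_m}/3^{m-t}$.

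Summing the resulting geometric series,
\[
    \sum_{t=1}^m 6\nRad_{i_t}
    \;\leq\; 6\nRad_{i_m} \sum_{s=0}^{m-1} 3^{-s}
    \;\leq\; 6\nRad_{i_m}\cdot\tfrac{3}{2}
    \;=\; 9\nRad_{i_m}.
\]
To close the bound, I would observe that $\nRad_{i_m} \leq \nRad_i$: either $i_m = i$ (the current iteration is itself a net iteration), in which case equality holds, or $i_m < i$, in which case one more application of \lemref{double} gives $\nRad_i \geq 3\nRad_{i_m}$. If no net iteration has occurred ($m = 0$), the bound is trivially zero.

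The main obstacle I anticipate is purely bookkeeping: carefully handling the context updates at prune iterations so that \PPruneRef genuinely applies (i.e., that $\Context_j = \contextUpdate(\PntSetA_{j-1}, \Context_{j-1}, \PntSetA_j)$ is the ``updated context'' $\Context'$ required by the property), and making sure the geometric growth guarantee of \lemref{double} chains correctly across intervening prune iterations. Once those points are in place, the bound $9\nRad_i$ drops out of the geometric series.
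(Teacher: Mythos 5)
Your proposal is correct and follows essentially the same route as the paper: split iterations into prune (zero contribution by \PPruneRef) and net (at most $6\nRad_j$ by \PLipschitzRef applied to the $3\nRad_j$-translation), then bound the resulting sum as a geometric series via \lemref{double}. Your explicit handling of the final step $\nRad_{i_m}\leq\nRad_i$ and of the $m=0$ case only makes the argument slightly more careful than the paper's, which compresses these into the closing inequality $9\nRad_m\leq 9\nRad_i$.
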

\begin{proof}
    Let $I$ be the set of indices of the net iterations up to (and including) the $i$\th iteration. Similarly, let $\overline{I}$ be the set of iterations where \delete gets called.

    If \net was called in the $j$\th iteration, then $\PntSetA_j$ is a $3\nRad_{j}$-\translation of $\PntSetA_{j-1}$ and so by the \PLipschitzRef property, $\abs{\target(\PntSetA_j)- \target(\PntSetA_{j-1})} \leq 6\nRad_j$.  On the other hand, if \delete gets called in the $j$\th iteration, then $\target(\PntSetA_j) = \target(\PntSetA_{j-1})$ by the \PPruneRef property. Let $m = \max I$, we have that
    \begin{align*}
      \abs{\target(\PntSetA_i) - \target(\PntSetA_0)}%
      &\leq%
        \sum_{j =1}^i \abs{\Bigl. \target(\PntSetA_j)-
        \target(\PntSetA_{j-1})}\\
      &%
        =%
        \sum_{j \in I} \abs{\Bigl. \target(\PntSetA_j)-
        \target(\PntSetA_{j-1})} + \sum_{j \in \overline{I}}
        \abs{\Bigl. \target(\PntSetA_j)- \target(\PntSetA_{j-1})} \\%
      &\leq%
        \sum_{j \in I} 6\nRad_j + \sum_{j \in \overline{I}} 0%
        \leq %
        6 \nRad_m \sum_{j=0}^\infty \frac{1}{3^j}%
        \leq%
        9 \nRad_m%
        \leq %
        9 \nRad_i,
    \end{align*}
    by \lemref{double}.
\end{proof}

The following lemma testifies that the radii of the nets computed by the algorithm are always significantly smaller than the value we are trying to approximate.

\begin{lemma}%
    \lemlab{induction}%
    For any iteration $i$ of the while loop such that \net gets called, we have $\nRad_i\leq \target(\PntSetA_0)/\constC$, where $0<\constC = \constNet-9$.
\end{lemma}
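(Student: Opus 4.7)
The plan is to unwind the call to \net at iteration $i$ and combine the trigger condition with the Lipschitz-style bound already established in \lemref{translation}.

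At iteration $i$, the call to \net in \lineref{net} is only triggered when $res_< = $``$\constNet \nRad_i < \target_{i-1}$'', i.e., when $\target(\PntSetA_{i-1}) > \constNet \nRad_i = 37 \nRad_i$. So the task reduces to relating $\target(\PntSetA_{i-1})$ to $\target(\PntSetA_0)$ and showing they differ by at most $9\nRad_i$. Once that is done, we get
\[
\target(\PntSetA_0) \;\geq\; \target(\PntSetA_{i-1}) - 9\nRad_i \;>\; 37\nRad_i - 9\nRad_i \;=\; 28\nRad_i \;=\; \constC\, \nRad_i,
\]
which is exactly the conclusion.

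To bound $\abs{\target(\PntSetA_{i-1}) - \target(\PntSetA_0)}$, I would handle two cases. If $i=1$ the two values are equal and the inequality is trivial from the trigger condition alone. For $i\geq 2$, mimic the geometric-series argument used in the proof of \lemref{translation} but applied up to index $i-1$. Let $m$ be the largest net iteration index strictly less than $i$ (if none exists, the difference is $0$ since all earlier iterations were prunes and prunes preserve $\target$ by the \PPruneRef property). Summing \PLipschitzRef contributions $6\nRad_j$ over net iterations $j\leq i-1$ and using \lemref{double} to dominate these radii by a geometric series based at $\nRad_m$ yields $\abs{\target(\PntSetA_{i-1}) - \target(\PntSetA_0)} \leq 6 \nRad_m \sum_{j\geq 0} 3^{-j} \leq 9\nRad_m$.

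The final step is to replace $\nRad_m$ by $\nRad_i$: since $m<i$ and $m$ is itself a net iteration, \lemref{double} (applied with ``$i$'' set to $m$ and ``$j$'' set to $i$) gives $\nRad_i \geq 3\nRad_m \geq \nRad_m$. Therefore $\abs{\target(\PntSetA_{i-1}) - \target(\PntSetA_0)} \leq 9\nRad_i$, and the inequality above closes the argument. The only mildly delicate point is making sure \lemref{translation} (stated for $\PntSetA_i$) is re-run for the index $i-1$ and combined with \lemref{double} in the right direction; once $m$ is identified as the last prior net iteration and the prunes in between are seen to be harmless, the bookkeeping is routine, and the slack in $\constNet - 9 = 28$ accommodates exactly this worst-case accumulated translation error.
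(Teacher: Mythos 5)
Your proof is correct, but it is organized differently from the paper's. The paper proves the lemma by induction over the net iterations $m_1,\dots,m_t$: it uses the inductive hypothesis $\nRad_{m_{i-1}} \leq \target(\PntSetA_0)/\constC$ to control the translation error $9\nRad_{m_{i-1}}$ from \lemref{translation}, and closes the loop via the fixed-point relation $\constC = \constNet/(1+9/\constC)$, i.e.\ $\constC + 9 = \constNet$. You instead give a direct, non-inductive argument: the trigger condition yields $\target(\PntSetA_{i-1}) > \constNet\nRad_i$, and you bound $\abs{\target(\PntSetA_{i-1}) - \target(\PntSetA_0)}$ by $9\nRad_m \leq 9\nRad_i$, where $m$ is the last net iteration before $i$, by re-running the geometric-series argument inside \lemref{translation} and then invoking \lemref{double} in the direction $\nRad_i \geq 3\nRad_m$. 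Subtracting gives $\target(\PntSetA_0) > (\constNet - 9)\nRad_i = \constC\,\nRad_i$ in one step. The ingredients are the same (the trigger condition, \lemref{double}, and the translation bound), but your packaging eliminates the induction entirely; the only care required, which you correctly exercise, is that one cannot simply cite the statement of \lemref{translation} at index $i-1$ and then compare $\nRad_{i-1}$ to $\nRad_i$ (these radii are not monotone across prune iterations), so anchoring the error at the last \emph{net} radius $\nRad_m$ and using \lemref{double} is essential. Both arguments arrive at the same arithmetic identity $\constC = \constNet - 9$; yours arguably makes clearer why that is the right constant.
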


\begin{proof}
    The proof will be by induction.  Let $m_1, \dots m_t$ be the indices of the iterations of the while loop in which \net gets called.  For the base case, observe that in order for \net to get called, we must have $\constC \nRad_{m_1}< \constNet \nRad_{m_1}< \target(\PntSetA_{m_1-1})$.  However, since this is the first iteration in which \net is called, it must be that $\target(\PntSetA_0) = \target(\PntSetA_{m_1-1})$ (since for all previous iterations \delete must have been called).

    So now suppose that $\nRad_{m_j} \leq \target(\PntSetA_0)/\constC$ for all $m_j<m_i$.  If a call to \net is made in iteration $m_i$ then again $\constNet \nRad_{m_i}< \target(\PntSetA_{(m_i)-1}) = \target(\PntSetA_{m_{(i-1)}})$.  Thus, by \lemref{translation} and induction, we have
    \begin{align*}
      \nRad_{m_i}%
      &<%
        \frac{\target\pth{\PntSetA_{m_{(i-1)}}}}{\constNet}%
        \leq%
        \frac{\target(\PntSetA_0)+9\nRad_{m_{(i-1)}}}{\constNet}%
        \leq%
        \frac{\target(\PntSetA_0) +
        9\target(\PntSetA_0)/\constC}{\constNet}%
      \\&
      =%
      \frac{1 + 9/\constC}{\constNet} \target(\PntSetA_0)%
      =%
      \frac{\target(\PntSetA_0)}{\constC},
    \end{align*}
    if $\ds \constC = \frac{\constNet}{1+9/\constC}$. This in turn is equivalent to $\ds \constC + 9 = \constNet$, which is true by definition.
\end{proof}

Setting $\constNet = \constNetVal$, results in $\constC = \constCVal$, and by \lemref{induction}, for all $i$ that correspond to a net iteration, $\nRad_i\leq \target(\PntSetA_0)/\constCVal$.  By \lemref{translation}, for any net iteration $i$, we have
\begin{align*}
  \abs{\Bigl. \target(\PntSetA_i) - \target(\PntSetA_0)} \leq
  9\nRad_i \leq \target(\PntSetA_0) / 3 .
\end{align*}
In particular, we conclude that $\abs{\target(\PntSetA_i) - \target(\PntSetA_0)} \leq \target(\PntSetA_0) / 3 $ for any iteration $i$.  We thus get the following.

\begin{corollary}%
    \corlab{same_1}%
    For $\constNet \geq \constNetVal$, and any $i$, we have:
    \begin{compactenum}[\qquad(A)]
        \item $(2/3)\target(\PntSetA_0)$ $\leq \target(\PntSetA_{i}) \leq (4/3)\target(\PntSetA_0)$.

        \smallskip

        \item If $\target(\PntSetA_{i}) \in [x,y]$ then $\target(\PntSetA_0) \in [(3/4)x, (3/2)y]$ $\subseteq [x/2, 2y]$.

        \smallskip

        \item If $\target(\PntSetA_0)>0$ then $\target(\PntSetA_i) > 0$.
    \end{compactenum}
\end{corollary}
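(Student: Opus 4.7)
My plan is to derive all three items mechanically from the two preceding lemmas, \lemref{translation} and \lemref{induction}, using $\constNet = \constNetVal$ (hence $\constC = \constCVal$). The work happens in part (A); parts (B) and (C) are then direct algebraic consequences.

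For (A), I first reduce the general index $i$ to a net iteration. Let $m \le i$ be the largest index of a net iteration up to time $i$ (take $m = 0$ if no net has yet occurred). By the \PPruneRef property, any prune iteration leaves $\target$ unchanged, so iterating this observation over all indices in $(m, i]$ (which are, by choice of $m$, all prune iterations) yields $\target(\PntSetA_i) = \target(\PntSetA_m)$. In the degenerate case $m = 0$, this reads $\target(\PntSetA_i) = \target(\PntSetA_0)$ and the claim is trivial. Otherwise, \lemref{induction} applied at the net iteration $m$ gives $\nRad_m \le \target(\PntSetA_0)/\constCVal$, while \lemref{translation} gives $\abs{\target(\PntSetA_m) - \target(\PntSetA_0)} \le 9\nRad_m$. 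Combining, $\abs{\target(\PntSetA_i) - \target(\PntSetA_0)} \le (9/\constCVal)\target(\PntSetA_0) \le \target(\PntSetA_0)/3$. Rearranging this absolute-value bound gives the two-sided inequality $(2/3)\target(\PntSetA_0) \le \target(\PntSetA_i) \le (4/3)\target(\PntSetA_0)$.

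For (B), I combine (A) with the hypothesis $\target(\PntSetA_i) \in [x, y]$. The upper bound $\target(\PntSetA_i) \le (4/3)\target(\PntSetA_0)$ together with $x \le \target(\PntSetA_i)$ forces $\target(\PntSetA_0) \ge (3/4)x$; symmetrically, $\target(\PntSetA_i) \le y$ together with $\target(\PntSetA_i) \ge (2/3)\target(\PntSetA_0)$ gives $\target(\PntSetA_0) \le (3/2)y$. This establishes the tight containment $\target(\PntSetA_0) \in [(3/4)x, (3/2)y]$, and the weaker containment $[(3/4)x, (3/2)y] \subseteq [x/2, 2y]$ is immediate from $x, y \ge 0$. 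Part (C) is then just the lower bound in (A): if $\target(\PntSetA_0) > 0$, then $\target(\PntSetA_i) \ge (2/3)\target(\PntSetA_0) > 0$.

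The only subtle step is the reduction in (A) from an arbitrary iteration $i$ to a net iteration $m \le i$, which is what lets us invoke \lemref{induction}; once that reduction is in place, the constants work out cleanly because $\constNet = \constNetVal$ was chosen precisely so that $9/(\constNet - 9) = 9/\constCVal < 1/3$. All remaining steps are routine arithmetic.
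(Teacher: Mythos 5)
Your proposal is correct and follows essentially the same route as the paper: the paper likewise combines \lemref{induction} (giving $\nRad_m \leq \target(\PntSetA_0)/\constCVal$ at net iterations) with \lemref{translation}, notes that $9/\constCVal \leq 1/3$, and extends to arbitrary $i$ via the observation that prune iterations leave $\target$ unchanged. Your write-up merely makes that last reduction (to the most recent net iteration $m \leq i$) explicit where the paper compresses it into an ``in particular'' remark.
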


\begin{lemma}%
    \lemlab{summary_1}%
    For $\constNet\geq \constNetVal$, given an instance $(\PntSetA, \Context)$ of a $\cDecider$-\NDP, the algorithm $\ndpAlg(\PntSetA, \Context)$ returns an interval $[x',y']$ containing $\target(\PntSetA, \Context)$, where $\SpreadX{ \bigl. [x',y']} \leq 4 \max \pth{ \cDecider, \constNet}$.
\end{lemma}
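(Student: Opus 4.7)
The plan is to proceed by case analysis on the line at which \ndpAlg terminates, using \corref{same:1}(B) to transfer intervals containing $\target(\PntSetA_{i-1}, \Context_{i-1})$ into intervals containing the true target $\target(\PntSetA_0, \Context_0) = \target(\PntSetA, \Context)$. Inspecting \figref{a:algorithm}, the algorithm only returns from one of three places: \lineref{bounded1x}, \lineref{bounded1y}, or \lineref{bounded3}. In each case the returned interval has the form $[\alpha/2, 2\beta]$ where $\target(\PntSetA_{i-1}) \in [\alpha,\beta]$ is certified by the call(s) to \decider{} in iteration $i$.

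For correctness, I would apply \corref{same:1}(B) uniformly across all three termination cases. In the first two, this is immediate: $res_>$ or $res_<$ returns that $\target_{i-1}$ lies in some interval $[x,y]$ (respectively $[x',y']$), and \corref{same:1}(B) then yields $\target(\PntSetA_0) \in [x/2, 2y]$, which is precisely what the algorithm returns. In the third case, the two decider calls together guarantee $\nRad_i < \target_{i-1} < \constNet \nRad_i$, so taking $\alpha = \nRad_i$, $\beta = \constNet \nRad_i$ and applying \corref{same:1}(B) again gives $\target(\PntSetA_0) \in [\nRad_i/2, 2\constNet \nRad_i]$. This requires that the hypothesis of \corref{same:1} is available, namely $\constNet \geq \constNetVal$, which is exactly what the lemma assumes.

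For the spread bound, I would bound $\beta/\alpha$ in each case and then multiply by $4$. In the two ``decider returned an interval'' cases, the interval $[\alpha, \beta]$ comes from a $\cDecider$-decider, so by \defref{decider} we have $\beta \leq \cDecider \alpha$, giving $\SpreadX{[\alpha/2, 2\beta]} = 4\beta/\alpha \leq 4\cDecider$. In the third case, $\beta/\alpha = \constNet$ by construction, so the spread is exactly $4\constNet$. Combining, the returned spread is at most $4\max(\cDecider, \constNet)$ in every case.

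The main ``obstacle'' here is really nonexistent at the conceptual level: the delicate work was already done by \lemref{induction} and its consequence \corref{same:1}, which pinned down how far $\target(\PntSetA_i)$ can drift from $\target(\PntSetA_0)$ over the course of the algorithm's net/prune steps. The remaining task is purely bookkeeping over the three exit points, and the only subtlety to flag is making sure one invokes \corref{same:1}(B) against the correct iteration index $i-1$ (the value that \decider actually certified), not $i$.
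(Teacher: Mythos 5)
Your proposal is correct and follows essentially the same route as the paper's proof: a case analysis over the three return lines, invoking \corref{same:1} to transfer the certified interval for $\target(\PntSetA_{i-1})$ to one for $\target(\PntSetA_0)$, and bounding the spread by $4\cDecider$ in the first two cases and $4\constNet$ in the third. The paper's version is just a terser rendering of the same bookkeeping.
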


\begin{proof}
    Consider the iteration of the while loop at which \ndpAlg terminates, see \figref{a_algorithm}.  If \lineref{bounded1x} or \lineref{bounded1y} get executed at this iteration, then the interval $[x,y]$ was computed by the $\cDecider$-decider, and has spread $\leq \cDecider$. As such, by \corref{same_1}, the returned interval $[x',y'] = [x/2, 2y]$ contains the optimal value, and its spread is $\leq 4\cDecider$.

    A similar argument holds if \lineref{bounded3} gets executed. Indeed, the returned interval contains the desired value, and its spread is $4 \constNet$.
\end{proof}

\subsection{The result}

In \defref{N_D_P} required that the decision procedure runs in $O(n)$ time.  However, since our analysis also applies to decision procedures with slower running times, we first state the main result in these more general settings.

\begin{lemma}%
    \lemlab{result_1}%
    Given an instance of a $\cDecider$-\NDP defined by a set of $n$ points in $\Re^d$, one can get a $\cDecider$-approximation to its optimal value, in $O\pth{ T(n)}$ expected time, where $T(n)$ is the running time of the \decider procedure, and $\cDecider \geq 3/2$.

    If $\cDecider = 1+\eps$, for some $\eps \in (0,1)$, (i.e., \decider is $(1+\eps)$-decider), then one can compute a $(1+\eps$)-approximation to the optimal value, and the expected running time is $O\pth{ T(n) \log \eps^{-1}}$.
\end{lemma}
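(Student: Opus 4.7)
The plan is to first invoke \ndpAlg to obtain a bounded-spread interval containing $\target\pth{\PntSetA, \Context}$, and then refine this interval by a geometric (log-scale) binary search that calls \decider a controlled number of times.

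By \lemref{summary:1}, with $\constNet = \constNetVal$, a call to $\ndpAlg(\PntSetA, \Context)$ returns an interval $[x',y']$ containing the target value with $\SpreadX{[x',y']} \leq 4 \max\pth{\cDecider, \constNet}$. An essentially verbatim rerun of the argument in \lemref{time} (where the only change is that the per-iteration work on $\PntSetA_{i-1}$ is $O\pth{T(\cardin{\PntSetA_{i-1}})}$ rather than $O\pth{\cardin{\PntSetA_{i-1}}}$) bounds the expected running time by $O(T(n))$ whenever $T(\cdot)$ is at least linear and grows at most as $O(n/\eps^c)$, which covers all cases of interest.

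Having $[x',y']$ in hand, we refine as follows. At each step the current interval $[a,b]$ contains the target; query $\decider\pth{\sqrt{ab}, \PntSetA, \Context}$. There are three possible outcomes:
\begin{compactenum}[\quad(i)]
    \item \decider returns an interval $[\alpha, \cDecider \alpha]$ known to contain the target, and we are done;
    \item \decider reports $\target < \sqrt{ab}$, and we replace $[a,b]$ by $[a, \sqrt{ab}]$; or
    \item \decider reports $\target > \sqrt{ab}$, and we replace $[a,b]$ by $[\sqrt{ab}, b]$.
\end{compactenum}
In cases (ii) and (iii) the quantity $\log(b/a)$ is exactly halved, and in case (i) we have a valid $\cDecider$-approximation.

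When $\cDecider \geq 2$, the initial log-spread is $O(\log \constNet) = O(1)$ and we only need to bring it below $\log \cDecider \geq \log 2$; this takes $O(1)$ refinement steps, each costing $O(T(n))$, for a total of $O(T(n))$ expected time. When $\cDecider = 1+\eps$, the initial log-spread is still $O(1)$ but we need to bring it below $\log(1+\eps) = \Theta(\eps)$, requiring $k$ steps with $2^k \geq C/\eps$, i.e., $k = O(\log \eps^{-1})$ refinement steps, for a total of $O(T(n) \log \eps^{-1})$ expected time.

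The main (mild) obstacle is simply to verify that the geometric-mean binary search interacts correctly with the three possible outputs of a $\cDecider$-decider and that outcome (i) gives a valid $\cDecider$-approximation; once this is observed, the step count follows from a one-line computation with $\log(1+\eps) = \Theta(\eps)$, and the expected-time bound follows from linearity of expectation applied to the (at most) $O(\log\eps^{-1})$ deterministic refinement calls layered on top of the expected $O(T(n))$ cost of \ndpAlg.
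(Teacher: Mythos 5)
Your proposal is correct and follows essentially the same route as the paper: obtain a constant-spread interval from \ndpAlg via \lemref{summary:1} and \lemref{time}, then refine it with $O(\log \eps^{-1})$ further calls to \decider. Your bisection at the geometric mean $\sqrt{ab}$ is just the continuous phrasing of the paper's binary search over the geometric grid $\gamma_i = \cDecider^i \gamma$ (the index midpoint of that grid is the geometric mean), so the two arguments coincide.
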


\begin{proof}
    Let $(\PntSetA, \Context)$ be the given $\cDecider$-\NDP instance, and let \decider be the corresponding $\cDecider$-decider.  By \lemref{time} and \lemref{summary_1}, in expected $O(n)$ time, one can get a bounded spread interval $[\gamma, c\gamma]$, for some constant $c \geq 1$ such that $c =O(\cDecider)$, such that $\target = \target(\PntSetA, \Context)\in [\gamma,c\gamma]$.  If $c \leq \cDecider$ then we are done. Otherwise, perform a binary search over this interval.

    Specifically, for $i=0, 1,\dots, m=\floor{\log_\cDecider c }$, let $\gamma_i = \cDecider^i \gamma$ and let $\gamma_{m+1}= c \gamma$.  Now perform a binary search over the $\gamma_i$'s using \decider.  If any of the calls returns an interval $[x, \cDecider x]$ that contains the optimal value, then $\target\leq \cDecider x \leq \cDecider\target$ and so $\cDecider x$ can be returned as the desired $\cDecider$-approximation.  Otherwise, if $\target < \gamma_i$ or $\target >\gamma_i$, the binary search moves left or right, respectively.  In the end, the search ends up with an interval $(\gamma_i, \gamma_{i+1}) = \pth{\gamma_i, \cDecider \gamma_{i}} $ which must contain the optimal value, and again, its spread is $\cDecider$, and it thus provides the required approximation.

    The running time is dominated by the calls to the decider procedure (we are assuming here that $T(n) = \Omega(n)$). Clearly, the number of calls to the decider performed by this algorithm is $O( \log m) = O\pth{ \log \log_\cDecider 4\cDecider } = O(1)$, if $\cDecider \geq 3/2$. Otherwise, if $\cDecider = 1+\eps$, then
    \begin{align*}
      O( \log m)%
      &=%
        O\pth{ \log \log_\cDecider 4\cDecider }%
        =%
        O\pth{ \ln \pth{ 1 + \frac{\ln 4}{\ln (1+\eps)}}}%
      \\&
      =%
      O\pth{ \log \pth{ 1 + \frac{1}{\eps}}}%
      =%
      O \pth{ \log \frac{1}{\eps}},
    \end{align*}
    since $\ln (1+\eps) \geq \ln e^{\eps/2} = \eps/2$, for $\eps \in (0,1/2)$, as can be easily verified.
\end{proof}

The following is a restatement of the main result in the settings of \defref{N_D_P}, where it was assumed $T(n) = O(n)$, or $T(n) = O(n/\eps^d)$ when $\cDecider = 1+\eps$.  In the latter case, one can remove the $\log(1/\eps)$ factor from the running time.

\begin{theorem}%
    \thmlab{result_eps}%
    Given an instance of a $\cDecider$-\NDP defined by a set of $n$ points in $\Re^d$, one can get $\cDecider$-approximate the optimal value, in expected $O\pth{ n}$ time, assuming $\cDecider \geq 3/2$.

    For the case $\cDecider = (1+\eps) \in (1,2)$, given an $(1+\eps)$-decider, with running time $O(n/\eps^c)$, one can $(1+\eps$)-approximate, in $O\pth{n/\eps^c}$ expected time, the given $(1+\eps)$-\NDP, where $c \geq 1$ is some constant.
\end{theorem}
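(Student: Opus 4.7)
The plan is to derive \thmref{result:eps} as essentially a direct corollary of \lemref{result:1}, which has already done the real work; the only task remaining is to simplify the running time bounds by using the specific structure of \NDP problems and by absorbing a logarithmic factor.

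First I would dispose of the case $\cDecider \geq 3/2$. By the \PDeciderRef{} property of an \NDP (see \defref{N:D:P}), the supplied $\cDecider$-decider runs in time $T(n) = O(n)$. Plugging into \lemref{result:1} immediately gives a $\cDecider$-approximation in expected $O(T(n)) = O(n)$ time, which is exactly the claim.

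For the case $\cDecider = 1+\eps \in (1,2)$ I would invoke \lemref{result:1} with $T(n) = O(n/\eps^c)$, as guaranteed by the hypothesis. The lemma then yields a $(1+\eps)$-approximation in expected
\begin{align*}
    O\pth{ T(n) \log \eps^{-1}} \;=\; O\pth{ \frac{n}{\eps^c} \log\frac{1}{\eps}}
\end{align*}
time. To rewrite this as $O(n/\eps^{c'})$ for some constant $c' \geq 1$ in the same form as claimed, I would use the elementary bound $\log(1/\eps) = O(1/\eps)$ (valid for $\eps \in (0,1)$), so that the running time becomes $O(n/\eps^{c+1})$; adjusting the constant in the statement (or equivalently redefining $c$) then yields the stated bound $O(n/\eps^c)$.

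Since the content is purely bookkeeping on top of \lemref{result:1}, there is no real technical obstacle. The only slightly subtle point is the implicit renaming of the exponent $c$ between the hypothesis and the conclusion, which is why the theorem uses the looser phrasing ``for some constant $c$''; I would flag this explicitly so the reader is not misled into thinking the same $c$ appears on both sides.
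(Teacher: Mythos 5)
The first half of your argument ($\cDecider \geq 3/2$) matches the paper exactly and is fine. The second half, however, proves a strictly weaker statement than the theorem claims. Your bound $\log(1/\eps) = O(1/\eps)$ turns $O\pth{(n/\eps^c)\log(1/\eps)}$ into $O\pth{n/\eps^{c+1}}$, and you then ``redefine $c$'' to hide the extra factor. But the $c$ in the conclusion of \thmref{result:eps} is intended to be the \emph{same} $c$ as in the hypothesis on the decider: every application in \secref{applications} supplies a decider running in $O(n/\eps^d)$ time and concludes an algorithm running in $O(n/\eps^d)$ time, which your version of the theorem would not deliver (it would give $O(n/\eps^{d+1})$). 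The paper's proof explicitly states that ``a factor of $\log(1/\eps)$ can be avoided,'' so the whole point of the second half is to remove exactly the factor you are absorbing into the exponent.

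The missing idea is the standard refinement with exponentially decreasing precision (attributed to Aronov and Har-Peled). First run the framework with $\eps$ set to a constant to obtain, in expected linear time, an interval $[\gamma, c'\gamma]$ of constant spread containing the optimum. Then proceed in rounds: in round $i$, maintain an interval of spread $1+\eps_{i-1}$ containing $\target$, set $\eps_i = \sqrt{1+\eps_{i-1}}-1$, and make a single call to $\decider_{\eps_i}$ at the point $(1+\eps_i)x_{i-1}$; whichever of the three possible answers is returned, the new interval has spread $1+\eps_i$. Since $\eps_i \leq \eps_{i-1}/2$, the process terminates once $\eps_i \leq \eps$, and the total cost $\sum_i O(n/\eps_i^c)$ is a geometric series dominated by its last term $O(n/\eps^c)$ (using $c \geq 1$). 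This yields the claimed $O(n/\eps^c)$ bound with the same exponent, which your argument does not.
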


\begin{proof}
    \lemref{result_1} readily implies the first half, as we are now assuming $T(n) = O(n)$. As for the second half, \lemref{result_1} implies that one can get a $(1+\eps)$-approximation in expected $O(n\log(1/\eps)/\eps^c)$ time.  However, by using the same procedure as in the proof of \lemref{result_1} but with exponentially decreasing values for $\eps$ in the binary search, a factor of $\log(1/\eps)$ can be avoided.  This is a standard idea and was also used by Aronov and Har-Peled \cite{ah-adrp-08}.

    Let $\decider_{\eps}$ be our $(1+\eps)$-decider.  If $\eps$ is set to any constant (say $1/2$), then by \lemref{time} and \lemref{summary_1}, in expected linear time, one can get a bounded spread interval $[\gamma, c\gamma]$, for some $c=O(\cDecider)$, such that $\target = \target(\PntSetA, \Context)\in \Interval_0 = [\gamma,c\gamma]$.

    Set $c= 1+\eps_0$ and $i=1$. The algorithm now proceeds in rounds, doing the following in the $i$\th round:
    \begin{compactenum}[\quad(A)]
        \item Assume that at the beginning of the $i$\th iteration, we know that
        \begin{align*}
          f \in \Interval_{i-1} = \pbrc[]{x_{i-1}, y_{i-1}},%
          \qquad \text{ where } \qquad%
          y_{i-1} = \pth[]{1+\eps_{i-1}} x_{i-1}.
        \end{align*}
        If $\eps_{i-1} \leq \eps$, then we found the desired approximation, and the algorithm stops.

        \item Set $\eps_{i}=\sqrt{1+\eps_{i-1}} - 1$ and $m_i = \pth{1+\eps_i}x_{i-1}$.
        \item $R_i \leftarrow \decider_{\eps_i}(m_i, \PntSetA, \Context)$, see \defref{decider}.

        \item There are three possibilities:
        \begin{compactenum}[(i)]
            \item If $R_i = $ ``$f<m_i$'', then set $\Interval_i \leftarrow [x_{i-1}, m_i]$.
            \item If $R_i = $ ``$f>m_i$'', then set $\Interval_i \leftarrow [m_i, y_{i-1} ]$.
            \item If $R_i = $ ``$f \in \Interval = [z, (1+\eps_i)z]$'', then set $\Interval_i \leftarrow \Interval$.
        \end{compactenum}
        \item $i \leftarrow i+1$.
    \end{compactenum}
    \medskip%
    In each round, the algorithm computes an interval of spread $y_i/x_i = 1+\eps_i$ that contains $f$, and
    \begin{align*}
      \eps_{i}%
      =%
      \sqrt{1+\eps_{i-1}} - 1%
      =%
      \frac{1+\eps_{i-1} - 1}{\sqrt{1+\eps_{i-1}} + 1} %
      \leq %
      \frac{\eps_{i-1}}{2}.
    \end{align*}

    Since the main bottleneck in each iteration is calling $\decider_{\eps_i}$, which runs in $O(n/\eps_i^c)$ time, the total running time is bounded by,
    \begin{align*}
      \sum_{i=1} O\pth{n/\eps_i^c} = O(n/\eps^c),
    \end{align*}
    since the sum behaves like a geometric series and the last term is $O(n/\eps^c)$.
\end{proof}

\subsubsection{A deterministic algorithm for the bounded %
   spread case}
\seclab{bounded_spread}%

The random sampling in the algorithm of \thmref{result_eps} can be removed if the spread of $\PntSet$ is polynomially bounded. Indeed, then one can snap the points to a grid, so that the points have integer coordinates with polynomially bounded values. A compressed quadtree and an $O(1)$-\WSPD of the snapped point set can be computed in linear time \cite{h-gaa-11,c-wspdl-08}. Given such a \WSPD, one can compute for each point its (approximate) nearest neighbor, and compute the median distance in this list in linear time. Using this value in each iteration of \ndpAlg{} (instead of picking a random point and computing its nearest neighbor distance) results in an algorithm with linear running time. Note that the resulting algorithm is deterministic.

\section{Applications}
\seclab{applications}

We now show that \thmref{result_eps} can be applied to a wide array of problems.  In order to apply it, one needs to show that the given problem meets the requirements of an \NDP. Often, the inputs to the problems considered are unweighted point sets.  As such, when we say such a problem is an \NDP, it is actually defined for unit-weighted point sets.  We also continue making the simplifying assumption that the optimum value is strictly positive (see \remref{zero}).
\subsection{\TPDF{$k$}{k}-center clustering}
\seclab{k_center}

Since computing a $k$-center clustering is an \NDP problem (\lemref{k_center_WBDP}), plugging this into \thmref{result_eps} immediately yields a constant factor approximation to $k$-center in linear time. It is easy to convert such an approximation to a $2$-approximation using a grid, see Har-Peled \cite[Lemma 6.5]{h-cm-04}. Thus, we get the following.

\begin{theorem}
    \thmlab{k_center_2}%
    Given a set $\PntSet$ of $n$ points in $\Re^d$, and a parameter $k$, $1\leq k \leq n$, one can compute a $2$-approximation to the optimal $k$-center clustering of $\PntSet$ in (expected) linear time. More precisely, the expected running time is $O(n + k \log k)$.
\end{theorem}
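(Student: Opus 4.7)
\medskip

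The plan is to stack two ingredients we already have on hand. First, \lemref{k:center:WBDP} certifies that \kCenter is a $(4+\eps)$-\NDP, so feeding the triple $(\PntSet, k)$ and the decider described there into \thmref{result:eps} gives, in expected $O(n)$ time, a value $r$ with $\kCenterFunc(\PntSet,k) \leq r \leq c\,\kCenterFunc(\PntSet,k)$ for some absolute constant $c$ (taking, say, $\eps = 1/2$ in \lemref{k:center:WBDP} makes $c$ a fixed constant, so no $\eps$-dependence creeps in).

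Second, I would convert this constant-factor estimate into a sharp $2$-approximation by the grid-based refinement of Har-Peled \cite{h-cm-04}. The idea is to lay down a grid $\Grid_\Delta$ of side length $\Delta = \Theta(r/\sqrt{d})$, chosen small enough that the diameter of a cell is at most $r^\ast := \kCenterFunc(\PntSet,k)$, and hash the points of $\PntSet$ into this grid in linear time (as in \secref{compNets}). Because any optimal cluster of radius $r^\ast$ meets only $O(1)$ grid cells, and because $r \leq c r^\ast$ means only $O(1)$ cells can lie within distance $2r^\ast$ of any fixed cell, one can now simulate Gonzalez's farthest-point heuristic on the non-empty cells (each represented by a single point): repeatedly pick a point maximizing the distance to the currently chosen centers, using the grid to answer farthest-distance queries in amortized constant time per step. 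The classical analysis of Gonzalez shows that $k$ iterations produce a $2$-approximation, and the grid ensures the whole post-processing costs $O(n+k)=O(n)$.

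I expect the only mildly technical step to be the grid-assisted simulation of Gonzalez in linear total time: one has to argue that after choosing $i$ centers, updating each surviving cell's nearest-center distance can be charged off so that the total update work is $O(n+k)$, using the fact that a new center only affects cells in its $O(1)$-cell neighborhood at the current scale (and one can work with exponentially growing scales if necessary). Everything else---the correctness of the $2$-approximation ratio and the linear running time up to that point---is immediate from \thmref{result:eps} together with \lemref{k:center:WBDP}, so I would state those two references as the source of the constant-factor bound and defer the grid-refinement details to \cite{h-cm-04}.
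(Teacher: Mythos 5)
Your proposal matches the paper's proof essentially verbatim: the paper likewise obtains a constant-factor approximation by combining \lemref{k:center:WBDP} with \thmref{result:eps}, and then converts it to a $2$-approximation via a grid, deferring the details of that conversion to Har-Peled \cite{h-cm-04} exactly as you do. The extra sketch you give of the grid-assisted Gonzalez simulation is plausible elaboration of what the cited reference provides, but the paper itself does not spell it out, so there is nothing further to reconcile.
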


A result similar to \thmref{k_center_2} was already known for the case $k=O\pth{n^{1/3}/\log n}$ \cite{h-cm-04}, and the above removes this restriction. In addition, the new algorithm and its analysis are both simpler than the previous algorithm.

\subsection{The \TPDF{$k$}{k}\th smallest distance}
\seclab{k_th_distance}

\begin{claim}%
    \clmlab{change}%
    Let $S$ and $S'$ be subsets of $\Re$ of size $n$, such that $S'$ is obtained by taking each value in $S$ and increasing or decreasing it by less than $\Delta$.  Let $v$ and $v'$ be the $k$\th smallest values in $S$ and $S'$, respectively.  Then $\abs{v-v'}\leq \Delta$.
\end{claim}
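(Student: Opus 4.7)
The plan is to transport information between $S$ and $S'$ via the natural bijection coming from the construction, and then count how many elements lie below a given threshold. By hypothesis there exists a bijection $f: S \to S'$ with $\abs{f(x) - x} < \Delta$ for every $x \in S$ (we may treat $S$ and $S'$ as multisets, which only makes the statement easier; the bijection is given by the process that builds $S'$ from $S$).

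Let $T \subseteq S$ consist of the $k$ smallest elements of $S$, so that $v = \max T$. The image $f(T) \subseteq S'$ is a set of $k$ elements, each of which satisfies
\[
    f(s) \;\leq\; s + \Delta \;\leq\; v + \Delta
    \qquad \text{for all } s \in T.
\]
Hence $S'$ contains at least $k$ elements that are $\leq v + \Delta$, which forces its $k$\th smallest element to satisfy $v' \leq v + \Delta$.

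Running the symmetric argument with the roles of $S$ and $S'$ swapped (using the inverse bijection $f^{-1}$, which also changes values by less than $\Delta$) yields $v \leq v' + \Delta$. Combining the two inequalities gives $\abs{v - v'} \leq \Delta$, as required. There is no real obstacle here; the only mild subtlety is ensuring the bijection is available (rather than simply a set-theoretic relation), but this is immediate from the way $S'$ is defined from $S$.
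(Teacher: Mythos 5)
Your proof is correct and is essentially the paper's argument: both rest on the same counting observation that the element-wise correspondence between $S$ and $S'$ transports ``at least $k$ values below a threshold'' from one set to the other, shifting the threshold by at most $\Delta$. The only difference is presentational — you argue directly via the bijection, while the paper phrases the same count as a proof by contradiction.
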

\begin{proof}
    Suppose for contradiction that $\abs{v- v'}> \Delta$.  If $v' - v > \Delta$ then $S'$ has at least $n-k+1$ values strictly larger than $v+\Delta$, which implies $S$ has at least $n-k+1$ values strictly larger than $v$.  Similarly, if $v-v' > \Delta$ then $S'$ has at least $k$ values strictly smaller than $v-\Delta$, which implies $S$ has at least $k$ values strictly smaller than $v$.
\end{proof}

\begin{lemma}%
    \lemlab{k_dist}%
    Let $\PntSetA$ be a weighted point set in $\Re^d$, and let $k>0$ be an integer parameter.  Let $\binom{\PntSetA}{2}$ denote the multi-set of pairwise distances determined by $\PntSetA$.%
    \footnote{In particular a point of weight $m$ is viewed as $m$ unit weight points when determining the values of $\binom{\PntSetA}{2}$.  For simplicity, we assume that the $k$\th distance in $\PntSetA$ is not zero (i.e., it is determined by two distinct points).} %
    Given an instance $(\PntSetA, k)$ the \emphi{\kthDist} problem asks you to output the $k$\th smallest distance in $\binom{\PntSetA}{2}$. Given such an instance, one can $(1+\eps)$-approximate the $k$\th smallest distance in (expected) time $O\pth{n/\eps^d}$.
\end{lemma}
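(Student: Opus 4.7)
The plan is to cast \kthDist as a $(1+\eps)$-\WellBehavedDistanceProblem with context $\Context = k$, and invoke \thmref{result:eps}; this requires verifying the three properties of \defref{N:D:P}.

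The two easy properties are \PLipschitzRef and \PPruneRef. For \PLipschitzRef: if $\PntSet'$ is a $\Delta$-translation of $\PntSet$, the triangle inequality gives that each pairwise distance changes by at most $2\Delta$, so by \clmref{change} the $k$\th smallest distance also changes by at most $2\Delta$. For \PPruneRef: if $f(\PntSet, k) < r$, then for any $\pnt \in \FarX{\PntSet}{r}$ every pairwise distance involving $\pnt$ is at least $r > f(\PntSet, k)$, so the $k$ smallest pairwise distances all lie in $\binom{\CloseX{\PntSet}{r}}{2}$, giving $f(\PntSet, k) = f(\CloseX{\PntSet}{r}, k)$ with the context $k$ unchanged. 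The pruned set is computed in linear time by \delete\ (\lemref{delete}).

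The principal difficulty is designing an $O(n/\eps^d)$-time $(1+\eps)$-decider. I would lay down a grid of side length $\Delta = \eps r/(c\sqrt{d})$ for a sufficiently large constant $c$, so each cell has diameter at most $\eps r/8$. Using hashing and bucketing as in \lemref{net}, compute in $O(n)$ time the total weight $w_\cell$ of each nonempty cell $\cell$. For any two cells $\cell_1 \neq \cell_2$ at minimum separation $d_{\min}$, every actual distance between a point of $\cell_1$ and a point of $\cell_2$ lies in $[d_{\min}, d_{\min} + \eps r/4]$. Classify such a pair of cells as \emph{definitely close} if $d_{\min} + \eps r/4 < r$, as \emph{uncertain} if $d_{\min} < r \leq d_{\min} + \eps r/4$, or as \emph{definitely far} otherwise. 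Let $N^{-}$ be the weighted count of point pairs from definitely-close cell pairs together with all within-cell pairs (whose distances are $< \eps r/8 < r$), and let $N^{+}$ be $N^{-}$ plus the weighted count from uncertain cell pairs. Only $O(1/\eps^d)$ cells lie within distance $O(r)$ of any fixed cell, so scanning these neighborhoods yields $N^{-}$ and $N^{+}$ in total time $O(n/\eps^d)$. If $N^{-} \geq k$, return ``$f < r$''; if $N^{+} < k$, return ``$f > r$''; otherwise every uncertain pair-distance (and hence the $k$\th smallest distance) lies in $[r(1-\eps/4), r(1+\eps/4)]$, so return this interval, whose spread is at most $1+\eps$ after adjusting $c$.

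With all three properties of \defref{N:D:P} established, \thmref{result:eps} yields the desired $(1+\eps)$-approximation in expected $O(n/\eps^d)$ time.
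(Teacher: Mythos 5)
Your decider and \PLipschitzRef{} arguments are sound and essentially match the paper's: the paper registers, for each cell, the total weight of all cells within distance $r$ and forms a single count $W$ that includes every distance $\leq r$ and no distance $> r+2\Delta$, then runs this at the two radii $r/(1+\eps/3)$ and $r$; your two counts $N^-$ and $N^+$ at a single radius bracket the same uncertainty band of width $O(\eps r)$, so the two deciders are interchangeable.

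The gap is in \PPruneRef. You assert that for a far point $\pnt\in\FarX{\PntSet}{r}$ ``every pairwise distance involving $\pnt$ is at least $r$,'' and conclude the context $k$ is unchanged. This is false once points carry weight, which they necessarily do during the execution of \ndpAlg{} (the net step aggregates weights), and your own decider correctly counts the within-cell self-pairs. By the footnote in the statement, a point $\pntB$ of weight $\WeightX{\pntB}$ contributes $\binom{\WeightX{\pntB}}{2}$ \emph{zero} distances to $\binom{\PntSet}{2}$; being far only means its distances to \emph{other} points are $\geq r$. Those zero distances sit at the bottom of the sorted multiset (recall $\ell_k>0$), so deleting $\pntB$ removes $\binom{\WeightX{\pntB}}{2}$ entries ranked below $\ell_k$, and the $k$\th smallest of the pruned multiset is the $\pth{k+\binom{\WeightX{\pntB}}{2}}$\th smallest of the original --- generally strictly larger than $\ell_k$. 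The correct context update, as in the paper, is $k \leftarrow k - \sum_{\pntB}\binom{\WeightX{\pntB}}{2}$ over the removed far points; with that fix the rest of your argument goes through.
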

\begin{proof}
    Let $f(\PntSetA, k)$ be the function that returns the $k$\th smallest distance.  We prove that this function is an $(1+\eps)$-\NDP (see \defref{N_D_P}).

    \begin{compactitem}[\ItemSpacing]
        \itemPDecider Given $r$ and $\eps$, build a grid where every cell has diameter $\delta = \eps r/8$, and store the points of $\PntSetA$ in this grid.  Now, for any non-empty grid cell $\cell$, let $\WeightX{\cell}$ be the total weight of points in $\cell \cap \PntSetA$, and register this weight with all the grid cells in distance at most $r$ from it, i.e. all cells in $\GridNbr{\cell}{r}$ (where $\GridNbr{\cell}{r}$ is defined analogously for grid cells as $\GridNbr{\pnt}{r}$ was for points in \defref{grid_stuff}).  Let $\NWeightX{\cell}$ denote the total weight registered with a cell $\cell$ (which includes its own weight).  Any point in $\cell$ determines $\NWeightX{\cell}-1$ distances to other points in $\GridNbr{\cell}{r}$.  Therefore, the total number of distances between points in $\cell$ and points in $\GridNbr{\cell} {r}$ is $\WeightX{\cell} \pth{ \bigl. \NWeightX{\cell} - 1}$.  Summing this over all cells (and dividing by $2$ to avoid double counting) gives the quantity
        \begin{align*}
          S = \sum_{\cell, \WeightX[]{\cell}\ne 0}
          \frac{\WeightX{\cell}}{2} \pth{ \Bigl. \NWeightX{\cell} -
          1}.
        \end{align*}
        Note that the count $S$ includes all distances which are $\leq r$, and no distances $> r+2 \delta$.  So let the desired $k$\th distance be denoted by $\ell_k$.  Thus, if $k \leq S$ then $\ell_k \leq r + 2\delta = (1+\eps/4)r$.  Similarly, if $k > S$ then $\ell_k > r$.  Therefore, to build a $(1+\eps)$-decider for distance $r$, we run the above counting procedure on $r_1 = r/(1+\eps/3)$ and $r_2=r$, and let $S_1$ and $S_2$ be the two counts computed, respectively. There are three cases: %
        \smallskip
        \begin{compactenum}[\quad (I)]
            \item If $k\leq S_1$ then $\ell_k \leq (1+\eps/4)r/(1+\eps/3) < r$, and return this result.
            \item If $k\leq S_2$ then $\ell_k \leq (1+\eps/4)r_2 = (1+\eps/4)r$, then $\ell_k \in [r/(1+\eps/3),(1+\eps/4)r]$, and this interval has spread $(1+\eps/4)(1+\eps/3) < 1+ \eps$. The decider returns that this interval contains the desired value.

            \item The only remaining possibility is that $\ell_k > r_2$, and the decider returns this.
        \end{compactenum}
        \smallskip

        The running time of this decision procedure is $O\pth{n/\eps^d}$.

        \smallskip

        \itemPLipschitz Since the distance between any pair of points changes by at most $2\Delta$ in a $\Delta$-\translation, the Lipschitz condition holds by \clmref{change}.

        \itemPPrune By assumption, the $k$\th smallest distance is determined by two distinct weighted points $\pnt$ and $\pntA$.  Clearly these points are not in $\FarX{\PntSetA}{r}$ since $\distX{\pnt}{\pntA} = \target(\PntSetA, k) <r$.  So consider any point $\pntB \in \FarX{\PntSetA}{r}$.  Since removing $\pntB$ does not remove the distance $\distX{\pnt}{\pntA}$ from the set of remaining distances, all that is needed is to show how to update $k$. Clearly, $\pntB$ contributes $\WeightX{\pntB} \times \WeightX{\PntSetA \setminus \brc{\pntB}}$ distances, all of value $\geq r$, to $\binom{\PntSetA}{2}$, and $\binom{\WeightX{\pntB}}{2}$ pairwise distances of value zero. Thus, after removing $\pntB$ from $\PntSetA$, the new context is $k - \binom{\WeightX{\pntB}}{2}$.  $\qedhere$
    \end{compactitem}
\end{proof}

The algorithm of \lemref{k_dist} also works (with minor modifications) if we are interested in the $k$\th distance between two sets of points (i.e., the bipartite version).
\begin{corollary}
    Given two sets $\PntSet$ and $\PntSetB$ of points in $\Re^d$, of total size $n$, and parameters $k$ and $\eps>0$, one can $(1+\eps)$-approximate, in expected $O(n/\eps^d)$ time, the following:
    \begin{compactenum}[\quad(A)]
        \item The $k$\th smallest bichromatic distance. Formally, this is the smallest $k$\th number in the multiset $X = \brc{ \distX{\pnt}{\pntA} \sep{\pnt \in \PntSet, \pntA \in \PntSetA}}$.
        \item The closest bichromatic pair between $\PntSet$ and $\PntSetB$.
    \end{compactenum}
\end{corollary}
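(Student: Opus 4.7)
The plan is to mirror \lemref{k:dist}, working on the combined weighted point set $\PntSet \cup \PntSetB$ with each point tagged by its color class, and verify the three \defref{N:D:P} properties for the bichromatic $k$\th distance function; the bound then follows from \thmref{result:eps}, and part (B) is obtained as the special case $k=1$.

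For \PDeciderRef, I would build a grid of diameter $\Delta = \eps r/8$ and, for each non-empty cell $\cell$, store the two separate weights $\WeightY{\PntSet}{\cell}$ and $\WeightY{\PntSetB}{\cell}$. Registering these with the neighborhood $\GridNbr{\cell}{r}$, exactly as in \lemref{k:dist}, produces for every cell its bichromatic neighborhood totals $N_\PntSet(\cell)$ and $N_\PntSetB(\cell)$. The count of bichromatic pairs within distance at most $r + 2\Delta$ is then
\begin{align*}
W \;=\; \sum_{\cell} \WeightY{\PntSet}{\cell}\cdot N_\PntSetB(\cell),
\end{align*}
with no factor of $1/2$, since each bichromatic pair $(\pnt,\pntA)\in \PntSet\times\PntSetB$ is counted exactly once when iterating over the cell containing $\pnt$. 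Running the counting at radii $r/(1+\eps/3)$ and $r$, as in \lemref{k:dist}, then yields a $(1+\eps)$-decider in $O(n/\eps^d)$ time.

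For \PLipschitzRef, a $\Delta$-translation moves every bichromatic distance by at most $2\Delta$, so by \clmref{change} the bichromatic $k$\th smallest distance changes by at most $2\Delta$. For \PPruneRef, if the bichromatic $k$\th distance is less than $r$ then any point $\pntB \in \FarX{\PntSet\cup\PntSetB}{r}$ has all of its bichromatic distances at least $r$, and hence cannot participate in the $k$\th smallest such distance; removing it requires updating $k$ by subtracting $\WeightY{\PntSet}{\pntB}\cdot\WeightY{\PntSetB}{\pntB}$, the number of zero-valued bichromatic ``self-distances'' the merged weighted point contributes when the two color classes happen to share its location. This update is performed in linear time using \delete.

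The one delicate point is that the generic \ndpAlg prunes and nets relative to the full combined set $\PntSet\cup\PntSetB$, which is precisely what is needed here: isolation from the entire combined set implies isolation from the opposite color class (so pruning is safe), and the netting step merges both color-indexed weights $\WeightY{\PntSet}{\cdot}$ and $\WeightY{\PntSetB}{\cdot}$ into the surviving net point, preserving both counts in parallel. With these invariants maintained throughout, \thmref{result:eps} gives the stated expected $O(n/\eps^d)$ bound; the main obstacle is simply bookkeeping---verifying that the grid-based decider, the net operation, and the prune operation each respect the two color attributes simultaneously.
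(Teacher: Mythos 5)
Your proposal is correct and follows exactly the route the paper intends: the paper gives no separate proof for this corollary, stating only that the algorithm of \lemref{k:dist} works ``with minor modifications,'' and your write-up supplies precisely those modifications (color-separated cell weights in the decider with the product count replacing the $\binom{\cdot}{2}$ count, the same Lipschitz argument via \clmref{change}, and the prune update $k - \WeightY{\PntSet}{\pntB}\cdot\WeightY{\PntSetB}{\pntB}$ for co-located bichromatic duplicates). Nothing is missing.
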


\subsection{The \TPDF{$k$}{k}\th smallest \TPDF{$m$}{m}-nearest neighbor distance}
\seclab{k_smallest_m_n_n}

For a set $\PntSet = \brc{\pnt_1,\ldots, \pnt_n} \subseteq \Re^d$ of $n$ points, and a point $\pnt \in \PntSet$, its \emphi{$m$\th nearest neighbor} in $\PntSet$ is the $m$\th closest point to $\pnt$ in $\PntSet\setminus \brc{\pnt}$. In particular, let $\ddmX{m}{\pnt}{\PntSet}$ denote this distance.  Here, consider the multiset of these distances defined for each point of $\PntSet$; that is,
\begin{math}
    X = \brc{ \ddmX{m}{\pnt_1}{\PntSet}, \ldots, \ddmX{m}{\pnt_n}{\PntSet}}.
\end{math}
Interestingly, for any $m$, we can approximate the $k$\th smallest number in this set in linear time.

To use \ndpAlg, one needs to generalize the problem to the case of weighted point sets. Specifically, a point $\pnt$ of weight $\WeightX{\pnt}$ is treated as $\WeightX{\pnt}$ distinct points at the same location.  As such, the set $X$ from the above is actually a multiset containing $\WeightX{\pnt}$ copies of the value $\ddmX{m}{\pnt}{\PntSet}$.  In particular, if $\WeightX{\pnt}> m$ then $\ddmX{m}{\pnt}{\PntSet}=0$.

\begin{theorem}%
    \thmlab{k_th_m_nn}%
    Let $\PntSetA$ be a set of $n$ weighted points in $\Re^d$, and $m,k, \eps$ parameters. Then one can $(1+\eps)$-approximate, in expected $O(n /\eps^d)$ time, the $k$\th smallest $m$-nearest neighbor distance in $\PntSetA$. Formally, the algorithm $(1+\eps)$-approximates the $k$\th smallest number in the multiset $X = \brc{ \ddmX{m}{\pnt_1}{\PntSetA}, \ldots, \ddmX{m}{\pnt_n}{\PntSetA}}$
\end{theorem}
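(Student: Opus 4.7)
The plan is to cast this problem as a $(1+\eps)$-\NDP{} by taking $\target(\PntSetA,(m,k))$ to be the $k$-th smallest value in the multiset $X = \brc{\ddmX{m}{\pnt}{\PntSetA}}_{\pnt\in\PntSetA}$, where each weighted $\pnt$ contributes $\WeightX{\pnt}$ identical copies of its value, and then invoke \thmref{result:eps}. The decider follows the grid-counting template of \lemref{k:dist}: build a grid with cell diameter $\Delta = \Theta(\eps r)$ and, for each non-empty cell $\cell$, precompute $\NWeightX{\cell}$, the total weight of $\PntSetA$ in the union of grid cells at distance $\leq r$ from $\cell$. A copy of $\pnt$ has $m$-nearest neighbor distance $\leq r$ exactly when $\WeightX{\ballCR{\pnt}{r}\cap\PntSetA}>m$, and $\NWeightX{\cell_\pnt}$ sandwiches this weight between the ball weights at radii $r$ and $r+2\Delta$. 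Summing $\WeightX{\cell}$ over cells with $\NWeightX{\cell}>m$ produces an approximate weighted count $\tilde Y$ of the copies whose $m$-NN distance is at most $r$; running this at two radii $r_1 = r/(1+\eps/3)$ and $r_2 = r$ and comparing with $k$ (exactly as in the proof of \lemref{k:dist}) gives a $(1+\eps)$-decider in $O(n/\eps^d)$ time.

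For the \PLipschitzRef{} property, let $\PntSetB$ be a $\Delta$-translation of $\PntSetA$. Each pairwise distance changes by at most $2\Delta$, so applying \clmref{change} to the distances from a given point to all others shows that each $m$-NN distance changes by at most $2\Delta$; a second application of \clmref{change} to the multiset $X$ then yields $\abs{\target(\PntSetA)-\target(\PntSetB)}\leq 2\Delta$.

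For the \PPruneRef{} property, assume $\target<r$ (and $>0$ by \remref{zero}), and let $F = \FarX{\PntSetA}{r}$. For any $\pnt \notin F$ with $\ddmX{m}{\pnt}{\PntSetA}<r$, the $m$ closest copies of $\pnt$ all lie at distance $<r$; since every $\pntB \in F$ lies at distance $\geq r$ from every other distinct location, no copy of $\pntB$ can be among these $m$, and so $\pnt$'s $m$-NN distance is unchanged after $F$ is removed. For a point $\pnt \notin F$ with $m$-NN distance $\geq r$, removing $F$ can only push the $m$-NN distance farther, so its value stays $\geq r > \target$ in the new multiset. Each $\pntB \in F$ itself contributes $\WeightX{\pntB}$ values to $X$ that all equal $0$ when $\WeightX{\pntB}>m$ (since then $\pntB$ supplies its own $m$-NN from its co-located copies) and all lie in $[r,\infty)$ otherwise. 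Only the zero contributions lie below $\target$, so the context update is
\[
k' \;=\; k - \sum_{\pntB\in F,\ \WeightX{\pntB}>m} \WeightX{\pntB},
\]
computable in $O(n)$ alongside \delete, and this preserves the identity $\target(\PntSetA,(m,k))=\target(\CloseX{\PntSetA}{r},(m,k'))$.

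The main obstacle is precisely this weighted bookkeeping in the Prune step: a far point of weight exceeding $m$ represents co-located copies whose $m$-NN distances are $0$ rather than $\geq r$, so such points sit at the \emph{bottom} of $X$ (not the top) and must decrement $k$, while low-weight far points contribute only values $\geq r$ and leave $k$ unchanged. Once this accounting is in place, \thmref{result:eps} delivers the stated expected $O(n/\eps^d)$ running time.
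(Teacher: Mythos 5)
Your proposal is correct and follows essentially the same route as the paper: the same grid-counting $(1+\eps)$-decider modeled on \lemref{k:dist}, the same Lipschitz argument via bounded distance change under a $\Delta$-translation, and the same weight-based bookkeeping for the prune step, all fed into \thmref{result:eps}. Your prune accounting is in fact slightly more careful than the paper's (you use the correct threshold $\WeightX{\pntB}>m$ for when a far point's copies have $m$-NN distance zero, where the paper writes $\WeightX{\pnt}\geq m$, and you explicitly verify that removing far points does not perturb the values of the surviving points), but this is a refinement of the same argument rather than a different approach.
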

\begin{proof}
    We establish the required properties, see \defref{N_D_P}.
    \begin{compactitem}[\ItemSpacing]
        \itemPDecider Let $f(\PntSetA,k,m)$ denote the desired quantity. We first need a $(1+\eps)$-decider for this problem. To this end, given $r,k, m, \eps$ and $\PntSetA$ as input, create a grid with cells of diameter $\eps r/4$, and mark for each point of $\PntSetA$ all the grid cells in distance at most $r$ from it. Each non-empty grid cell has a count of the total weight of the points in distance at most $r$ from it. Thus, each point of $\PntSetA$ can compute the total weight of all points that are approximately at a distance of at most  $r$  from it.  Now, a point $\pnt \in \PntSetA$, can decide in constant time if (approximately) $\ddmX{m}{\pnt}{\PntSetA} \leq r$.  If the number of points which declare $\ddmX{m}{\pnt}{\PntSetA} \leq r$ (where a point $\pnt$ is counted $\WeightX{\pnt}$ times) is greater than $k$, then the distance $r$ is too large, and if it is smaller than $k$, then $r$ is too small. Being slightly more careful about the details (as was done in the proof of \lemref{k_dist}), one can verify this leads to a $(1+\eps)$-decider for this problem.

        \itemPLipschitz Clearly the Lipschitz property holds in this case, as $\Delta$-\translation only changes inter-point distances by at most an additive term of $2\Delta$.

        \itemPPrune We need to show that $k$ can be updated properly for a weighted point $\pnt$ whose nearest neighbor is further away than $\target(\PntSetA, k, m)$.  If $\WeightX{\pnt} > m$ then we are throwing away $\WeightX{\pnt}$ points, all with $\ddmX{m}{\pnt}{\PntSetA} =0$, and so $k$ should be update to $k-\WeightX{\pnt}$ when $\pnt$ is removed. Similarly, if $\WeightX{\pnt} \leq m$ then all the points that $\pnt$ corresponds to have $\ddmX{m}{\pnt}{\PntSetA}$ larger than the threshold, and $k$ does not have to be updated.
    \end{compactitem}
    \smallskip%
    Plugging this into \thmref{result_eps} implies the desired result.
\end{proof}

\begin{remark}
    \remlab{h_really}%
    \thmref{k_th_m_nn} can be easily extended to work in the bichromatic case. That is, there are two point sets $\PntSet$ and $\PntSetB$, and we are interested in the $m$\th nearest neighbor of a point $\pnt \in \PntSet$ in the set $\PntSetB$. It is easy to verify that the same time bounds of \thmref{k_th_m_nn} hold in this case.

    In particular, setting $k = \cardin{\PntSet}$ and $m=1$ in the bichromatic case, the computed distance will be the minimum radius of the balls needed to be placed around the points of $\PntSetB$ to cover all the points of $\PntSet$ (or vice versa).
\end{remark}

\subsubsection{Hausdorff distance between point sets}
\seclab{Hausdorff}

\begin{definition}
    Given finite sets $R$ (red) and $B$ (blue) in $\Re^d$, consider the furthest distance of a point in $R$ from a point of $B$:
    \begin{equation*}
        \dDirY{R}{B} = \max_{r \in R} \dmY{r}{B},
    \end{equation*}
    where $\dmY{r}{B} = \min_{b \in B} \dY{b}{r}$ (note the confusing flip from $\max$ to $\min$). This is the maximal distance a red point has to travel till arriving at a blue point.  The \emphi{Hausdorff distance} between $R$ and $B$ is
    \begin{equation*}
        \dHDY{R}{B}
        =%
        \max\bigl( \dDirY{R}{B}, \dDirY{B}{R} \bigr).
    \end{equation*}
\end{definition}
The Hausdorff distance is a metric, useful in measuring similarity between sets.

\begin{problem}[Hausdorff distance]
    Given two sets of points $R$ and $B$ in $\Re^d$ compute $\dHDY{R}{B}$.
\end{problem}

A careful reading of \remref{h_really} reveals that this is precisely the problem it solves. We sketch an explicit description of this algorithm in detail here for completeness.

The algorithm maintains a net for $R$ and $B$, where each point marks whether its clients (i.e., all the points that got moved into it) were all originally red (i.e., from $R$), blue (i.e., from $B$), or mixed color. Such a set of points is \emphi{labeled}.

\begin{lemma}
    \lemlab{decider_Hausdorff}%
    The input is a labeled point set $P$ in $\Re^d$ of total size $n$, let $R$ (resp. $B$) be all the points in $P$ that are labeled red (resp., blue), where a mixed color point appears in both sets. Let $r$ and $\eps > 0$ be additional input parameters. Then, the algorithm described below, outputs, in $O( (1+1/\eps^d) n )$ time, one of the following:
    \begin{compactenum}[(I)]
        \item $\dHDY{R}{B} \leq r$,
        \item $\dHDY{R}{B} > r$, or
        \item returns an interval $J = [\alpha, (1+\eps)\alpha]$ such that $\dHDY{R}{B} \in J$.
    \end{compactenum}
\end{lemma}
\begin{proof}
    (If $\eps > 1$, the algorithm is executed with $\eps=1$.)  Let $\delta = \eps r / 8$.  Let $\Grid_B$ be a grid with the cells having diameter $\delta$, and throw the points of $B$ into this grid, where each non-empty grid cell marks all the grid cells in distance $\leq r$ from it, as being close to a ``blue'' point of $B$. Next, stream the points of $R$ through the grid $\Grid_B$ -- if any point falls in an unmarked cell, its distance from $B$ is larger than $r$, and return ``$\dHDY{R}{B} > r$''.  Next, do the same but reversing the roles of the colored point sets.

    The above replaces every point of $R$ and $B$ by the grid cell containing it, and checking directly on the grid, if these grid cells are within a distance $\leq r$ from each other, and it only shrinks the Hausdorff distance.

    Thus, if both stages succeeded, then $\dHDY{R}{B} \leq r + 2\delta \leq (1+\eps/4)r$. We repeat the above process, but now with distance $r' =\tfrac{r}{1+\eps/4}$. If again the answer is that $\dHDY{R}{B} \leq (1+\eps/4)r' = r$, then the algorithm returns that ``$\dHDY{R}{B} \leq r$''.

    Otherwise, the interval $J = \bigl((1-\eps/4)r, (1+\eps/4)r\bigr)$ contains $\dHDY{R}{B}$, and as $\tfrac{1+\eps/4}{1-\eps/4} \leq 1+\eps$, for $\eps \in (0,1]$, the algorithm returns the interval $J$.
\end{proof}

\begin{lemma}%
    \lemlab{Hausdorff_is_nice}%
    For any labeled point set $\PS = R \cup B$ in $\Re^d$, we have that $\target(R \cup B) = \dHDY{R}{B}$ is a \defrefY{N_D_P}{$(1+\eps)$-\NDP} for any $\eps > 0$.
\end{lemma}
\begin{proof}
    We verify the required properties, see \defref{N_D_P}.

    \smallskip%
    \PDecider{}: This is the algorithm of \lemref{decider_Hausdorff}.  \smallskip%

    \PLipschitz: Observe that if $R', B'$ are $\Delta$-drifts of $R$ and $B$, respectively, then for any two points $r' \in R', b' \in B'$, and their original points $r \in R$ and $b \in B$, we have that $\dY{r'}{b'} \leq \dY{r}{b} + 2\Delta$ and $\dY{r}{b} \leq \dY{r'}{b'} + 2\Delta$. Namely, we have that
    \begin{math}
        \cardin{ \dHDY{R}{B} - \dHDY{R'}{B'}} \leq 2\Delta.
    \end{math}

    \PPrune: We need to show that if $\dHDY{R}{B} < r$, then then $\dHDY{R_l}{ B_l} < r$, where $R_l \cup B_l$ is the residual set after throwing away all the $r$-far points in the set $R \cup B$. Consider any $r$-far point $p \in R \cup B$. Observe that $p$ must be mixed --- if it is monochromatic, then it is too far to be in Hausdorff distance $<r$ from the other color. So all the far points are mixed. Furthermore, since they are far, they can not serve as the destination of any outside point when computing the Hausdorff distance. Thus, they can be safely thrown away without changing the distance.
\end{proof}

Plugging the above into \thmref{result_eps}, we get the following.
\begin{theorem}
    \thmlab{h_really_2}%
    Given two sets of points $R$ and $B$ of total size $n$ in $\Re^d$, and a parameter $\eps > 0$, the above algorithm computes, in expected $O( (1+\tfrac{1}{\eps^d})n)$ time, a distance $\ell$, such that $\dHDY{R}{B} \leq \ell \leq (1+\eps)\dHDY{R}{B}$.
\end{theorem}

\begin{remark}
    \remlab{h_is_new}%
    The result about the Hausdorff distance was implicitly stated in \remref{h_really} in the journal version of this paper, but not explicitly. We described it here in detail for completeness.
\end{remark}

\subsubsection{Exact nearest neighbor distances, and furthest %
   nearest neighbor}
\seclab{f_n_n}

Using \thmref{k_th_m_nn} with $m=1$ and $\eps=1$, results in a $2$-approximation, in $O(n)$ time, to the \emphi{$k$\th nearest neighbor distance} of $\PntSet$.  We now show how this can be converted into an exact solution.

So we have a quantity $r$ that is larger than the $k$\th nearest neighbor distance, but at most twice as big. We build a grid with a cell diameter being $r/4$. Clearly, any point whose nearest neighbor distance is at least the $k$\th nearest neighbor distance must be the only point in its grid cell.  For each such point $\pnt$, we compute its distance to all the points stored in its neighborhood $\GridNbr{\pnt}{r}$, by scanning the list of points associated with these cells.  This computes for these ``lonely'' points their exact nearest neighbor distance.  The $k$\th smallest nearest neighbor distance will then be the $(n-k)$\th largest distance computed.  Clearly, every cell's list of points gets scanned a constant number of times, so overall the running time is linear. We summarize the result.

For a point set $\PntSet$, the \emphi{furthest nearest neighbor distance}, is the maximum distance of a point of $\PntSet$, from the remaining points. Formally, it is $\max_{\pnt \in \PntSet} \ddX{\pnt}{\PntSet}$.

\begin{theorem}
    Let $\PntSet$ be a set of $n$ points in $\Re^d$. For an integer $k$, $1\leq k\leq n$, one can compute exactly, in expected linear time, the $k$\th nearest neighbor distance in $\PntSet$.  In particular, setting $k=n$, one can compute the furthest nearest neighbor distance exactly, in expected linear time.
\end{theorem}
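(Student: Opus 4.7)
The plan is to bootstrap from \thmref{k:th:m:nn}. First I would invoke that theorem with $m=1$ and $\eps=1$, which in expected $O(n)$ time returns a value $r$ satisfying $r^* \leq r \leq 2r^*$, where $r^*$ denotes the exact $k$\th smallest nearest neighbor distance in $\PntSet$. This $r$ fixes the scale at which the remaining exact distances can be read off with a single grid sweep, and is the reason the entire procedure stays linear.

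Next I would build a grid $\Grid$ with cell diameter $r/4$ and bucket the points of $\PntSet$ into its cells in $O(n)$ time via hashing. The key observation is that any point $\pnt \in \PntSet$ with $\ddX{\pnt}{\PntSet} \geq r^*$ must be the \emph{only} point in its cell: any other point sharing the cell would be within distance $r/4$, and since $r^* \geq r/2 > r/4$ this is impossible. In particular, every point whose nearest neighbor distance falls among the top $n-k+1$ distances is ``lonely'' in this sense, so such points are automatically identified by inspecting cell occupancies.

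For every lonely point $\pnt$ I would then compute $\ddX{\pnt}{\PntSet}$ exactly by scanning the $O(1)$ cells of $\GridNbr{\pnt}{r}$ and taking the closest point found; if no point is found within distance $r$ we may declare $\ddX{\pnt}{\PntSet} > r$, which still suffices since such a value cannot equal $r^*$. Finally I would return the $(n-k+1)$\th largest among the exact distances computed for lonely points: since $\{d_k,d_{k+1},\ldots,d_n\}$ are all realized by lonely points (where $d_1 \leq \cdots \leq d_n$ are the sorted nearest neighbor distances), these are precisely the top $n-k+1$ values among the computed ones, and the $(n-k+1)$\th largest of them is exactly $r^*$. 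For $k=n$ the procedure simply reports the single largest value, i.e.\ the furthest nearest neighbor distance.

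The only nontrivial point is bounding the cost of the exact nearest neighbor computations, but this follows from the same packing argument used in \lemref{net} and \lemref{delete}: a nonempty cell $\cell$ has its point list scanned only when some lonely point lies in one of the $O(1)$ cells for which $\cell \in \GridNbr{\cdot}{r}$, and because lonely points are isolated in their own cells this charges $O(1)$ scans to each nonempty cell overall. Hence the post-processing takes $O(n)$ time, the initial call to \thmref{k:th:m:nn} takes expected $O(n)$ time, and the total expected running time is linear as claimed.
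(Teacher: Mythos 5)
Your proposal is correct and follows essentially the same route as the paper: a $2$-approximation via \thmref{k:th:m:nn} with $m=1$, a grid of cell diameter $r/4$ in which every point of nearest-neighbor distance at least $r^*$ is alone in its cell, an exact neighborhood scan for these lonely points, and rank selection among the computed values. You are in fact slightly more careful than the paper on the edge cases (truncating scan results at $r$ and the correct rank $(n-k+1)$), so nothing further is needed.
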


\subsection{The spanning forest partitions, and %
   the \TPDF{$k$}{k}\th %
   longest \TPDF{\MST}{MST} %
   edge}
\seclab{longest_MST}

The net computation provides a natural way to partition the data into clusters and leads to a fast clustering algorithm. One alternative partition scheme is based on distance connectivity.

\begin{definition}
    \deflab{connectivity}%
    For a set of points $\PntSet$ and a number $r > 0$, let $\CCX{\PntSet}{r}$ be the \emphi{$r$-connectivity clustering} of $\PntSet$.

    \begin{figure}[t!]
        \centering \includegraphics{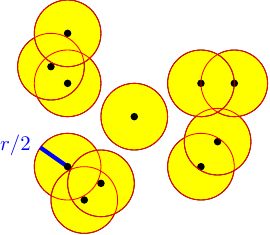}%
        \caption{}
        \figlab{connectivity}
    \end{figure}

    Specifically, it is a partition of $\PntSet$ into connected components of the \MST of $\PntSet$ after all edges strictly longer than $r$ are removed from it (alternatively, these are the connected components of the intersection graph where we replace every point of $\PntSet$ by a disk of radius $r/2$ centered at that point), see \figref{connectivity}.  As such, a set in such a partition of $\PntSet$ is referred to as a \emphi{connected component}.

    Consider two partitions $\Partition, \PartitionA$ of $\PntSet$.  The partition $\Partition$ is a refinement of $\PartitionA$, denoted by $\Partition \sqsubseteq \PartitionA$, if for any set $X \in \Partition$, there exists a set $Y \in \PartitionA$ such that $X \subseteq Y$.
\end{definition}

\begin{lemma}%
    \lemlab{longest_edge_partition}%
    Given a set $\PntSet \subseteq \Re^d$, and parameters $r$ and $\eps$, one can compute, in $O\pth{n/\eps^d}$ time, a partition $\Partition$, such that $\CCX{\PntSet}{r} \sqsubseteq \Partition \sqsubseteq \CCX{\PntSet}{(1+\eps)r}$.
\end{lemma}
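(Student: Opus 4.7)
The plan is to reduce the problem to computing connected components in a small auxiliary graph on grid cells. First I would build the grid $\Grid_\Delta$ with side length $\Delta = \eps r/(2\sqrt d)$, so that every cell has diameter at most $\eps r/2$, and bucket the points of $\PntSet$ into their cells in $O(n)$ time by hashing (exactly as in the proof of \lemref{net}). I would then form an auxiliary graph $H$ whose vertices are the non-empty grid cells and whose edges join two non-empty cells whose cell-to-cell distance (a purely geometric quantity read off the grid coordinates) is at most $r$; the desired partition $\Partition$ is the one induced on $\PntSet$ by the connected components of $H$.

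For the refinement $\CCX{\PntSet}{r} \sqsubseteq \Partition$, I would observe that any pair $\pnt,\pntA \in \PntSet$ with $\distX{\pnt}{\pntA} \leq r$ lies either in one cell or in two $H$-adjacent cells, and chaining this along any path in the $\CCX{\PntSet}{r}$-graph puts the two endpoints in the same component of $H$. For the reverse refinement $\Partition \sqsubseteq \CCX{\PntSet}{(1+\eps)r}$, the key observation is that an edge $(\cell_1,\cell_2)$ of $H$ is realized by a witness pair $\pnt^\ast \in \cell_1, \pntA^\ast \in \cell_2$ with $\distX{\pnt^\ast}{\pntA^\ast} \leq r$, and any other $\pnt \in \cell_1, \pntA \in \cell_2$ can reach them via edges of length at most $\eps r/2 \leq (1+\eps)r$. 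Thus every $H$-edge lifts to a three-edge path in the $\CCX{\PntSet}{(1+\eps)r}$-graph, which gives the desired containment; within-cell unions are handled the same way since the cell diameter $\eps r/2$ is well below $(1+\eps)r$.

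For the running time, each non-empty cell has only $O(1/\eps^d)$ cells in its $r$-neighborhood (by the same counting as in \defref{grid:stuff}), and each of these can be located in $O(1)$ time via the hash table, so $H$ has at most $O(n/\eps^d)$ edges; BFS on $H$ finds its components in $O(n/\eps^d)$ time, after which assigning each point to its component is $O(n)$. The only delicate point—and really the main thing to get right—is the choice of $\Delta$: it must be small enough that the two ``slack'' edges in the lifted path together with the witness edge stay below $(1+\eps)r$, but large enough that each cell has only $O(1/\eps^d)$ neighbors within distance $r$. Setting $\Delta = \eps r/(2\sqrt d)$ strikes exactly this balance and completes the plan.
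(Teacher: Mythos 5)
Your construction is sound and delivers the stated bounds, but it is a genuinely different one from the paper's: the paper builds a \emph{bipartite} graph on cells and points, using a grid of cell diameter $\eps r/4$ and having each point mark every cell within distance $r/2$ of it, and then takes connected components of that point--cell graph; you instead build a pure cell--cell graph $H$ on the non-empty cells of a grid of diameter $\eps r/2$, joining cells at geometric distance at most $r$. Both yield $O(n/\eps^d)$ time for constant $d$, and your version is arguably a little cleaner to implement (components of a graph on at most $n$ cells with $O(1/\eps^d)$ neighbors each, versus components of a bipartite graph with $O(n/\eps^d)$ point--cell incidences); the paper's version keeps the slack smaller ($r/2+r/2+\eps r/4$ versus your $r+\eps r/2+\eps r/2$), which is immaterial here.

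One step of your argument for $\Partition \sqsubseteq \CCX{\PntSet}{(1+\eps)r}$ is wrong as stated, though easily repaired. You claim that an edge $(\cell_1,\cell_2)$ of $H$ ``is realized by a witness pair $\pnt^\ast\in\cell_1$, $\pntA^\ast\in\cell_2$ with $\distX{\pnt^\ast}{\pntA^\ast}\leq r$.'' It is not: the edge is defined by the cell-to-cell distance, and two non-empty cells at geometric distance, say, $0.9r$ may contain only points at pairwise distance up to $0.9r+\eps r$, so no such witness pair need exist. The correct (and simpler) statement is that for \emph{any} $\pnt\in\cell_1$ and $\pntA\in\cell_2$ the triangle inequality gives $\distX{\pnt}{\pntA}\leq r + \eps r/2+\eps r/2=(1+\eps)r$, so every $H$-edge lifts directly to a single edge of the $(1+\eps)r$-graph between arbitrary representatives, and no three-edge path is needed. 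With that one-line fix your proof is complete.
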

\begin{proof}
    Build a grid with every cell having a diameter $\eps r/4$. For every point in $\PntSet$, mark all the cells in distance $r/2$ from it. That takes $O(n/\eps^d)$ time. Next, for every marked cell, create a node $v$, and let $V$ be the resulting set of nodes. Next, create a bipartite graph $\Graph = (V\cup \PntSet, E)$ connecting every node of $V$ to all the points of $\PntSet$, marking its corresponding grid cell.  Now, compute the connected components in $\Graph$, and for each connected component, extract the points of $\PntSet$ that belong to it. Clearly, the resulting partition $\Partition$ of $\PntSet$ is such that any two points in distance $\leq r$ are in the same set (since for such points there must be a grid cell in distance $\leq r/2$ from both).  Similarly, if the distance between two subsets $X, Y \subseteq \PntSet$ is at least $(1+\eps)r$, then they are in different connected components of $\Partition$.  In particular, for $X$ and $Y$ to be in the same component, there must exist points $x\in X$ and $y\in Y$ that mark the same grid cell.  However, such a grid cell would need to be at a distance of at most $r/2$ from both $x$ and $y$, implying $\distX{x}{y}\leq r/2+r/2+\eps r/4 < (1+\eps)r$, a contradiction.  Thus, this is the desired partition.  Clearly, this takes $O(n/\eps^d)$ time overall.
\end{proof}

\begin{theorem}%
    \thmlab{m_s_t_k_edge}%
    Given a set of $n$ points $\PntSet$ in $\Re^d$, and parameters $k$ and $\eps$, one can output, in expected $O\pth{n/\eps^d}$ time, a $(1+\eps)$-approximation to the $k$\th longest (or $k$\th shortest) edge in the \MST of $\PntSet$.
\end{theorem}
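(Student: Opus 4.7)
The plan is to show that computing the $k$\th longest MST edge is a $(1+\eps)$-\NDP in the sense of \defref{N:D:P}, and then invoke \thmref{result:eps}. Let $\target(\PntSet,k)$ denote the length of the $k$\th longest MST edge; the $k$\th shortest variant is handled by an analogous (and in fact slightly easier) argument.

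For \PDeciderRef, the starting observation is that the number of MST edges strictly longer than $r$ is exactly $|\CCX{\PntSet}{r}| - 1$: after deleting the MST edges of length $> r$ the remaining subforest has $|\CCX{\PntSet}{r}|$ components by \defref{connectivity}, hence $n - |\CCX{\PntSet}{r}|$ edges, leaving $|\CCX{\PntSet}{r}| - 1$ deleted edges. Consequently $\target < r$ iff $|\CCX{\PntSet}{r}| \leq k$, and $\target > r$ iff $|\CCX{\PntSet}{r}| \geq k+1$. I would then invoke \lemref{longest:edge:partition} at the two scales $r_1 = r/(1+\eps/3)$ and $r_2 = r$ with precision $\eps/4$ to obtain approximate component counts sandwiching $|\CCX{\PntSet}{r_i}|$ and $|\CCX{\PntSet}{(1+\eps/4)r_i}|$, and then combine them in exactly the same style as the decider in the proof of \lemref{k:dist} to produce a $(1+\eps)$-decider running in $O(n/\eps^d)$ time.

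For \PLipschitzRef, observe that a $\Delta$-translation changes every pairwise distance by at most $2\Delta$, so for any threshold $t$ we have $|\CCX{\PntSetB}{t+2\Delta}| \leq |\CCX{\PntSet}{t}|$ and symmetrically. Translated into MST edge counts, the sorted MST-edge-length sequences of $\PntSet$ and $\PntSetB$ differ pointwise by at most $2\Delta$, so in particular the $k$\th longest edge moves by at most $2\Delta$; the argument parallels \clmref{change}.

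The main obstacle will be verifying \PPruneRef. Suppose $\target(\PntSet, k) < r$ and set $n_f = |\FarX{\PntSet}{r}|$. The goal is to establish $\target(\PntSet, k) = \target(\CloseX{\PntSet}{r}, k - n_f)$. The key structural fact is that, by definition, every point of $\FarX{\PntSet}{r}$ is an isolated vertex in the threshold graph on $\PntSet$ with edges of length $\leq r$, so the minimum spanning forest of this graph (which by the cut/cycle properties of MSTs is exactly the set of MST edges of length $\leq r$) is the same whether computed on $\PntSet$ or on $\CloseX{\PntSet}{r}$. Consequently the ``low'' edges of MST($\PntSet$) and MST($\CloseX{\PntSet}{r}$) agree as multisets, while the number of ``heavy'' ($> r$) MST edges differs by exactly $n_f$: MST($\PntSet$) needs $n_f$ extra heavy edges to attach the $n_f$ singleton far-point components. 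Since every heavy edge is longer than every low edge, the heavy edges occupy the top of the sorted-decreasing MST edge list in both cases, so the $k$\th longest edge of MST($\PntSet$) --- which is low by the hypothesis --- corresponds to the $(k - n_f)$\th longest edge of MST($\CloseX{\PntSet}{r}$). The context update $k \mapsto k - n_f$ is computable in $O(n)$ time using \delete, and positivity $k - n_f \geq 1$ follows from the hypothesis (which forces the number of heavy edges $(c-1) + n_f$ to be at most $k-1$). The $k$\th shortest case is strictly easier: the low MST edges are preserved exactly and no context update is needed. With all three properties verified, \thmref{result:eps} yields the claimed expected $O(n/\eps^d)$ time bound.
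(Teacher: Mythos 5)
Your proposal is correct and follows essentially the same route as the paper: verify the three \NDP{} properties (using \lemref{longest:edge:partition} for the decider, exactly as the paper does) and plug the result into \thmref{result:eps}. The only differences are in the micro-arguments: for \PLipschitzRef{} the paper runs a Kruskal-style exchange argument on the sorted MST edge sequences where you use monotonicity of component counts under a $\Delta$-translation, and for \PPruneRef{} the paper removes far points one at a time via a local edge-replacement argument where you argue globally that the length-$\le r$ MST edges form the minimum spanning forest of the threshold graph and are therefore preserved; both yield the same context update $k \mapsto k - \cardin{\FarX{\PntSet}{r}}$.
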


\begin{proof}
    Consider the cost function $\fLEMST\pth{\PntSet, k}$ which returns the $k$\th longest edge of the \MST of $\PntSet$.  It is easy to verify that $\fLEMST\pth{\PntSet, k}$ is the minimum $r$ such that $\CCX{\PntSet}{r}$ has $k$ connected components. For the sake of simplicity of exposition, we assume all the pairwise distances of points in $\PntSet$ are distinct.  We need to fill in/verify the properties of \defref{N_D_P}: %
    \smallskip
    \begin{compactitem}[\ItemSpacing]
        \itemPDecider The $(1+\eps$)-decider for $\fLEMST\pth{\PntSet,k}$ uses \lemref{longest_edge_partition}. Specifically, for a specified $r > 0$, compute a partition $\Partition$, such that $\CCX{\PntSet}{r} \sqsubseteq \Partition \sqsubseteq \CCX{\PntSet}{(1+\eps)r}$.  Now, if $\Partition$ has $\leq k$ connected components, then $\fLEMST\pth{\PntSet,k} \leq (1+\eps)r$. Similarly, if $\Partition$ has $> k$ connected components, then $\fLEMST\pth{\PntSet,k} > r$. By calling on two values and using slightly smaller $\eps$, it is straightforward to get the desired $(1+\eps)$-decider, as was done in \lemref{k_dist}.

        \itemPLipschitz %
        Let $\PntSet'$ be some $\Delta$-\translation of $\PntSet$, and let $\Graph$ (resp. $\Graph'$) denote the complete graph on $\PntSet$ (resp. $\PntSet'$). For an edge $e\in E\pth{\Graph}$ (or in $E\pth{\Graph'}$), let $\ell(e)$ denote the length of $e$.  For the sake of simplicity of exposition, we consider $\PntSet'$ and $\PntSet$ to be multi-sets of the same cardinality (i.e., $n = \cardin{\PntSet} = \cardin{\PntSet'} > k$).  (Note this implies some of the edges may have length $0$.)  For a point $\pnt\in \PntSet$, let $\pnt'$ denote its image in $\PntSet'$ after the $\Delta$-\translation.  Let $e_1, \ldots, e_{n-1}$ be the edges in the \MST of $\Graph$ sorted by increasing order by their length.  Let $e_1', \ldots, e_{n-1}'$ denote the corresponding edges in $\Graph'$, i.e. for $e=uv$, the corresponding edge is $e' = u'v'$.

        Observe that the subgraph of $\Graph'$ having $e_1', \ldots, e_{i}'$ as edges, is a forest with $n-i$ connected components.  As such, we have that
        \begin{equation*}
            \fLEMST\pth{\PntSet', n-i} \leq \max_{j=1}^i \ell(e_j')
            \Bigl.
        \end{equation*}
        Indeed, by the correctness of Kruskal's algorithm, the first $i$ edges of the \MST form such a forest with the maximum weight edge being minimum (among all such forests).  As such, we have that
        \begin{align*}
          \fLEMST\pth{\PntSet', n-i}%
          &\leq %
            \max_{j=1}^i \ell(e_j')%
            \leq%
            \max_{j=1}^i \pth{\ell(e_j) + 2\Delta}%
            =%
            2\Delta + \max_{j=1}^i \ell(e_j)%
          \\%
          &=%
            2\Delta + \fLEMST\pth{\PntSet, n-i}.
        \end{align*}
        A symmetric argument implies that
        \begin{equation*}
            \fLEMST\pth{\PntSet, n-i}
            \leq%
            2\Delta +
            \fLEMST\pth{\PntSet', n-i}.
        \end{equation*}
        Setting $i=n-k$, we have that $\cardin{ \fLEMST\pth{\PntSet, k} - \fLEMST\pth{\PntSet', k} } \leq 2\Delta$, as desired.

        \itemPPrune Consider a point $\pnt \in \PntSetA$, such that $\ddX{\pnt}{\PntSetA} \geq r$, and $r > \fLEMST\pth{\PntSetA, k}$. This implies that all the \MST edges adjacent to $\pnt$ are of distance $\geq r$.  Assume there are $j$ such edges, and observe that removing these edges creates $j+1$ connected components of the \MST, such that each connected component is of distance $\geq r$ from each other (indeed, otherwise, one of the removed edges should not be in the \MST, as can be easily verified). Thus, if we delete $\pnt$, and recompute the \MST, we will get $j$ new edges that replace the deleted edges, all of these edges are of length $\geq r$. That is, deleting $\pnt$, and decreasing $k$ by one, ensures that the target value remains the same in the pruned instance.
    \end{compactitem}

    \smallskip

    This establish that $\fLEMST\pth{\PntSet, k}$ is an $(1+\eps)$-\NDP, and by plugging this into \thmref{result_eps}, we get the desired result.
\end{proof}

\begin{remark}
    Note that in the above, as in other parts of the paper, we treat a point of weight $\omega$ as $\omega$ copies of the same point. This naturally leads to edges of length zero in the \MST, which are being handled correctly by the algorithm.%
\end{remark}

\subsection{Computing the minimum cluster}

\subsubsection{Sketchable families}
\seclab{sketchable}

\begin{definition}[Upward Closed Set System]
    Let $\PntSet$ be a finite set of ground elements, and let $\Family$ be a family of subsets of $\PntSet$.  Then $(\PntSet, \Family)$ is an \emphi{upward closed set system} if for any $X \in \Family$ and any $Y \subseteq \PntSet$, such that $X \subseteq Y$, we have that $Y \in \Family$.  Such a set system is a \emphi{sketchable family}, if for any set $S \subseteq \PntSet$ there exists a constant size \emphi{sketch} $\descrip{S}$ such that:
    \begin{compactenum}[\quad(A)]
        \item For any $S, T \subseteq \PntSet$ that are disjoint, $\descrip{S\cup T}$ can be computed from $\descrip{S}$ and $\descrip{T}$ in $O(1)$ time.  We assume the sketch of a singleton can be computed in $O(1)$ time, and as such, the sketch of a set $S\subseteq \PntSet$ can be computed in $O(|S|)$.

        \item There is a membership oracle for the set system based on the sketch. That is, there is a procedure $\oracle{\cdot}$ such that given the sketch of a subset $\descrip{S}$, $\oracle{\descrip{S}}$ returns whether $S\in \Family$ or not, in $O(1)$ time.
    \end{compactenum}
\end{definition}

An example for such a sketchable family is the set system $(\PntSet, \Family)$, where $S \subseteq \PntSet$ is in $\Family$ if $\cardin{S} \geq 10$. Here, the sketch of the set is simply the number of elements in the set, and combining two sketches $\descrip{S}$ and $\descrip{T}$ is adding the numbers to get $\descrip{S \cup T}$ (for $S\cap T =\emptyset$).

We will be interested in two natural problems induced by such a family:
\begin{inparaenum}[(i)]
    \item smallest cluster -- find the smallest set in the family with certain properties, and
    \item min-max clustering -- find disjoint sets in the family such that they cover the original set, and the maximum price of these sets is minimized.
\end{inparaenum}

\begin{remark}
    \remlab{monotonicity_sketch}%
    In the following, we consider sketchable families $(\PntSet, \Family)$, where $\PntSet$ is a set of Euclidean points.  In this case, $\PntSet = \brc{\pnt_1, \dots, \pnt_n}$ can be viewed as a set of elements, where each $\pnt_i$ is additionally given a realization at some location in $\Re^d$.  Specifically, changing the locations of the realizations does not change the subsets in $\Family$.  This is required to ensure the \PLipschitzRef property holds.  Specifically, sketches remain the same after a $\Delta$-\translation.  Furthermore, we do not have to maintain the sketches under deletions (of the underlying points), as pruned points correspond to complete sketches that are being thrown away, see \remref{prune_sketches}.
\end{remark}

\begin{remark}
    Note that the sketches do not need to have $O(1)$ size or the oracle to run in $O(1)$ time. However, assuming otherwise will affect the running time of \ndpAlg if it is used to solve a problem involving a sketchable family.
\end{remark}

\begin{example}
    Consider associating a positive $k$-dimensional vector $\vecA_\pnt$ with each point $\pnt \in \PntSet$ (a vector is \emphi{positive} if it is non-zero, and all its coordinates are non-negative). A \emphi{positive linear inequality} is an inequality of the form $\sum_i \alpha_i x_i \geq c$, where the coefficients $\alpha_1, \ldots, \alpha_k$ are all non-negative. For such a linear inequality, consider the set system $(\PntSet, \Family)$, where a set $\PntSetB$ is in $\Family$ if the linear inequality holds for the vector $\sum_{\pnt \in \PntSetB} \vecA_\pnt$. Clearly, this family is sketchable, the sketch being the sum of the vectors associated with the points of the set (here $k$ is assumed to be a constant).

    It is easy to verify that sketchable families are closed under finite intersection. Specifically, given a collection of $m$ such positive inequalities, the family of sets such that their sketch vector complies with all these inequalities is a sketchable family (of course, checking if a set, given its sketch, is in the family naively would take $O(m k)$ time).

    As a concrete application, consider the scenario where every element in $\PntSet$ has $k=4$ attributes. One might be interested in subsets, $\PntSetB$, such that the sum of the first two attributes of the vector $\sum_{\pnt \in \PntSetB} \vecA_\pnt$ is at least $1$, and the sum of the last two attributes is at least $2$.

    Of course, in general, a membership oracle for the attributes space that has the property that if $\vecA$ is valid, then $\vecA+\vecA'$ is also valid, for any positive $\vecA'$, would define a sketchable family. As a concrete example, consider the non-linear (and not convex) condition that the sum of at least two attributes is larger than $1$. Clearly, this defines a sketchable family.
\end{example}

Clearly, the above definition of sketchable family is very general and widely applicable.  Next, we show how \ndpAlg can be used to approximate certain objectives over sketchable families.

\subsubsection{Min cluster}
\seclab{min_cluster}

We now consider the problem of minimizing the cluster radius of a subset of a given point set subject to inclusion in a sketchable family.  Specifically, we consider the case when the cluster is defined by a ball of radius $r$ or when the cluster is defined by a connected component of $\CCX{\PntSet}{r}$ for a radius $r$. Note that as a ball or component grows, both the set of points the cluster contains, and the inclusion of this set of points in the sketchable $\Family$ are monotone properties.  This correspondence is what allows us to apply our general framework.

\begin{theorem}
    For a set of $n$ points $\PntSet \subseteq \Re^d$, and a sketchable family $(\PntSet, \Family)$, one can $(1+\eps)$-approximate, in expected $O\pth{n/\eps^d}$ time, the radius of the smallest ball, $\ball$, such that $\ball \cap \PntSet \in \Family$.
\end{theorem}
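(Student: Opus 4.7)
The plan is to cast the problem as a $(1+\eps)$-\NDP in the sense of \defref{N:D:P} and then invoke \thmref{result:eps}. To make the prune step interact cleanly with the geometry, I take the target function to be \emph{twice} the optimal radius, so that $\target(\PntSet, \Context)$ is the diameter of the smallest ball whose $\PntSet$-intersection lies in $\Family$. The context $\Context$ stores the sketch-combining rule and the membership oracle of the family. A $(1+\eps)$-approximation to $\target$ immediately yields one for the radius. By \remref{zero}, I assume $\target > 0$.

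The \PLipschitzRef property follows because, if $\PntSet'$ is a $\Delta$-translation of $\PntSet$, the abstract subset realizing the optimum for $\PntSet$ corresponds to a subset of $\PntSet'$ in $\Family$ (family membership is defined on abstract elements, not positions), and expanding the optimal ball by $\Delta$ covers the translated subset; symmetry yields $\abs{\target(\PntSet) - \target(\PntSet')} \leq 2\Delta$. For \PDeciderRef, given a query radius $r$, I would build a grid with cell diameter $\eps r / c_0$ for a small constant $c_0$, compute each non-empty cell's sketch in $O(n)$ total time via the combining rule, and then, for each such cell, sum the sketches of the $O(1/\eps^d)$ cells whose centers lie within distance $r/2 + O(\eps r)$ to obtain the sketch of a candidate ball. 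Querying the membership oracle on each candidate costs $O(1)$. By upward closure of $\Family$, if some query succeeds then $\target \leq (1+O(\eps)) r$, while if all fail then $\target > r$, since the cell containing the optimal ball's center would otherwise witness family membership. Running this at two nearby scales, as in the proof of \lemref{k:dist}, gives a genuine $(1+\eps)$-decider in $O(n/\eps^d)$ time.

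The main subtlety is \PPruneRef. Suppose $\target < r$ and let $\pnt \in \FarX{\PntSet}{r}$, so $\ddX{\pnt}{\PntSet} \geq r > \target$. Because $\target$ is the diameter, the optimal ball has diameter strictly less than $\ddX{\pnt}{\PntSet}$, and hence cannot contain $\pnt$ together with any other point of $\PntSet$. If it contained $\pnt$ alone, then $\{\pnt\} \in \Family$ and a degenerate ball of diameter $0$ would suffice, forcing $\target = 0$ and contradicting our assumption. Hence the optimal ball avoids $\pnt$, and removing $\pnt$ preserves $\target$. The context update is trivial because $\Family$ is intrinsic to the abstract elements, and the close/far split is produced by \delete in linear time (\lemref{delete}). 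The principal obstacle I had to design around was precisely this reconciliation between the ``$\target<r$'' prune threshold of the generic framework and the geometric diameter of the optimal ball: switching the target from radius to diameter sidesteps the mismatch cleanly. With the three properties verified, \thmref{result:eps} delivers a $(1+\eps)$-approximation to $\target$, and hence to the radius, in expected $O(n/\eps^d)$ time.
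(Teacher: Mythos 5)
Your proposal is correct and follows the same overall strategy as the paper's proof: verify the three \NDP{} properties and invoke \thmref{result:eps}, with the decider implemented by a grid of cell diameter $\Theta(\eps r)$ whose cells aggregate the sketches of nearby cells and query the membership oracle, using upward closure in both directions. The one genuine difference is your reparameterization of the target as the \emph{diameter} rather than the radius, introduced to make \PPruneRef{} go through. This is a worthwhile refinement: the paper simply asserts that \PPruneRef{} ``readily holds'' for the radius function, but as literally stated it does not --- if the optimal ball has radius $\ropt$ with $r/2 \leq \ropt < r$, it can contain two points at mutual distance in $[r, 2\ropt]$, each of which then lies in $\FarX{\PntSet}{r}$ and would be deleted by \delete, changing the optimum. (Equivalently, the framework's prune threshold lives at the scale of interpoint distances, while the radius lives at half that scale.) Your diameter version eliminates this factor-two mismatch: a ball of diameter $<r$ cannot contain a point of $\FarX{\PntSet}{r}$ together with any other point, and cannot contain it alone once $\target>0$ rules out singletons belonging to $\Family$. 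Your Lipschitz and decider arguments are sound (the decider is the paper's, rescaled by the factor of two), and the only cost of your route is the harmless bookkeeping of converting between radius and diameter, since a $(1+\eps)$-approximation to one is a $(1+\eps)$-approximation to the other.
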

\begin{proof}
    Let $f(\PntSet)$ be the diameter of the smallest ball $\ball$ in $\Re^d$ such that $\PntSet \cap \ball \in \Family$. We claim that $f(\cdot)$ is an \NDP, see \defref{N_D_P}. The \PLipschitzRef and \PPruneRef properties readily hold for $f(\cdot)$.  Specifically, $f(\PntSet)$ was chosen to be the diameter rather than the radius so that the \PPruneRef property holds. Note that below, for the \PDeciderRef property, we refer to approximating the radius, which gives us the same approximation to the diameter.

    For the \PDeciderRef property, given $r$ and $\eps>0$, construct a grid with cells of diameter $\eps r /4$, and register each point of $\PntSet$ in all the grid cells in distance at most $r$ from it.  If there is a ball, $\ball$, of radius $r$ such that $\ball\cap \PntSet \in \Family$ then the set of points registered with the grid cell containing the center of this ball will be a superset of $\ball \cap \PntSet$ and hence the set is in $\Family$, by the upward closed property of sketchable families.  Moreover, the set registered at this grid cell requires a ball of radius at most $r + 2\eps r/ 4$ (centered at any point in this grid cell) to cover it.  Thus, the decision procedure checks the set of points associated with each grid cell to see whether or not it is in $\Family$.  The definition of sketchable families implies this can be done in linear time in the total sizes of these sets (i.e., $O(n/\eps^d)$).  Furthermore, if there is no ball of radius $(1+\eps)r$ whose point set is in $\Family$, then the decision procedure would fail to find such a cell. Thus, this is a $(1+\eps)$-decision procedure, and its running time is $O(n/\eps^d)$.

    Plugging this into the algorithm of \thmref{result_eps} implies the result.
\end{proof}

\begin{remark}
    In the algorithm of the above theorem, as the algorithm progresses, every weighted point that is computed also has an associated sketch from the original points that it corresponds to.  One can view this collection of sketches as either an attribute in addition to the weights or as being passed down via the context.
\end{remark}

The following is a sample of what the above theorem implies.

\begin{corollary}
    We can $(1+\eps)$-approximate, in $O(n /\eps^d)$ time, the following problems for a set of $n$ points in $\Re^d$:
    \begin{compactenum}[(A)]
        \item The smallest ball containing $k$ points of $\PntSet$.

        \item The points of $\PntSet$ are weighted, and there is an additional threshold $\alpha$. Compute the smallest ball containing points with total weight at least $\alpha$.

        \item If the points of $\PntSet$ are colored by $k$ colors, the smallest ball containing points of $\PntSet$, such that they are colored by at least $t$ different colors. (Thus, one can find the smallest non-monochromatic ball [$t=2$], and the smallest ball having all colors [$t=k$].) The running time is $O\pth{n k /\eps^d}$, as the sketch here is a $k$-dimensional vector.
    \end{compactenum}
\end{corollary}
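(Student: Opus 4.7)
The plan is to realize each of (A), (B), (C) as an instance of the smallest enclosing ball problem over a suitable sketchable family $(\PntSet,\Family)$, and then invoke the preceding theorem. In each case the main tasks are to (i) identify the sketch $\descrip{\cdot}$, (ii) verify that two disjoint sketches can be merged in the appropriate time, (iii) verify that the membership oracle $\oracle{\cdot}$ runs in the appropriate time, and (iv) check that $\Family$ is upward closed.

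For (A), take $\descrip{S} = \cardin{S}$. Disjoint merging is $\descrip{S\cup T} = \descrip{S}+\descrip{T}$, and $\oracle{\descrip{S}}$ simply tests whether $\descrip{S} \geq k$; both are $O(1)$. A superset of a set of size at least $k$ still has size at least $k$, so $\Family$ is upward closed. For (B), let $\descrip{S} = \WeightX{S} = \sum_{\pnt\in S}\WeightX{\pnt}$; merging is again addition and the oracle tests $\descrip{S}\geq \alpha$, both in $O(1)$. The weights are positive, so adding points to $S$ can only increase $\WeightX{S}$, giving an upward closed family. Applying the preceding theorem directly yields an expected $O(n/\eps^d)$ time $(1+\eps)$-approximation in both cases.

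For (C), let the sketch be the $k$-dimensional $0/1$ indicator vector $\descrip{S}\in\{0,1\}^k$ whose $i$\th coordinate is $1$ iff color $i$ appears in $S$. Disjoint (or even overlapping) merging is the coordinate-wise $\vee$, and the oracle simply counts the number of $1$'s and tests if it is at least $t$; both operations take $O(k)$ time. Since adding a point can only turn a $0$ into a $1$, the family is upward closed. Thus this is a sketchable family, but every invocation of the sketching primitives inside the decider costs an extra $O(k)$ factor, which is exactly what produces the $O(nk/\eps^d)$ running time stated.

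The only point that requires any care is checking that the decision procedure of the underlying theorem still behaves correctly when the sketch operations take $O(k)$ time rather than $O(1)$: the grid construction from that proof creates $O(n/\eps^d)$ nonempty cells, and for each of them the set of registered points is processed once using sketch merging and a single oracle call, so the decider's running time scales as $O(nk/\eps^d)$ rather than $O(n/\eps^d)$. This is the only nontrivial deviation, and it matches the claimed bound; plugging the resulting $(1+\eps)$-decider into \thmref{result:eps} completes the proof.
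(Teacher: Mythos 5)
Your proposal is correct and follows exactly the route the paper intends: the corollary is stated as an immediate consequence of the preceding theorem, and your verification that each case defines a sketchable family (with the $O(k)$-sized sketch and the corresponding $O(nk/\eps^d)$ bound in case (C), as anticipated by the paper's remark that non-constant sketches inflate the running time) is precisely the omitted argument.
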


\begin{remark}
    \remlab{prune_sketches}%
    A careful inspection of our algorithm reveals that we only merge sketches or throw them away.  Specifically, a pruning stage of our algorithm corresponds to throwing away some of the sketches, and netting corresponds to merging together some of the sketches.  See also \remref{monotonicity_sketch}.
\end{remark}

\subsubsection{Min cluster in the spanning forest}
\seclab{min_cluster_c}

Note that $\CCX{\PntSet}{r}$ is a monotone partition as $r$ increases (i.e., components merge, but never split apart).  It is natural to ask what the minimum $r$ is, for which there is a connected component in $\CCX{\PntSet}{r}$ that is in a sketchable family.

\begin{theorem}
    \thmlab{min_cluster_MST}%
    For a set of $n$ points $\PntSet \subseteq \Re^d$, and a sketchable family $\,(\PntSet, \Family)$, one can $(1+\eps)$-approximate, in expected $O\pth{n/\eps^d}$ time, the minimum $r$, such that there is a connected component in $\CCX{\PntSet}{r}$ that is in $\Family$.
\end{theorem}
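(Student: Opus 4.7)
The plan is to verify that $\target(\PntSet)$, defined as the minimum $r$ for which some connected component of $\CCX{\PntSet}{r}$ lies in $\Family$, is a $(1+\eps)$-\NDP in the sense of \defref{N:D:P}, and then invoke \thmref{result:eps} to get the claimed expected $O(n/\eps^d)$ bound. The context here is the sketchable family itself, and it requires no numerical update under pruning or netting, since membership in $\Family$ is determined purely from constant-size sketches and weights propagate correctly through the net operation.

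For the \PDeciderRef property, I would apply \lemref{longest:edge:partition} with precision $\eps/3$ to compute, in $O(n/\eps^d)$ time, a partition $\Partition$ satisfying $\CCX{\PntSet}{r} \sqsubseteq \Partition \sqsubseteq \CCX{\PntSet}{(1+\eps/3)r}$. Using constant-time sketch merging, each cluster's sketch is assembled in time linear in its size, and tested against $\Family$ via the oracle in $O(1)$ per cluster. If no cluster of $\Partition$ is in $\Family$ then neither is any cluster of $\CCX{\PntSet}{r}$, giving $\target(\PntSet) > r$; otherwise, the enclosing cluster of $\CCX{\PntSet}{(1+\eps/3)r}$ lies in $\Family$ by upward closedness, yielding $\target(\PntSet) \leq (1+\eps/3)r$. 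Running this test at the two radii $r$ and $r/(1+\eps/3)$ supplies the three-way response of \defref{decider}, any returned interval having spread at most $(1+\eps/3)^2 \leq 1+\eps$, exactly as in the decider of \thmref{m:s:t:k:edge}.

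For \PLipschitzRef, a $\Delta$-translation changes every pairwise distance by at most $2\Delta$; since sketches are intrinsic to the points (weights and attributes), the induced bijection $f$ maps a certifying component in $\PntSetA$ to a subset of a component in $\PntSetB$ at radius at most $r+2\Delta$, which still lies in $\Family$ by upward closedness, and symmetry gives $\abs{\target(\PntSetA) - \target(\PntSetB)} \leq 2\Delta$. For \PPruneRef, let $\pnt \in \FarX{\PntSetA}{r}$ with $r > \target(\PntSetA)$. At every radius $r' \leq \target(\PntSetA) < r$ the point $\pnt$ is a singleton component, so a certifying component $C$ of $\CCX{\PntSetA}{\target(\PntSetA)}$ either avoids $\pnt$ or equals $\brc{\pnt}$; the latter forces $\target(\PntSetA)=0$, excluded by \remref{zero}. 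Applying this to every far point, $C \subseteq \CloseX{\PntSetA}{r}$ and survives intact there, while conversely any certifier in $\CloseX{\PntSetA}{r}$ at radius $r'<r$ is also a component of $\PntSetA$ (far points remain singletons and do not merge), giving $\target(\PntSetA) = \target(\CloseX{\PntSetA}{r})$ with no change of context.

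The main subtlety I anticipate is the bookkeeping for weighted points after a few net iterations: one must check that the sketch of a net point correctly aggregates the sketches of the points it absorbs (this is automatic from the constant-time merging axiom of sketchability), and one must rule out during preprocessing the trivial case where a single weighted point already witnesses $\Family$, so that the assumption $\target > 0$ of \remref{zero} is legitimately in force. Once these points are handled, the three properties of \defref{N:D:P} all hold, and the conclusion follows directly from \thmref{result:eps}.
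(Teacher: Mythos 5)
Your proposal is correct and follows essentially the same route as the paper: it verifies the three \NDP properties (the decider via \lemref{longest:edge:partition} plus sketch checks on each cluster, the pruning step via the observation that a far point is a singleton in $\CCX{\PntSetA}{\ropt}$ and cannot itself certify membership unless $\target = 0$, which is excluded by the preprocessing check) and then invokes \thmref{result:eps}. The only divergence is cosmetic: you argue \PLipschitzRef directly by pushing a certifying component through the $\Delta$-translation and appealing to upward closedness, whereas the paper reduces to $\fLEMST$ and cites the argument of \thmref{m:s:t:k:edge}; both are valid, and yours is arguably more self-contained.
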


\begin{proof}
    The target function $\target(\PntSet)$ is the smallest $r$ such that $\CCX{\PntSet}{r}$ contains a set that is in $\Family$. Let $\ropt = \target(\PntSet)$.  The algorithm first checks if any of the input points by themselves have the desired property. If so, then $\ropt=0$ and the algorithm returns $0$. (For the case that $\PntSet$ is a multiset, see \remref{multi}.)  As such, we can assume that $f(\PntSet) > 0$, and by \corref{same_1} (C), $\target(\cdot)$ is always non-zero during the algorithm execution.  Again, we need to verify the properties of \defref{N_D_P} for $\target(\cdot)$:
    \begin{compactitem}[\ItemSpacing]
        \itemPDecider %
        One can $(1+\eps/4)$-approximate the connected components of $\CCX{\PntSet}{r/(1+\eps/3)}$, using the algorithm of \lemref{longest_edge_partition}, and for each approximate connected component, use its sketch to decide if any of them is in $\Family$. If so, return that $r$ is larger than the optimal value. Otherwise, perform a $(1+\eps/3)$-approximate computation of $\CCX{\PntSet}{ r}$, and if one of its clusters is in $\Family$, then return that the desired value is in the interval $[r/(1-\eps/3), r(1+\eps/3)]$. Otherwise, return that $r$ is smaller than the optimal value. Clearly, this is $(1+\eps)$-decider that works in $O(n/\eps^d)$ time.%

        \itemPLipschitz %
        Let $r' = f(\PntSet)$.  By the definition of $f(\cdot)$, there exists $k$, such that $r' = \fLEMST\pth{\PntSet, k}$. Now, arguing as in the proof of \thmref{m_s_t_k_edge} implies that $f(\cdot)$ has the desired \PLipschitzRef property.

        \itemPPrune %
        Consider a point $\pnt \in \PntSetA$, such that $\ddX{\pnt}{\PntSetA} \geq r$, and $r > f\pth{\PntSetA} > 0 $.  Now, $\pnt$ by itself can not have the desired property, as this would imply that $\target(\PntSetA) =0$. This implies that in $\CCX{\PntSetA}{\ropt}$ the point $\pnt$ is by itself (and not in any cluster that is in $\Family$). As such, throwing $\pnt$ away does not change the target function. Formally, $f\pth{\PntSetA} = f\pth{\PntSetA \setminus \brc{\pnt}}$.
    \end{compactitem}
    Plugging this into \thmref{result_eps} implies the result.
\end{proof}

One natural application for \thmref{min_cluster_MST} is for ad hoc wireless networks. Here, we have a set $\PntSet$ of $n$ nodes and their locations (say in the plane), and each node can broadcast in a certain radius $r$ (the larger the $r$, the higher the energy required, so naturally we would like to minimize it). It is natural now to ask for the minimum $r$ such that one of the connected components in the resulting ad hoc network has some desired property.  For example, in $O(n/\eps^d)$ time, we can $(1+\eps)$-approximate the smallest $r$ such that:
\begin{compactenum}[\qquad(A)]
    \item One of the connected components of $\CCX{\PntSet}{r}$ contains half the points of $\PntSet$, or more generally, if the points are weighted, then one of the connected components contains points of total weight at least $\alpha$, for a prespecified $\alpha$.

    \item If the points are colored, the desired connected component contains all the colors (for example, each color represents some fraction of the data, and the cluster can recover the data if all the pieces are available), or at least two colors, or more generally, a different requirement on each color.
\end{compactenum}

\subsection{Clustering for monotone properties}
\seclab{cluster_monotone}

\begin{definition}[Min-Max Clustering]
    We are given a sketchable family $\pth[]{\PntSet,\Family}$, and a cost function $g: 2^\PntSet \rightarrow \Re^+$.  We are interested in finding disjoint sets $S_1, \ldots, S_m \in \Family$, such that
    \begin{inparaenum}[(i)]
        \item $\bigcup_i S_i = \PntSet$, and
        \item $\max_i g(S_i)$ is minimized.
    \end{inparaenum}
    We will refer to the partition realizing the minimum as the \emphi{optimal clustering} of $\PntSet$.
\end{definition}

\begin{theorem}
    Let $\PntSet$ be a set of points in $\Re^d$, and let $\pth[]{\PntSet,\Family}$ be a sketchable family.  For a set $\PntSetA \in \Family$, let $\rmin(\PntSetA)$ be the radius of the smallest ball centered at a point of and enclosing $\PntSetA$.  One can $(4+\eps)$-approximate, in expected $O(n/\eps^d)$ time, the min-max clustering under $\rmin$ of $\PntSet$.

    That is, one can cover $\PntSet$ by a set of balls, and assign each point of $\PntSet$ to one of these balls, such that the set of points assigned to each ball is in $\Family$, and the maximum radius of any of these balls is a $(4+\eps)$-approximation to the minimum radius used by any such cover.

    \thmlab{cluster_min_max}
\end{theorem}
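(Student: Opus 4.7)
The plan is to show that the optimum $\target(\PntSet)$ of the min-max clustering under $\rmin$ is a $(4+\eps)$-\NDP, so that \thmref{result:eps} then yields the claimed approximation in expected $O(n/\eps^d)$ time; I must verify the three properties of \defref{N:D:P}.

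\PLipschitzRef{} is immediate: a $\Delta$-translation moves every point by at most $\Delta$, so the center-to-member distance in every cluster changes by at most $2\Delta$, while weights, and hence sketches and $\Family$-memberships, are preserved. Thus $|\target(\PntSetB)-\target(\PntSet)|\leq 2\Delta$. For \PPruneRef, if $\target(\PntSet)<r$ and $\ddX{\pnt}{\PntSet}\geq r$, then in the optimum partition $\pnt$ lies in some $S_i$ whose center $c\in S_i$ is within distance $<r$ of $\pnt$; since no other point of $\PntSet$ is within $r$ of $\pnt$, we must have $c=\pnt$ and $S_i=\{\pnt\}$, hence $\{\pnt\}\in \Family$. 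Removing $\pnt$ together with this singleton cluster preserves the optimum, and no nontrivial context update is needed.

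The main obstacle is the $(4+\eps)$-\PDeciderRef{}. Given a query $r$, my plan is to use \lemref{net} and \corref{valid} to compute an $(r/2)$-net $N$ with the Voronoi-like groups $V_v\subseteq \ballCR{v}{r/2}$ and their sketches $\descrip{V_v}$; then, with a grid of side $\Theta(\eps r)$, form in total time $O(n/\eps^d)$ the sketch of the expanded set $W_v := \ballCR{v}{(5/2+\eps)r}\cap \PntSet$ by summing the $O(1/\eps^d)$ precomputed $\descrip{V_{v'}}$ whose net points $v'$ lie within distance $(3+\eps)r$ of $v$. The decider returns ``$\target(\PntSet)\leq (4+\eps)r$'' when every $W_v\in \Family$, and ``$\target(\PntSet)>r$'' otherwise.

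The forward direction of correctness is a short triangle-inequality computation: if $\target(\PntSet)\leq r$ and $\pnt\in V_v$, the optimum cluster $C^*\ni \pnt$ has diameter at most $2r$, so $C^*\subseteq \ballCR{\pnt}{2r}\subseteq \ballCR{v}{5r/2}\subseteq W_v$, and upward closure of $\Family$ then forces $W_v\in \Family$. The converse is the delicate step and where I expect the real work: assuming every $W_v\in \Family$, I plan to produce an explicit valid partition by sweeping the net points in a carefully chosen order and, for each still-unassigned $v$, committing the cluster $C_v$ that absorbs every yet-unassigned $V_{v'}$ with $\distX{v}{v'}\leq (3+\eps)r$. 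Taking any element of $V_v$ as the center yields $\rmin(C_v)\leq r/2+(3+\eps)r+r/2=(4+\eps)r$ by the triangle inequality, so the geometric side is automatic. The hard part is verifying $C_v\in \Family$ at every commit: my approach is to exploit that each optimum cluster has diameter $\leq 2r$ and hence intersects only $O(1)$ net cells, and to choose the sweep order (for instance by a space-filling curve or recursive grid traversal) so that the first net cell touched by any optimum cluster absorbs that cluster wholesale into its $C_v$, after which upward closure of $\Family$ maintains $\Family$-membership at every commit. Plugging this $(4+\eps)$-decider into \thmref{result:eps} completes the proof.
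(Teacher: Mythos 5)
Your \PLipschitzRef{} and \PPruneRef{} arguments are fine and match the paper's, but the \PDeciderRef{} has a genuine gap in its converse direction, and the sketched repair does not work. Your decider certifies ``yes'' when every \emph{overlapping} ball $W_v=\ballCR{v}{(5/2+\eps)r}\cap\PntSet$ is in $\Family$, and you must then extract a \emph{disjoint} partition into $\Family$-sets of radius $O(r)$. The committed clusters $C_v$ are subsets of $W_v$, and $\Family$ is upward closed, so $W_v\in\Family$ gives no information whatsoever about $C_v$; upward closure points in the wrong direction here. Your proposed rescue --- that a good sweep order lets ``the first net cell touched by any optimum cluster absorb that cluster wholesale'' --- is circular: it invokes the bound $\diameterX{C^*}\le 2r$ on optimal clusters, which holds only under the hypothesis $\ropt\le r$, i.e., exactly the conclusion the converse direction is supposed to deliver. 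Even granting $\ropt\le r$, the absorption claim fails quantitatively: an optimal cluster meeting $V_{v_1}$ and $V_{v_2}$ has $\distX{v_1}{v_2}\le 3r$, but if $V_{v_1}$ is absorbed by some third net point $v$ with $\distX{v}{v_1}\le(3+\eps)r$, then $\distX{v}{v_2}$ can be as large as $(6+\eps)r$, so $V_{v_2}$ escapes $C_v$ and the optimal cluster is split across two committed clusters, neither of which is then known to lie in $\Family$.

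The paper sidesteps all of this by making the tested sets disjoint from the start: the decider computes a single $4r$-net and, using \corref{valid}, the nearest-net-point cells $\PntSet_\NetSet[\pnt]$, which \emph{partition} $\PntSet$; it answers ``yes'' iff every cell is in $\Family$. The converse direction is then immediate (the cells themselves are the clustering, each of radius at most $4r$ around its net point), and upward closure is used only in the forward direction: if $\ropt\le r$, the optimal cluster containing a net point $\pnt$ has diameter at most $2\ropt\le 2r$ while other net points are at distance at least $4r$, so that entire cluster lands in $\pnt$'s cell, forcing the cell into $\Family$. Note that optimal clusters containing no net point may be shredded among cells --- that is harmless, since only one witness per cell is needed. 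If you restructure your decider to test the disjoint cells of a coarse net rather than overlapping balls, the rest of your proof goes through.
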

\begin{proof}
    Let $\PartitionOpt$ be the optimal partition with radius $\ropt$, and consider an $r$-net $\NetSet$ for $r \geq 4\ropt$, computed using \corref{valid}. Consider a point $\pnt \in \NetSet$, and let $\PntSet_\NetSet[\pnt]$ be the set of points of $\PntSet$ assigned to $\pnt$ by the nearest net-point assignment.

    Next, consider the cluster $\PntSetA \in \PartitionOpt$ that contains it. Clearly, $\diameterX{\PntSetA} \leq 2\ropt$, and the distance of $\pnt$ from all other net points in $\NetSet$ is at least $4\ropt$. It follows that $\PntSetA \subseteq \PntSet_\NetSet[\pnt]$, and since $\PntSetA \in \Family$, it follows that $\PntSet_\NetSet[\pnt] \in \Family$.

    A $4$-decider for this problem works by computing the $4r$-net $\NetSet$, and for each $\pnt \in \NetSet$, checking the sketchable property for the set $\PntSet_\NetSet[\pnt]$. It is easy to verify that the properties of \defref{N_D_P} hold in this case. In particular, throwing a far-away isolated point corresponds to a cluster that already fulfills the monotone property, and it is too far away to be relevant. Namely, computing $\ropt$ is an \NDP and so plugging this into \thmref{result_eps} implies the result.
\end{proof}

\subsubsection{Lower bounded center clustering}
\seclab{cluster_monotone_examples}

If the required sketchable property is that every cluster contains at least $k$ points, then \thmref{cluster_min_max} approximates the \emphi{lower bounded center} problem. That is, one has to cover the points by balls, such that every cluster (i.e., points assigned to a ball) contains at least $k$ points. The price of this clustering is the radius of the largest ball used. A $2$-approximation to this problem is known via the usage of flow \cite{apftk-aac-10}, but the running time is super-quadratic. Recently, the authors showed a similar result to \thmref{cluster_min_max} with running time (roughly) $O(n \log n)$ \cite{ehr-fclb-12}.  This paper also shows that this problem cannot be approximated better than (roughly) $1.8$ even for points in the plane.

\begin{corollary}
    Let $\PntSet$ be a set of points in $\Re^d$, and let $k$ and $\eps>0$ be parameters. One can $(4+\eps)$-approximate the lower bounded center clustering in $O(n/\eps^d)$ time.
\end{corollary}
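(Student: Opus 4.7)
The plan is to exhibit the lower bounded center problem as an instance of the min-max clustering over a sketchable family, and then invoke \thmref{cluster:min:max} as a black box. Concretely, take the ground set to be $\PntSet$ and let $\Family$ consist of all subsets $S \subseteq \PntSet$ with $\cardin{S} \geq k$. This is clearly upward closed: if $S \in \Family$ and $S \subseteq T$, then $\cardin{T} \geq \cardin{S} \geq k$, so $T \in \Family$.

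Next I would verify sketchability. The natural sketch is $\descrip{S} = \min(\cardin{S}, k)$, which has constant size (since $k$ is part of the input and we only need to know whether the count has reached the threshold). For disjoint sets $S$ and $T$, we have $\descrip{S \cup T} = \min(\descrip{S} + \descrip{T}, k)$, computable in $O(1)$ time. The oracle $\oracle{\descrip{S}}$ simply checks whether $\descrip{S} = k$, again in $O(1)$ time. Thus $(\PntSet, \Family)$ is a sketchable family in the sense of \secref{sketchable}.

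Finally, observe that the lower bounded center problem is exactly the min-max clustering of $\PntSet$ under $\rmin$ with respect to this family: a valid clustering is a partition of $\PntSet$ into sets each of size at least $k$, and the objective is the largest enclosing-ball radius over the clusters. Applying \thmref{cluster:min:max} then yields a $(4+\eps)$-approximation in expected $O(n/\eps^d)$ time, which is the claimed bound. There is no real obstacle here beyond checking the sketchability definition; the substance lies entirely in the already-established \thmref{cluster:min:max}.
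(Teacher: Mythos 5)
Your proposal is correct and follows exactly the paper's route: the paper likewise observes that the family of subsets of size at least $k$ is sketchable (the sketch being a point count, as in the example given in \secref{sketchable}) and that lower bounded center is precisely min-max clustering under $\rmin$ for this family, so the corollary is an immediate instantiation of \thmref{cluster:min:max}. No gaps.
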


\subsubsection{Other clustering problems}

One can plug in any sketchable family into \thmref{cluster_min_max}. For example, if the points have $k$ colors, we can ask for the min-max radius clustering, such that every cluster contains (i) all colors, (ii) at least two different colors, or (iii) a different requirement on each color, etc.

As another concrete example, consider that we have $n$ customers in the plane, and each customer is interested in $k$ different services (i.e., there is a $k$-dimensional vector associated with each customer specifying their demand). There are $t$ types of service centers that can be established. Still, each such center type requires a minimum level of demand in each of these $k$ categories (i.e., each type is specified by a minimum demand $k$-dimensional vector, and a set of customers can be the user base for such a service center if the sum of their demand vectors is larger than this specification). The problem is to partition the points into clusters (of minimum maximum radius), such that for every cluster, there is a valid service center assigned to it. Clearly, this falls into the framework of \thmref{cluster_min_max}, and can be $(4+\eps)$-approximated in $O(nkt/\eps^d)$ time.

\subsubsection{Clustering into spanning forests}
\seclab{cluster_spanning}

One can get a similar result to \thmref{cluster_min_max} for the connectivity version of clustering of $\PntSet$.  Formally, a set of points $\PntSetA \subseteq \PntSet$ is \emphi{$r$-valid} if $\PntSetA$ is contained in some set of $\CCX{\PntSet}{r}$. Given a sketchable family $(\PntSet,\Family)$, a partition $\Partition$ of $\PntSet$ is an \emphi{$r$-connected clustering} if all the sets in $\Partition$ are in $\Family$, and are $r$-valid.

\begin{theorem}%
    \thmlab{cluster_connected}%
    Let $\PntSet$ be a set of points in $\Re^d$, and let $\pth[]{\PntSet,\Family}$ be a sketchable family. One can $(1+\eps)$-approximate $\ropt$, in expected $O(n/\eps^d)$ time, where $\ropt$ is the minimum value such that there is a $\ropt$-connected clustering of $\PntSet$.
\end{theorem}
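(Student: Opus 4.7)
The plan is to verify that $\target(\PntSet) = \ropt$ satisfies the properties of a $(1+\eps)$-\NDP from \defref{N:D:P} and then invoke \thmref{result:eps}. The key preliminary observation is that, because $\Family$ is upward closed, an $r$-connected clustering exists if and only if every connected component of $\CCX{\PntSet}{r}$ is in $\Family$: a partition of a component $C$ into sets in $\Family$ must contain at least one subset $S \in \Family$, and then $C \supseteq S$ lies in $\Family$ by upward closure. Hence $\ropt$ is the smallest $r$ for which every component of $\CCX{\PntSet}{r}$ is sketch-feasible, and singleton-feasibility of each input point can be checked at the outset to dispose of the $\ropt=0$ case, exactly as in \thmref{min:cluster:MST}.

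For \PDeciderRef, I would call \lemref{longest:edge:partition} at radius $r$ to produce a partition $\Partition$ with $\CCX{\PntSet}{r} \sqsubseteq \Partition \sqsubseteq \CCX{\PntSet}{(1+\eps)r}$, compute the sketch of each part in linear total time by combining singleton sketches, and test each part with $\oracle{\cdot}$. If every part is in $\Family$, then each component of $\CCX{\PntSet}{(1+\eps)r}$, being a union of parts from $\Family$, is itself in $\Family$ by upward closure, so $\ropt \leq (1+\eps)r$; if some part fails, then since each part is a union of components of $\CCX{\PntSet}{r}$, upward closure forces at least one such component out of $\Family$, so $\ropt > r$. The two-call refinement from \lemref{k:dist} packages this into a clean $(1+\eps)$-decider in $O(n/\eps^d)$ time.

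For \PLipschitzRef, I would mimic the \MST-based argument of \thmref{m:s:t:k:edge}: the subforest of the \MST of $\PntSet$ consisting of edges of weight at most $r^* = \ropt(\PntSet)$ has connected components equal to those of $\CCX{\PntSet}{r^*}$, all in $\Family$; transported through the $\Delta$-translation bijection those same edges have weight at most $r^* + 2\Delta$, so each such component sits inside some component of $\CCX{\PntSet'}{r^* + 2\Delta}$, and upward closure then places every component of $\CCX{\PntSet'}{r^* + 2\Delta}$ in $\Family$. This gives $\ropt(\PntSet') \leq \ropt(\PntSet) + 2\Delta$, and symmetry closes the property. For \PPruneRef, if $\ddX{\pnt}{\PntSet} \geq r > \ropt$ then $\pnt$ is isolated in $\CCX{\PntSet}{\ropt}$; optimality forces $\brc{\pnt} \in \Family$, and deleting $\pnt$ leaves every other component of $\CCX{\PntSet}{\ropt}$ intact, so $\ropt\pth{\PntSet \setminus \brc{\pnt}} = \ropt\pth{\PntSet}$ and no nontrivial context update is needed.

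The main conceptual obstacle is the opening equivalence: a priori one might worry that the freedom to split each component into multiple family members could undercut the threshold at which whole components are feasible, but upward closure collapses this freedom and reduces the optimization to a purely threshold question on the components of $\CCX{\PntSet}{r}$. Once that is in hand the rest of the argument parallels \thmref{min:cluster:MST}, and plugging the three verified properties into \thmref{result:eps} yields the claimed $O(n/\eps^d)$ expected running time.
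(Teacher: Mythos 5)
Your proof is correct and follows essentially the same route as the paper: verify the three \NDP{} properties using \lemref{longest:edge:partition} for the decider, an \MST-translation argument for \PLipschitzRef{} as in \thmref{m:s:t:k:edge}, and the isolated-singleton observation for \PPruneRef{}, then invoke \thmref{result:eps}. Your explicit upward-closure equivalence (an $r$-connected clustering exists iff every component of $\CCX{\PntSet}{r}$ is in $\Family$) is left implicit in the paper's argument but is exactly the right justification for both directions of the decider.
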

\begin{proof}
    Let $\target(\cdot)$ be the target function in this case. We verify the properties of \defref{N_D_P}: %
    \smallskip%
    \begin{compactitem}[\ItemSpacing]
        \itemPDecider Given $r$, we use \lemref{longest_edge_partition} to compute a partition $\Partition$ of $\PntSet$ such that $\CCX{\PntSet}{r} \sqsubseteq \Partition \subseteq \CCX{\PntSet}{(1+\eps)r}$.  If the sketchable property holds for each cluster of $\Partition$, then return that $r$ is too large.  Otherwise, return that it is too small. As for the quality of this decider, observe that if the optimal partition has a cluster $\PntSetA$ that uses points from two different clusters of $\Partition$, then $\PntSetA$ is not $r$-valid, as otherwise these two points would be in the same cluster of $\Partition$ (namely, $\ropt > r$).

        \itemPLipschitz Follows easily by arguing as in \thmref{m_s_t_k_edge}.

        \itemPPrune If an isolated point exists, then it can be thrown away because the cluster of original points it corresponds to is a valid cluster that can be used in the final clustering. It does not interact with any other clusters.
    \end{compactitem}
    \smallskip%
    Plugging this into \thmref{result_eps} now implies the result.
\end{proof}

A nice application of \thmref{cluster_connected} is for ad hoc networks. Again, we have a set $\PntSet$ of $n$ wireless clients, and some of them are base stations; that is, they are connected to the outside world. We want to find the minimum $r$, such that each connected component of $\CCX{\PntSet}{r}$ contains a base station.

\subsection{Smallest non-zero distance}
\seclab{closest_non_zero_dist}

Given a point set $\PntSet$, it was already shown in \secref{f_n_n} that, with a small amount of post processing, \ndpAlg can be used to compute the closest pair distance exactly in expected linear time.  If one allows $\PntSet$ to contain duplicate points (i.e. $\PntSet$ is a multiset) then a couple modifications to $\ndpAlg$ must be made.  First, modify the algorithm so that for the selected point $\pnt$, it finds the closest point to $\pnt$ that has a non-zero distance from it.  Second, modify \delete (see \lemref{delete}) so that a singleton cell means all points in the cell have the same location, rather than meaning the cell contains a single point. This can easily be checked for all cells in overall linear time.  With these two modifications, the above algorithm works verbatim, and we get the following.

\smallskip

\begin{lemma}%
    \lemlab{non_zero}%
    Let $\PntSet$ be a multiset of weighted points in $\Re^d$.  Then one can solve the \distinct problem {\emph{exactly}} for $\PntSet$, in expected linear time.  In particular, if $\PntSet$ contains no duplicates, then this corresponds to computing the closest pair distance.
\end{lemma}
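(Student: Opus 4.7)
The plan is to reduce the exact smallest non-zero distance computation to (a) a constant-factor approximation via the \NDP framework and (b) a linear-time exact refinement using a grid, mirroring the strategy used in \secref{f:n:n} for nearest-neighbor distances.

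Let $\target(\PntSet)$ denote the smallest non-zero pairwise distance in $\PntSet$, and let $r^* = \target(\PntSet)$. First I would verify that $\target$ fits into the \NDP framework after the two modifications flagged in the statement: at each sampling step, collapse points co-located with the sampled $\pnt$ into a single weighted representative before computing its nearest-neighbor distance (so that $\nRad_i > 0$), and in \delete treat a cluster of coincident duplicates as a single point for the purposes of the ``far-point'' test. For the \PDeciderRef property, given a value $r$, build a grid of cell diameter $\Theta(r)$ and, for every non-empty cell, inspect distinct-location points in its neighborhood; this yields an $O(n)$-time constant-factor decider, which is enough for the approximation stage. The \PLipschitzRef property holds since a $\Delta$-translation preserves multiplicities and changes each inter-point distance by at most $2\Delta$, so \clmref{change}-style reasoning bounds the change in $\target$ by $2\Delta$. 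For \PPruneRef, if $\target(\PntSet) < r$ and a point $\pnt$ has no duplicate of itself and no distinct-location neighbor within distance $r$, then no pair involving $\pnt$ realizes $\target$, so $\pnt$ can be safely removed; conversely, duplicates of $\pnt$ (if any) collapse with it and are never treated as far.

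Applying \thmref{result:eps} through the modified \ndpAlg then produces, in expected linear time, a value $r$ with $r^* \leq r \leq c\, r^*$ for some absolute constant $c$. For the exact step I would imitate \secref{f:n:n}: build a grid of cell diameter $r/(2c) \leq r^*/2$. Any two distinct point locations in the same cell would be at distance strictly less than $r^*$, contradicting the definition of $r^*$, so each cell contains only copies of a single location. Compress each cell into a single weighted point in $O(n)$ time. Then, for each compressed point, scan the $O(1)$ cells of its $r$-neighborhood (which contains only $O(1)$ distinct locations), compute distances to them, and maintain the global minimum. Since $r^* \leq r$, the pair realizing $r^*$ lies inside some such neighborhood, so the value returned equals $r^*$ exactly; the refinement costs $O(n)$ overall.

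The main obstacle I expect is keeping the bookkeeping for duplicates consistent throughout the algorithm so that \PLipschitzRef and \PPruneRef are genuinely preserved. This reduces to the simple observation that only distinct-location pairs contribute to $\target$, so the entire analysis can be carried out on the quotient weighted set of distinct locations, where the \NDP arguments from \secref{framework} apply verbatim; the weights merely bookkeep multiplicities and never affect the target value (since duplicate pairs contribute distance $0$, which is excluded).
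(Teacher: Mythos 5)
Your overall plan --- run the \ndpAlg framework to get a constant-factor approximation of the smallest non-zero distance and then recover the exact value with one more grid pass --- is exactly the paper's route, and your decider, your prune argument, and your exact refinement are all sound. There is, however, a genuine gap in your verification of the \PLipschitzRef property: the smallest non-zero distance is \emph{not} Lipschitz under $\Delta$-translations. A $\Delta$-translation may merge two distinct locations into one weighted point, which \emph{deletes} their (formerly non-zero) distance from the set of candidates rather than perturbing it by $2\Delta$. For instance, in $\Re^1$ the set $\{0,1,100\}$ has smallest non-zero distance $1$, while its $(1/2)$-translation $\{1/2,1/2,100\}$ (a point of weight $2$ at $1/2$, plus $100$) has smallest non-zero distance $99.5$. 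The \clmref{change}-style argument does not apply because the rank of the value being selected --- one plus the number of zero distances --- itself changes under the translation, and passing to the quotient set of distinct locations does not help, since the quotient changes in the same way.

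The fix is the observation the paper makes immediately after stating the lemma: for this target function the net branch of \ndpAlg is \emph{never} executed, so the \PLipschitzRef property is never actually invoked and the algorithm degenerates to pure prune-and-search. Indeed, with your first modification the sampled value $\nRad_i$ is itself a non-zero pairwise distance of $\PntSetA_{i-1}$, hence $\nRad_i \geq \target(\PntSetA_{i-1})$; and since prune iterations preserve the target, induction gives $\target(\PntSetA_{i-1}) = \target(\PntSetA_0)$ throughout. Consequently the condition $\constNet \nRad_i < \target_{i-1}$ that triggers \lineref{net} can never hold, every iteration either terminates or prunes, and every $\PntSetA_i$ has the same smallest non-zero distance as the input. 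With this substitute for your Lipschitz step, the remainder of your argument (the $O(n)$ decider, the prune property for far points under the duplicate-collapsing convention, and the exact grid refinement with cell diameter below $r^*$) goes through and yields the lemma.
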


Interestingly, the algorithm of \lemref{non_zero} is a prune-and-search algorithm, as the net stage never gets executed. Observe that it is not hard to extend the algorithm of Golin \etal{} \cite{grss-sracp-95} to solve this variant, and the result of \lemref{non_zero} is included in the paper only for the sake of completeness. In particular, the resulting (simplified) algorithm is similar to the algorithm of Khuller and Matias \cite{km-srsac-95}.

\section{Linear Time with high probability}
\seclab{high_prob}

In \secref{framework}, we presented an algorithm (\ndpAlg, see \figref{a_algorithm}) that provides a constant factor approximation to a wide variety of problems in expected linear time.  Here we present an extension of that algorithm that runs in linear time with high probability.

Our algorithm works by computing a quantity that is not directly related to the problem at hand (but in linear time!), and then relating this quantity to the desired quantity, enabling us to compute it in linear time. Specifically, the resulting algorithm is somewhat counterintuitive and bizarre. As such, to make the algorithm description more accessible to the reader, we introduce the basic ideas in \secref{h_p_introduction}.  We describe the algorithm in detail in \secref{high_Prob_algorithm}, and analyze it in \secref{high_Prob_analysis}.

\subsection{Preliminaries}
\seclab{h_p_introduction}

\paragraph{Notation.}
See \tabref{notation} for a summary of the notation used in this part of the paper.  As it is sufficient for our purposes, for simplicity (in this section) $\PntSet$ will always be an unweighted points set, i.e.  $n= \WeightX{\PntSet} = \cardin{\PntSet}$.  The algorithm can be easily extended to handle the weighted case.

\begin{table}[t]
    \begin{minipage}{\linewidth}
        \centerline{%
           \begin{tabular}{|c|c|l|}
             \hline
             $\ddX{\pnt}{\PntSet}$
             &
               $\ds\min_{\pntA\in\PntSet\setminus\brc{\pnt}}
               \distX{\pnt}{\pntA}$
             & \DTbl{%
               Distance from $\pnt$ to its nearest neighbor in
               $\PntSet \setminus \brc{\pnt}$.}%
             \\ \hline $\ddiX{i}{\pnt}{\PntSet}$
             &
             & \DTbl{%
               Distance of $\pnt$ to its $i$\th nearest neighbor in
               $\PntSet\setminus\brc{\pnt}$.} %
             \\ \hline%
             $\ddSetX{\XSet}{\PntSet}$
             &
               $\Set{ \ddX{\pntX}{\PntSet }}{ \pntX \in \XSet }$
             &
               \DTbl{%
               Multi-set of distances between points in $\XSet$ and
               nearest neighbors in $\PntSet$.}%
             \\ \hline%
             $\ddiSetX{i}{\XSet}{\PntSet}$
             & $\Set{ \ddiX{i}{\pntX}{\PntSet} }{ \pntX \in \XSet}$
             & \DTbl{%
               Multi-set of distances between points in $\XSet$
               and their $i$\th
               nearest neighbors in $\PntSet$.}%
             \\ \hline
             $\ddRankX{\delta}{\XSet}{\PntSet}$
             &
             &
               \DTbl{Value of rank
               $\floor{\delta \cardin{\XSet}}$ in
               $\ddSetX{\XSet}{\PntSet}$, for some $\delta \in [0,1]$.}%
             \\ \hline
             $\ddiRankX{i}{\delta}{\XSet}{\PntSet}$
             &
             & \DTbl{Value of rank $\floor{\delta \cardin{\XSet}}$
               in $\ddiSetX{i}{\XSet}{\PntSet}$, for some $\delta \in
               [0,1]$.}%
             \\ \hline
           \end{tabular}
        } \captionof{table}{Notation used. If $\XSet = \PntSet$, the second argument is omitted. Thus, $\ddiRankY{i}{\delta}{\PntSet} = \ddiRankX{i}{\delta}{\PntSet}{\PntSet}$.%
        } \tablab{notation}
    \end{minipage}
\end{table}

\subsubsection{Basic idea}

\paragraph{Sample in the middle.}
Observe that if one could sample a nearest neighbor distance that lies in the interval $[\ddRankY{\delta}{\PntSet}, \ddRankY{1-\delta}{\PntSet}]$ for some fixed constant $\delta \in (0,1)$, then we are done.  Indeed, for each iteration of \ndpAlg for which the sampled distance was near the middle, it is guaranteed that the algorithm either terminates or removes a constant fraction of the points. As such, it is sufficient to present an algorithm which returns a value from this range with probability $\geq 1-1/n^c$, for some sufficiently large constant $c$, where $n = \cardin{\PntSet}$.  Such a value near the median is a \emphi{middle} nearest neighbor distance.

For the rest of this section, we therefore focus on the problem of computing such an approximate middle nearest neighbor distance, in linear time, with high probability.  To turn \ndpAlg into an algorithm that runs in linear time with high probability, we replace \lineref{gen_random} and \lineref{rad} in \figref{a_algorithm} with this new subroutine.

\paragraph{Sampling for the median.}
The problem with the random sampling from $\ddSetY{\PntSet}$ done in \ndpAlg is that, though with good probability, the sampled distance lies near the middle, it does not do so with high probability.  This is the same problem one faces when performing the standard basic randomized median selection.  For median selection, one known solution is rather than randomly sample a pivot, to take a sufficiently large random sample and take the median of the random sample as the pivot. This pivot is good with high probability \cite{mr-ra-95}, and it is this result we wish to mimic.

\paragraph{The challenge.}
Given the above discussion, the natural solution to finding a good middle value from $\ddSetY{\PntSet}$ would be to sample multiple values from $\ddSetY{\PntSet}$ and then take the median.  By the Chernoff inequality, if one samples at least a logarithmic number of values from a set of $n$ values ($\ddSetY{\PntSet}$ in our case), then, with high probability, the median of the sample will lie near the median of the entire set.  (See \lemref{high_Prob} for a formal statement and proof of this observation.)  The problem with this approach is that sampling $O(\log n)$ values from $\ddSetY{\PntSet}$ takes $\Theta(n\log n)$ time (at least naively).  Specifically, while the sampling is easy, computing the associated values requires $\Omega(n)$ time per sampled point (note this is not a problem for median selection, as there we are sampling values directly rather than points from which values are determined).

Therefore, we not only take a sample from which to select our median value, but also a sample from which we will determine what these values are.  Namely, for each point in a $\Theta(\log n)$ sized random sample, compute its nearest neighbor in a second $\Theta(n/\log n)$ sized sample, and then take the median of the computed values, and let this quantity be $\nRad$.  Though this approach takes $\Theta(n)$ time, we need to relate somehow $\nRad$ to the desired middle nearest neighbor distance.

\paragraph{Intuition.}
First note that for most points in the $\Theta(\log n)$ sample, their nearest neighbor in $\PntSet$ will not make it into the $\Theta(n/\log n)$ sample from which we compute nearest neighbor distances.  This implies $\nRad$ is an overestimate of middle nearest neighbor distances in $\PntSet$.  The high-level intuition is that how well $\nRad$ approximates middle distances tells us something about the distribution of $\PntSet$.  There are two cases.

In the first case, most of the points of $\PntSet$ are in small clusters\footnote{Think of constant-size clusters here, although in the algorithm they may be up to $\log^2 n$ sized.  Also, note the clusters are non-trivial, as otherwise we are really in the second case. }.  This case can be detected since here $\nRad$ will be way off, since for any point in the $\Theta(\log n)$ sample, a $\Theta(n/\log n)$ sample is not likely to have a point from the same small cluster.  The advantage is that if points are in small clusters, then one only needs to look within a small cluster to compute a nearest neighbor distance (and this is done using connectivity clustering).%

In the second case, rather than being clustered, $\PntSet$ is roughly evenly distributed.  Hence, intuitively, for any $k$, $\ddiX{k}{\pnt}{\PntSet} \leq k\ddX{\pnt}{\PntSet}$.  Note that for any point in the $\Theta(\log n)$ sample, with high probability we are expected to see one of its $\log^2 n$ nearest neighbors in a $\Theta(n/\log n)$ sample, and so $\nRad \leq \ddiRankY{\log^2 n}{1/2}{\PntSet}$.  Combining these observations, roughly speaking $\nRad \leq (\log^2 n) \ddRankY{1/2}{\PntSet}$.  In other words, we know the median lies in a $\log^2 n$ spread interval, $[\nRad/\log^2 n, \nRad]$.  This case is then resolved using nets and grids (the details here are non-trivial).%

\begin{figure}[t]%
    \phantomsection%
    \setcounter{gx@linecounter}{0}%
    \setcounter{g@linecounter}{0}%
    \AlgorithmAnchor{\si{midNN}}%
    \centerline{%
       \fbox{%
          \begin{nprogram}
              \medianNN{}$(\PntSet)$:\+\\
              $\nRad \leftarrow \estLogDist(\PntSet)$. \\[0.2cm]%
              $\res \leftarrow \deciderRank\pth{ \PntSet,\,
                 \nRad, \,\mathbf{ 3/4} }$\\
              \If $\res = ``\ddRankY{3/4}{\PntSet} \in [x,y]$'' \Then \Return $x$ \>\tabright{\CodeComment{// $y/x = O(1)$}}\\%
              \If $\res = ``\nRad < \ddRankY{3/4}{\PntSet}$'' \Then%
              \Return $\nRad$%
              \linelab{easy_Out}%
              \>\tabright{\CodeComment{//\remref{easy_Out}}} \\[0.2cm]
              \CodeComment{// Must be that $\res = ``\nRad > \ddRankY{3/4}{\PntSet} $''.
              }\\
              $\nxRad \leftarrow \nRad \, /\, 64 \ceil{\ln^2 n} $\\[0.2cm]
              \begin{math}
                  \reslog \leftarrow \deciderRank\pth{\bigl.  \PntSet, \, \nxRad, \, \mathbf{1/2} }
              \end{math}%
              \\
              \If $\reslog = ``\ddRankY{1/2}{\PntSet} \in [x,y]$'' \Then%
              \Return $x$ \>\tabright{\CodeComment{// $y/x =
                    O(1)$}}\\
              \If {$\reslog = ``\nxRad < \ddRankY{1/2}{\PntSet}$''} \Then \\%
              \>\>%
              \linelab{log_Spread}%
              \Return
              \begin{math}
                  \lowSpread\pth{ \bigl. \PntSet, \; \nxRad / 4 , \; 4 \nRad\, }
              \end{math}
              \>\tabright{\CodeComment{//\secref{logarithmicAlg}}}\\[0.3cm]
              \CodeComment{// Must be that $\res_{\log} = ``\nxRad >
                 \ddRankY{1/2}{\PntSet}$''} \\
              \Return \linelab{small} $\smallComp(\nRad, \PntSet)$ \>\tabright{\CodeComment{//\secref{smallAlg}}}%
          \end{nprogram}
       }%
    }%
    \caption{Algorithm for computing a constant factor approximation to a value in the interval $[ \ddRankY{\delta}{\PntSet}, \ddRankY{1- \delta}{\PntSet}]$, for some fixed constant $\delta \in (0,1)$.%
    }
    \figlab{high_prob}%
\end{figure}%

\subsection{The Algorithm}
\seclab{high_Prob_algorithm}

We now present the algorithm in full detail.  Specifically, let \medianNN denote the new algorithm for computing a value in $\Interval = [\ddRankY{\delta}{\PntSet}, \allowbreak \ddRankY{1- \delta }{\PntSet}]$, for some fixed constant $\delta \in (0,1)$.  In \secref{f_n_n} it was shown how to convert a constant factor approximation to such a value into an exact value.  Hence, computing a constant factor approximation to a value in $\Interval$ in linear time, with high probability, suffices for our purposes.

The full algorithm is shown in \figref{high_prob}.  Before analyzing the algorithm, we first define several subroutines that it employs.

\AlgorithmAnchor{decider_rank}%
\subsubsection{\TPDF{\deciderRank}{decider{}Rank}: Counting distances}%

Given a point set $\PntSet$ and a distance $r$, the decision procedure $\deciderRank(\PntSet, r, \alpha)$ decides if the $t$\th nearest neighbor distance in $\ddSetY{\PntSet}$ is smaller than $r$, larger than $r$, or roughly equal to it, where $t=\floor{\alpha \cardin{\PntSet}}$. This decider is implicitly described in \thmref{k_th_m_nn}, but a simpler decider suffices, which we describe here in detail for the sake of clarity. To this end, throw the points of $\PntSet$ into a grid with cell diameter $r$. Now, for every point $\pnt \in \PntSet$, let $w_r(\pnt)$ be the number of the points in $\GridNbr{\pnt}{r}$, not including $\pnt$ itself.  Clearly, $w_r(\pnt)$ can be computed in constant time (given we have first recorded the number of points in each non-empty grid cell). Now, we compute the count
\begin{align*}
  T%
  =%
  \sum_{\pnt \in \PntSet} \frac{w_r(\pnt)}{2},
\end{align*}
in $O( \cardin{\PntSet})$ time. Now, if $T < t$, the procedure returns that $r$ is smaller than the $t$\th nearest-neighbor distance.  Similarly, if $T \geq t$, then the $t$\th nearest-neighbor distance must be at most $3r$. In this case, by also computing the count for the distance $r/3$, one will be able to either return a constant spread interval that contains the desired distance, or that this distance is strictly smaller than $r$. Thus yielding the required decider.

\subsubsection{\TPDF{\estLogDist}{estLogDist}: Sampling a distance}
\seclab{estimate}

Given a set $\PntSet$ of $n$ points in $\Re^d$, take two random samples from $\PntSet$, a sample $\XSet$ of size $\Theta(\log n)$ and a sample $\SSet$ of size $\Theta(n/\log n)$, and compute $\ddRankX{1/4}{\XSet}{\SSet}$.  This can be done in $O(n)$ time, by scanning all of $\SSet$ for each point in $\XSet$ to compute the set of distances $\ddSetX{\XSet}{\SSet}$.  Next, compute the value of rank $\cardin{\XSet}/4$ in this set.  We denote this algorithm by $\estLogDist(\PntSet)$, and it is shown in \figref{med}.

\begin{figure}[t]%
    \AlgorithmAnchor{est_log_dist}%
    \centerline{%
       \fbox{%
          \begin{myprogram}%
              \estLogDist{}$(\PntSet)$:\+\\
              $n=\cardin{\PntSet}$\\
              $\XSet \leftarrow$ random sample of $\PntSet$ of size
              $\ceil{ \cE \ln n }$.\\
              $\SSet \leftarrow$ random sample of $\PntSet$ of
              size $\ceil{\cF n/\ln n}$\\
              Compute $\ddSetX{\XSet}{\SSet}$\\
              \Return the element of rank $\cardin{\XSet} / 4$ in $\ddSetX{\XSet}{\SSet}$.
          \end{myprogram}%
       }%
    }%
    \caption{Roughly estimating $\ddiRankY{O(\log n)}{1/4}{\PntSet} $. Here $\cE$ and $\cF$ are sufficiently large constants.}
    \figlab{med}%
\end{figure}%

\subsubsection{\TPDF{\lowSpread}{lowSpread}: Logarithmic Spread}
\seclab{logarithmicAlg}

The subroutine \lowSpread used by \medianNN (i.e. \lineref{log_Spread} in \figref{high_prob}) handles the case when the desired middle nearest neighbor distance lies in a poly-logarithmic spread interval. The analysis of \lowSpread can be found in \secref{logarithmic}.  In the following, $\PntSet$ is a set of unweighted points such that $[\ddRankY{1/2}{\PntSet}, \allowbreak \ddRankY{3/4}{\PntSet}]\subseteq [r, R]$, for some values $ 0<r\leq R $, where $n=\cardin{\PntSet}$. (Note, in the following, when sampling from a set $\SSet$, we treat it as a set of $\WeightX{\SSet}$ points.)

\begin{algorithm}[\unboldmath{$\lowSpread(\PntSet, r, R )$}]%
    \alglab{spread}%
    \AlgorithmAnchor{log_n_n}%
    ~
    \begin{compactenum}[\qquad (A)]
        \item Compute $\SSet = \netX{r/8}{\PntSet}$.
        \item Throw the points of $\SSet$ into a grid of sidelength $R$.

        \item $\RSample \leftarrow$ random sample from $\SSet$ of $\Theta(\log(n))$ points.

        \item Let $\ZSet$ be an approximation to the set $\ddSetX{\RSample}{\SSet} $.

        Specifically, for each point $\pnt \in \RSample$ look within $\GridNbr{\pnt}{2R}$ for its nearest neighbor, see \defref{grid_stuff}.  If such a neighbor is found, record this distance (which might be zero if $\pnt$ is of weight $>1$), and if no neighbor is found, set this value to $\infty$.

        \item Let $\pnt \in \PntSet$ be the point that corresponds to the number of rank $(5/8)\cardin{\ZSet}$ in $\ZSet$, and let $\tau$ be this number.

        \item Return $\tau$ ($\tau$ is a constant approximation to $\ddX{\pnt}{\PntSet}$).
    \end{compactenum}
\end{algorithm}

\subsubsection{\TPDF{\smallComp}{smallComp}: Many small components}
\seclab{smallAlg}

The subroutine \smallComp, used by \medianNN (i.e. \lineref{small} in \figref{high_prob}), handles the case when the desired middle nearest neighbor distance is considerably smaller than the value sampled by \estLogDist. This corresponds, intuitively, to the situation where the input is mainly clustered in tiny clusters.  The analysis of \smallComp is delegated to \secref{small}.

\begin{algorithm}[\unboldmath{$\smallComp( \nRad, \PntSet)$}]%
    \AlgorithmAnchor{small_comp}%
    \alglab{small}%
    The algorithm works as follows:
    \begin{compactenum}[\quad (A)]
        \item Compute a partition $\ccx$, such that $ \CCX{\PntSet}{\rho} \sqsubseteq \ccx \sqsubseteq \CCX{\PntSet}{2 \rho }$ in linear time, using the algorithm of \lemref{longest_edge_partition}, where $\rho = \nRad / 8m$, where $m = \ceil{\ln^2 n}$ and $n = \cardin{\PntSet}$.

        \item Identify the components of $\ccx$ which are neither singletons nor contain more than $\cC \ln^2 n $ points, where $\cC$ is a sufficiently large constant.

        \item Collect all the points in these components into a set of points $\XSet$ (this set is not empty).

        \item Take a random sample $\RSample$ of $\cD \log n$ points from $\XSet$, where $\cD$ is a sufficiently large constant.

        \item For each point in $\RSample$, compute its nearest neighbor in $\PntSet$ by scanning the non-trivial connected component in $\ccx$ that contains it. Let $\ZSet$ be this set of numbers.

        \item Return the median value in $\ZSet$.
    \end{compactenum}
\end{algorithm}

\subsection{Algorithm Analysis}
\seclab{high_Prob_analysis}

\subsubsection{Why the median of a sample is a good estimate for the median}

The following lemma is by now standard, and we include its proof for the sake of completeness.

\begin{lemma}
    \lemlab{high_Prob}%
    Let $\XSet$ be a set of $n$ real numbers, constants $\delta \in (0,1/2)$, $\alpha \in (0,1)$, and a parameter $t > 0$.  Let $\RSample$ be a random sample of size $t$ picked (with repetition and uniformly) from $\XSet$.  Then, the element of rank $\alpha t$ in $\RSample$ has rank in the interval $\pbrc[]{(1-\delta)\alpha n,\, (1+\delta)\alpha n}$ in the original set $\XSet$, with probability $\geq 1-2\exp(-\delta^2 \alpha t/ 8)$.  In particular, for $\ds t\geq \pth{8c\ln n}/\pth{\delta^2 \alpha } $, this holds with probability $\geq 1-2/n^c$.
\end{lemma}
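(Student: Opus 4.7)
The plan is a textbook Chernoff-plus-union-bound argument. Sort $\XSet$ as $x_1 \leq \cdots \leq x_n$, and let $Y$ denote the element of rank $\lfloor \alpha t\rfloor$ in $\RSample$. The rank of $Y$ in $\XSet$ falls outside $[(1-\delta)\alpha n,\,(1+\delta)\alpha n]$ precisely when one of two ``bad'' events occurs: either (i) at least $\lfloor \alpha t\rfloor$ of the samples land in the short prefix $L = \brc{x_1,\ldots, x_{\lfloor (1-\delta)\alpha n\rfloor}}$, pushing $Y$'s rank too low; or (ii) fewer than $\lfloor \alpha t\rfloor$ samples land in the longer prefix $R = \brc{x_1,\ldots, x_{\lfloor (1+\delta)\alpha n\rfloor}}$, pushing $Y$'s rank too high.

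Write $Z_L = \cardin{\RSample \cap L}$ and $Z_R = \cardin{\RSample \cap R}$. By independent uniform sampling, each is a sum of $t$ i.i.d.\ Bernoulli indicators with $\Ex{Z_L} \leq (1-\delta)\alpha t$ and $\Ex{Z_R} \geq (1+\delta)\alpha t$. I would apply the multiplicative Chernoff upper tail to $Z_L$ with relative deviation $\epsilon = \delta/(1-\delta)$, and the corresponding lower tail to $Z_R$ with $\epsilon' = \delta/(1+\delta)$; the assumption $\delta < 1/2$ keeps both $\epsilon,\epsilon'$ in $(0,1)$, so the standard form of the bound applies and yields
\[
\Prob{Z_L \geq \alpha t} \leq \exp\!\pth{-\delta^2 \alpha t /\pth{3(1-\delta)}}
\quad \text{and} \quad
\Prob{Z_R < \alpha t} \leq \exp\!\pth{-\delta^2 \alpha t /\pth{2(1+\delta)}}.
\]
Both right-hand sides are comfortably at most $\exp\pth{-\delta^2 \alpha t /8}$, and a union bound delivers the stated failure probability $2\exp\pth{-\delta^2 \alpha t /8}$.

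The ``in particular'' clause then follows immediately by substituting $t = 8c\ln n/(\delta^2\alpha)$, which collapses the exponent to $-c\ln n$ and hence the failure probability to $2/n^c$. There is no genuine obstacle here -- this is the standard ``median of a random sample approximates the median'' calculation -- and the only mild care needed is tracking the Chernoff denominators ($3$ for the upper tail, $2$ for the lower) to verify that the clean constant $8$ in the statement is indeed attained under $\delta \leq 1/2$.
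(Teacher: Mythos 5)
Your proposal is correct and follows essentially the same route as the paper: the paper likewise defines the two counts (its $Y_L$ and $Y_R$ are your $Z_L$ and $Z_R$), observes that the sample's rank-$\alpha t$ element lands in the target interval whenever $Y_L \leq \alpha t \leq Y_R$, and closes with a multiplicative Chernoff upper tail on one count and lower tail on the other plus a union bound. The only difference is cosmetic — the paper uses the slightly looser deviations $1+\delta$ and $1-\delta/2$ in place of your exact $\delta/(1\mp\delta)$ — and both land on the same $2\exp(-\delta^2\alpha t/8)$ bound.
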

\begin{proof}
    Let $Y_L$ be the number of values in $\RSample$ of rank $\leq (1-\delta)\alpha n$ in $\XSet$.  Similarly, let $Y_R$ be the number of values in $\RSample$ of rank $\leq (1+\delta)\alpha n$ in $\XSet$.  Clearly
    \begin{math}
        \mu_L = \Ex{Y_L \bigr. } = (1-\delta)\alpha t
    \end{math}
    and $\mu_R = \Ex{ Y_R \bigr.} = (1+\delta)\alpha t\Bigr.$.

    Observe that if $Y_L \leq \alpha t \leq Y_R$ then the element of rank $\alpha t$ in $\RSample$ has rank in $\XSet$ in the interval $\pbrc[]{(1-\delta)\alpha n, (1+\delta)\alpha n}$, as desired.  By the Chernoff inequality \cite{mr-ra-95}, we have that
    \begin{align*}
      \Prob{ \Bigl.Y_L > \alpha t }%
      &=%
        \Prob{Y_L > \frac{ \alpha t}{\mu_L} \mu_L}%
        = %
        \Prob{Y_L > \frac{ \alpha t}{(1 - \delta)\alpha t} \mu_L}%
      \\&
      \leq%
      \Prob{\Bigl. Y_L > (1+\delta) \mu_L}%
      \leq%
      \exp \pth{ - \frac{ \delta^2}{4} \mu_L }%
        \leq%
        \exp \pth{ - \frac{ \delta^2 }{8} \alpha t },
    \end{align*}
    as $1/(1-\delta) > 1+\delta$.  Arguing as above, the other bad
    event probability is
    \begin{align*}
        \Prob{ \Bigl. Y_R < \alpha t }%
        &\leq %
        \Prob{ \Bigl. Y_R < \frac{ \alpha t }{\mu_R} \mu_R }%
        =%
        \Prob{ \Bigl. Y_R < \frac{ \alpha t }{(1+\delta)\alpha t}
           \mu_R }%
        \\&%
        \leq%
        \Prob{ \Bigl. Y_R < \pth{1-\frac{\delta}{2}} \mu_R }%
        \leq%
        \exp\pth{ - \frac{\delta^2}{8}\mu_R} \leq%
        \exp\pth{ - \frac{\delta^2}{8}\alpha t}.%
        \qedhere
    \end{align*}
\end{proof}

\subsubsection{\TPDF{\estLogDist}{estLogDist}: Analysis}
\seclab{est_log_dist_analysis}

\noindent%
\begin{lemma}%
    \lemlab{est_log_dist}%
    Let $\PntSet$ be a set of $n$ points in $\Re^d$. Let $\nRad$ be
    the distance returned by executing $\estLogDist(\PntSet)$, see
    \figref{med}.  With high probability, we have the following:
    \begin{compactenum}[\quad (A)]
        \item For all points $\pnt \in \PntSet$, the set
        $\cardin{ \PntSet \cap \ballCR{\pnt}{r_\pnt} }$ has at most
        $m = \ceil{\ln^2 n}$ points, where
        \begin{inparaenum}[(i)]
            \item $r_\pnt = \ddX{\pnt}{\SSet}$,
            \item $\SSet$ is the random sample used by \estLogDist,
            and
            \item $\ballCR{\pnt}{r_\pnt}$ is the ball of radius
            $r_\pnt$ centered at $\pnt$.
        \end{inparaenum}

        \item Let $\Partition$ be a clustering of $\PntSet$ such that $\CCX{\PntSet}{\rho} \sqsubseteq \Partition \sqsubseteq \CCX{\PntSet}{2\rho}$, as computed by the algorithm of \lemref{longest_edge_partition}, for $\rho = \nRad / 8m$.  Then, at least $(5/8)n$ of the points of $\PntSet$ are in clusters of $\Partition$ that contain at most $m$ points.

        \item $\nRad \geq \ddRankY{1/8}{\PntSet}$.
    \end{compactenum}
\end{lemma}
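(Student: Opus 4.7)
\medskip
\noindent\textbf{Proof plan.} All three parts follow by combining the standard "median of a sample" concentration (\lemref{high:Prob}) with straightforward volume/packing arguments. For the statement "with high probability" I will fix a constant $c$, pick the constants $\cE,\cF$ in \figref{med} large enough, and argue that each of (A)--(C) fails with probability $\le 1/n^{c+1}$; a final union bound yields the claim.

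\medskip
\noindent\textbf{Part (A).} Fix a point $\pnt\in\PntSet$. If $\pnt\in\SSet$ then $r_\pnt=0$ and there is nothing to prove, so assume $\pnt\notin\SSet$, and let $q_1,q_2,\ldots,q_{n-1}$ be $\PntSet\setminus\{\pnt\}$ sorted by distance from $\pnt$. Since $r_\pnt=\min_{q\in\SSet}\distX{\pnt}{q}$, the event $|\PntSet\cap\ballCR{\pnt}{r_\pnt}|>m$ implies that none of $q_1,\ldots,q_{m-1}$ lie in $\SSet$ (otherwise the nearest sample neighbor would be among them, and the ball would contain at most $m$ points of $\PntSet$ including $\pnt$). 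Since $\SSet$ is a uniform random sample of size $\ceil{\cF n/\ln n}$, this probability is bounded by $(1-(m-1)/n)^{|\SSet|}\le \exp(-(m-1)|\SSet|/n)\le \exp(-\cF\ln n\cdot(1-o(1)))\le n^{-(c+2)}$ for $\cF$ a sufficiently large constant. A union bound over all $n$ points of $\PntSet$ gives (A) with failure probability $\le 1/n^{c+1}$.

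\medskip
\noindent\textbf{Part (C).} The returned value $\nRad$ is the rank $\cardin{\XSet}/4$ element of $\ddSetX{\XSet}{\SSet}$, and $\XSet$ is a uniform random sample of $\PntSet$ of size $\ceil{\cE\ln n}$ drawn independently of $\SSet$. Conditioned on $\SSet$, apply \lemref{high:Prob} to the $n$-element real set $\ddSetX{\PntSet}{\SSet}$ with $\alpha=1/4$ and $\delta$ a small fixed constant (say $\delta=1/2$). For $\cE$ a sufficiently large constant this gives, with probability $\ge 1-n^{-(c+2)}$, that at least $(1-\delta)n/4\ge n/8+|\SSet|$ points $\pnt\in\PntSet$ satisfy $\ddX{\pnt}{\SSet}\le\nRad$. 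At most $|\SSet|$ of these can actually lie in $\SSet$, so at least $n/8$ points $\pnt\notin\SSet$ satisfy $\ddX{\pnt}{\PntSet}\le\ddX{\pnt}{\SSet}\le\nRad$, and hence $\ddRankY{1/8}{\PntSet}\le\nRad$, proving (C).

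\medskip
\noindent\textbf{Part (B).} Assume both (A) and the same concentration event used for (C) hold; I also now need the upper tail of the same sampling argument, namely that with high probability at most $(1+\delta)n/4\le 3n/8$ points of $\PntSet$ satisfy $\ddX{\pnt}{\SSet}\le\nRad$. The key structural observation is the following: suppose $\pnt$ lies in a cluster $C$ of $\Partition$ of size $>m$. Then, because $\Partition\sqsubseteq\CCX{\PntSet}{2\rho}$, $C$ is contained in a connected component of $\CCX{\PntSet}{2\rho}$, and one can reach $m$ points of $\PntSet\setminus\{\pnt\}$ from $\pnt$ by a path of at most $m$ hops, each of length $\le 2\rho$. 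By the triangle inequality all these points lie within distance $2\rho m=\nRad/4$ of $\pnt$, so $|\PntSet\cap\ballCR{\pnt}{\nRad/4}|>m$. By part (A), this forces $r_\pnt=\ddX{\pnt}{\SSet}<\nRad/4\le\nRad$. Hence every point in a large cluster contributes to the set $\{q\in\PntSet\mid\ddX{q}{\SSet}\le\nRad\}$, which has size $\le 3n/8$. Therefore at least $n-3n/8=5n/8$ points lie in clusters of $\Partition$ of size at most $m$, establishing (B).

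\medskip
\noindent\textbf{Main obstacles.} The only nontrivial point is orchestrating the constants so that all three high-probability events hold simultaneously under a single union bound; the counting steps themselves are routine once the contrapositive form of (A) is applied to the BFS-diameter bound in (B).
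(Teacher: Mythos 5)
Your proof is correct, and parts (A) and (C) follow essentially the paper's argument: (A) is the same sampling/packing bound (if the $m-1$ nearest neighbors of $\pnt$ all miss $\SSet$ fails with probability $n^{-\Omega(\cF)}$), and (C) is the lower tail of \lemref{high:Prob} applied to $\ddSetX{\PntSet}{\SSet}$ combined with $\ddX{\pnt}{\PntSet}\leq\ddX{\pnt}{\SSet}$. For (B), however, you take a genuinely different route. The paper argues directly: for each of the $\geq(5/8)n$ points $\pnt$ with $\ddX{\pnt}{\SSet}\geq\nRad$, the ball $\ballCR{\pnt}{\nRad}$ contains at most $m$ points by (A), so partitioning it into $m+2$ concentric rings of width $\nRad/(m+2)>2\rho$ yields an empty ring, which the cluster of $\CCX{\PntSet}{2\rho}$ containing $\pnt$ cannot cross; hence that cluster has at most $m$ points. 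You instead prove the contrapositive: any point in a cluster of size $>m$ reaches $m$ other points within $m$ hops of length $\leq 2\rho$, hence $\ballCR{\pnt}{\nRad/4}$ contains $>m$ points, which by (A) forces $\ddX{\pnt}{\SSet}<\nRad/4$, and the upper tail of \lemref{high:Prob} caps the number of such points at $3n/8$. Both arguments rest on the same two ingredients ((A) plus the two-sided rank concentration); yours trades the pigeonhole-on-rings step for a BFS hop-count bound, which is arguably slightly cleaner, while the paper's version localizes the small cluster inside an explicit empty annulus (a picture that is reused in the intuition of \secref{justRight}).

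Two minor slips, neither fatal. First, in (C) the inequality $(1-\delta)n/4\geq n/8+\cardin{\SSet}$ is false for $\delta=1/2$ (the two sides are $n/8$ and $n/8+\Theta(n/\ln n)$); but the correction term is unnecessary, since $\ddX{\pnt}{\PntSet}\leq\ddX{\pnt}{\SSet}$ holds for every $\pnt\in\PntSet$, including points of $\SSet$, because the minimum in $\ddX{\pnt}{\SSet}$ is taken over $\SSet\setminus\brc{\pnt}\subseteq\PntSet\setminus\brc{\pnt}$. Second, and for the same reason, in (A) the claim ``if $\pnt\in\SSet$ then $r_\pnt=0$'' is wrong under the paper's definition of $\ddX{\cdot}{\cdot}$ (which excludes $\pnt$ itself); the case split is simply not needed, as your probability bound for ``none of $q_1,\ldots,q_{m-1}$ lies in $\SSet$'' applies verbatim whether or not $\pnt\in\SSet$.
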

\begin{figure}[t]
    \centering
    \includegraphics{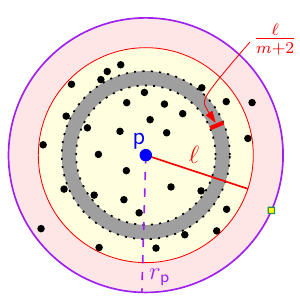}%

    \captionof{figure}{Illustration of case (B) in the proof of
       \lemref{est_log_dist}.}%
    \figlab{proof_b}%
\end{figure}

\begin{proof}
    (A) Fix a point $\pnt \in \PntSet$, and consider a ball $\ball$ centered at $\pnt$ whose interior contains exactly $\ceil{ \ln^2 n}$ points of $\PntSet$. The probability that a random point from $\PntSet$ hits $\ball$ is $\alpha = \ln^2 n / n$. As such, the probability that all the points of $\SSet$ miss it is at most $(1-\alpha)^{\cE n /\ln n} \leq e^{-\cE \ln n} \leq 1/n^{\cE}$, implying the claim, if $\cE$ is sufficiently large.

    (B) By \lemref{high_Prob}, for $\cF$ sufficiently large, we have that $\nRad$ is in the range %
    $[ \ddRankX{1/8} {\PntSet} {\SSet}, \ddRankX{3/8}{\PntSet} {\SSet}]$, %
    and this holds with high probability. That is, for at least $(5/8)n$ points of $\PntSet$ have $\ddX{\pnt}{\SSet} \geq \nRad$, and let $\PntSetB$ be this set of points. For every point $\pnt \in \PntSetB$, we have that the ball $\ball = \ballCR{\pnt}{\nRad} \subseteq \ballCR{\pnt}{r_\pnt}$ contains at most $m$ points of $\PntSet$, by (A). Partitioning $\ball$ into equal radius rings of width $\nRad/(m+2)$, one of them must be empty of points of $\PntSet$. In particular, no cluster of $\CCX{\PntSet}{2 \rho}$ can contain points on both sides of this ring, see \figref{proof_b}. As such, at least $(5/8)n$ of the points of $\PntSet$ are in clusters of $\Partition$ that contain at most $m$ points.

    (C) Follows immediately from (B), as
    $\nRad \geq \ddRankX{1/8}{\PntSet}{\SSet} \geq
    \ddRankY{1/8}{\PntSet}$.
\end{proof}

\begin{remark}
    \remlab{easy_Out}%
    \lineref{easy_Out} of $\medianNN(\PntSet)$ returns $\nRad$ as the approximate middle nearest neighbor distance if $\deciderRank$ returned $\nRad<\ddRankY{3/4}{\PntSet}$.  \lemref{est_log_dist} (C) then implies that
    \begin{math}
        \ddRankY{1/8}{\PntSet}\leq \nRad<\ddRankY{3/4}{\PntSet},
    \end{math}
    with high probability, and so $\nRad$ is a valid approximation.  In the unlikely event that $\ddRankY{1/8}{\PntSet}> \nRad$, one simply runs $\medianNN(\PntSet)$ again.  This event can be determined with the call $\deciderRank(\PntSet, \nRad,1/8)$.
\end{remark}

\subsubsection{\TPDF{\lowSpread}{lowSpread}: Analysis}
\seclab{logarithmic}

During an execution of $\medianNN(\PntSet)$, $\lowSpread(\PntSet, r, R)$ gets called (see \algref{spread}) only if the interval $[r, R]$ contains $\ddRankY{1/2}{\PntSet}, \ldots, \ddRankY{3/4}{\PntSet}$.

\begin{lemma}
    Let $\PntSet$ be a point set in $\Re^d$ of size $n$. Given $r, R \in\Re$, such that (i) $R/r = O\pth{\log^2 n }$, and (ii) $\ddRankY{1/2}{\PntSet}, \ldots, \allowbreak \ddRankY{3/4}{\PntSet} \in [r, R]$, then $\lowSpread(\PntSet, r, R)$ runs in $O(n)$ time, and returns a value which is a $O(1)$-approximation to some nearest neighbor distance $\ddX{\pnt}{\PntSet} \in [\ddRankY{1/2}{\PntSet},\ddRankY{3/4}{\PntSet}]$, with high probability, for some point $\pnt \in \PntSet$.
\end{lemma}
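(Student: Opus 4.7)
The plan is to verify separately the $O(n)$ running time and the $O(1)$-approximation guarantee that holds with high probability; the running time is relatively routine, while the correctness is the technical heart.

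For the running time, the net $\SSet = \netX{r/8}{\PntSet}$ is computed in $O(n)$ time via \corref{valid}, and the grid construction in step (B) is $O(n)$. Because net points are at mutual distance $\geq r/8$ and $R/r = O\pth{\log^2 n}$, each grid cell of sidelength $R$ contains at most $O\pth{(R/r)^d} = O\pth{\log^{2d} n}$ net points, and the neighborhood $\GridNbr{\pnt}{2R}$ comprises $O(1)$ cells. Hence each of the $\Theta(\log n)$ nearest-neighbor queries in step (D) takes $O\pth{\polylog n}$ time, contributing $o(n)$ in total, and the final computation of $\ddX{\pnt}{\PntSet}$ in step (F) is $O(n)$ by a grid-based scan of $\PntSet$.

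For correctness, I will invoke \lemref{high:Prob} on the random sample $\RSample$ of size $\Theta(\log n)$ drawn from $\SSet$: with high probability, the rank-$(5/8)\cardin{\RSample}$ element of $\ZSet$ has rank close to $(5/8)\cardin{\SSet}$ in $\ddSetY{\SSet}$, provided few entries of $\ZSet$ are $\infty$. To verify the cap at $2R$ is harmless, I use $R \geq \ddRankY{3/4}{\PntSet}$: at most $n/4$ points of $\PntSet$ have $\ddX{\cdot}{\PntSet} > R$, and every net point $q$ with $\ddX{q}{\PntSet} \leq R$ satisfies $\ddX{q}{\SSet} \leq R + r/8 < 2R$, placing its nearest neighbor inside $\GridNbr{q}{2R}$. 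To translate the rank from $\ddSetY{\SSet}$ to $\ddSetY{\PntSet}$, I will use three ingredients: (a) the sandwich $\ddX{q}{\PntSet} \leq \ddX{q}{\SSet} \leq \ddX{q}{\PntSet} + r/8$ for every $q \in \SSet$, following from $\SSet \subseteq \PntSet$ together with the net property; (b) every non-net point has $\ddX{\cdot}{\PntSet} < r/8$; and (c) $\cardin{\SSet} \geq n/2$, which follows from $r \leq \ddRankY{1/2}{\PntSet}$ since at least $n/2$ points of $\PntSet$ have $\ddX{\cdot}{\PntSet} \geq r > r/8$ and must therefore be net points. A careful counting argument combining these then yields $v = \Omega\pth{\ddRankY{1/2}{\PntSet}}$ and $v = O\pth{\ddRankY{3/4}{\PntSet}}$ for $v = \ddX{\pnt}{\PntSet}$, which is the required $O(1)$-approximation to some nearest-neighbor distance in $[\ddRankY{1/2}{\PntSet}, \ddRankY{3/4}{\PntSet}]$.

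The main obstacle is the upper bound on $v$. The trivial estimate $v \leq 2R$ only yields a $\polylog n$-factor approximation since the spread $R/r$ is $\Theta\pth{\log^2 n}$; the rank-$(5/8)$ condition must be exploited to pin $v$ near a specific percentile of $\ddSetY{\PntSet}$ rather than anywhere in $[r, R]$. The most delicate part is tracking how the additive $r/8$ perturbation between $\ddX{\cdot}{\SSet}$ and $\ddX{\cdot}{\PntSet}$ interacts with the fact that $\cardin{\SSet}$ may differ from $n$ by a constant factor; the specific constant $5/8$ (rather than $1/2$) is chosen so that the translated rank in $\ddSetY{\PntSet}$ remains in the desired subrange of $[n/2, 3n/4]$ despite these distortions.
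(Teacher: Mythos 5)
Your running-time analysis is correct and matches the paper's (net in $O(n)$, packing bound of $O((R/r)^d)=O(\log^{2d} n)$ net points per relevant cell, polylogarithmic work per sample point). The gap is in the correctness half, precisely at the rank-transfer step that you flag as ``the most delicate part'' and then leave as ``a careful counting argument'': that argument, as you set it up, does not go through. The sample $\RSample$ is drawn from $\SSet$ \emph{weighted} by the net-point weights (the algorithm explicitly treats $\SSet$ as a set of $\WeightX{\SSet}=n$ points), so \lemref{high:Prob} places the selected element at weighted rank $\approx (5/8)n$ in the weighted multiset $\ddSetY{\SSet}$ --- not at unweighted rank $(5/8)\cardin{\SSet}$ among the $\cardin{\SSet}$ net points, which is what your ingredients (a)--(c) are aimed at. The unweighted version is genuinely false for the upper bound: take $n/2-O(1)$ points packed in a ball of radius $r/100$ (they collapse to a single net point of weight $n/2-O(1)$), about $n/4$ points with nearest-neighbor distance $2r$, and about $n/4$ isolated points with nearest-neighbor distance $D\gg R$; conditions (i)--(ii) hold and $\cardin{\SSet}\approx n/2$, yet the element of unweighted rank $(5/8)\cardin{\SSet}$ among the net nearest-neighbor distances lands in the block of value $D$, so no $O(1)$-approximation follows. (Separately, your sandwich (a) is not quite right: a net point $q$ of weight $>1$ has recorded distance $0$ while $\ddX{q}{\PntSet}$ may be positive, though $<r/8$, so the left inequality only holds up to an additive $r/8$.)

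The paper closes this step by a different and cleaner route: since $\SSet=\netX{r/8}{\PntSet}$ is an $(r/8)$-translation of $\PntSet$, every value of the weighted multiset $\ddSetY{\SSet}$ differs from the corresponding value of $\ddSetY{\PntSet}$ by at most $2\cdot r/8=r/4$, so by \clmref{change} every rank statistic moves by at most $r/4\leq\ddRankY{1/2}{\PntSet}/4$. Combined with \lemref{high:Prob}, the selected value lies in $\bigl[\tfrac12\ddRankY{1/2}{\PntSet},\,2\ddRankY{3/4}{\PntSet}\bigr]\subseteq[r/2,2R]$, which simultaneously certifies the approximation and justifies capping the search at $\GridNbr{\pnt}{2R}$ (your use of $R\geq\ddRankY{3/4}{\PntSet}$ for the cap is the right idea but should be routed through this bound). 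If you insist on a counting argument, it must be done with weighted ranks measured out of $n$, at which point it essentially reproduces the translation-plus-\clmref{change} computation.
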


\begin{proof}
    Set $\Delta = r/8$.  Using the notation from \lowSpread, see \algref{spread}, let $\SSet = \netX{\Delta}{\PntSet}$ and $\RSample$ the $\Theta(\log(n))$ size sample from $\SSet$.  By \lemref{high_Prob}, with high probability
    \begin{math}
        \ddRankX{5/8}{\RSample}{\SSet} \in
        \pbrc[]{\ddRankY{1/2}{\SSet}, \ddRankY{3/4}{\SSet}} ,
    \end{math}
    and assume that this is the case.  Since $\SSet$ is an $\Delta$-net of $\PntSet$, by \clmref{change}, this changes distances by at most $2\Delta$, which implies that
    \begin{align*}
        \tau &= \ddRankX{5/8}{\RSample}{\SSet} %
      \in%
        \pbrc[]{\ddRankY{1/2}{\PntSet}-2\Delta, \,
           \ddRankY{3/4}{\PntSet} + 2\Delta }%
            \\&%
            \subseteq %
        \pbrc[]{\frac{1}{2}\ddRankY{1/2}{\PntSet}, \, 2
           \ddRankY{3/4}{\PntSet} }%
        \subseteq \pbrc[]{\frac{r}{2}, 2R}.
    \end{align*}
    by assumption.

    In particular, this implies that all the relevant distances from the computation of this middle value are in the range $[r/2,2R]$. As such, when computing the $(5/8)$-middle element, it is allowable to not compute precisely values outside this interval, as done by the algorithm. This implies that \lowSpread indeed returns the value $\tau$, which is the desired approximation.

    As for the running time, observe that $\SSet$ is an $\Delta$-net, and an open ball with this radius around any point must be empty of any other net point.  In particular, for any point $\pnt \in \SSet$, the maximum number of other points that can be in $\GridNbr{\pnt}{2R}$ is $O\pth{\pth{R/r}^{d} } = O\pth{\log^{2d} n}$.  Therefore, for a point $\pnt \in \RSample$, computing $\ddX{\pnt}{\SSet}$ takes polylogarithmic time (using linear time preprocessing and a grid).  Thus, overall, computing the set of distances $\ddSetX{\RSample}{\SSet}$ takes polylogarithmic time, since $\cardin{\RSample} = \Theta(\log n)$.  Hence, the overall running time is $O(n + \polylog n) = O(n)$.
\end{proof}

\subsubsection{\TPDF{\smallComp}{smallComp}: Small Components}
\seclab{small}

\begin{lemma}
    The call to $\smallComp(\nRad, \PntSet)$ by \medianNN computes a point $\pnt \in \PntSet$, such that $\ddX{\pnt}{\PntSet} \in \pbrc[]{ \ddRankY{1/32}{\PntSet}, \ddRankY{31/32}{\PntSet}}$.  The procedure succeeds and its running time is $O(n)$, with high probability, where $n = \cardin{\PntSet}$.
\end{lemma}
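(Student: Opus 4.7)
The plan is to establish two ingredients: an $\Omega(n)$ lower bound on $\cardin{\XSet}$, and a median-of-random-sample concentration argument applied to the nearest-neighbor distances inside $\XSet$. The two preceding decider calls inside \medianNN certify both $\nRad > \ddRankY{3/4}{\PntSet}$ and $\nxRad > \ddRankY{1/2}{\PntSet}$, where $\nxRad = \nRad/(64 \ceil{\ln^2 n})$. Setting $\rho = \nRad/(8 \ceil{\ln^2 n})$ gives $\nxRad = \rho/8$ and aligns with the partition $\ccx$ built in step~(A) of \smallComp.

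First I would argue $\cardin{\XSet} \geq n/8$. Because $\nxRad > \ddRankY{1/2}{\PntSet}$, at least $n/2$ points of $\PntSet$ have nearest-neighbor distance strictly less than $\rho$; every such point lies in a non-singleton component of $\CCX{\PntSet}{\rho}$ and hence, via $\CCX{\PntSet}{\rho} \sqsubseteq \ccx$, of $\ccx$. Meanwhile \lemref{est:log:dist}(B) places, with high probability, at least $5n/8$ points in $\ccx$-components of size at most $\ceil{\ln^2 n}$. Inclusion-exclusion forces at least $n/8$ points to meet both criteria, i.e., to lie in $\XSet$. Moreover, because any point outside a given $\ccx$-component is at distance $>\rho$ from it, each of these ``good'' points has its true nearest neighbor inside its own $\ccx$-component, and therefore the per-point scan in step~(E) of \smallComp returns $\ddX{\pnt}{\PntSet}$ exactly.

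Next I would apply \lemref{high:Prob} to $\RSample$: with high probability the median of the sample's nearest-neighbor values has rank in $\pbrc[]{\cardin{\XSet}/4,\, 3\cardin{\XSet}/4}$ inside the multiset $\ddSetX{\XSet}{\PntSet}$. A standard Chernoff bound ensures that $\RSample$ is dominated by good points, so the value $v$ that the algorithm returns coincides with $\ddX{\pnt^*}{\PntSet}$ for some $\pnt^* \in \PntSet$. Since $\cardin{\XSet} \geq n/8$, this yields both (i) at least $n/32$ points of $\PntSet$ with nearest-neighbor distance $\leq v$, so $v \geq \ddRankY{1/32}{\PntSet}$, and symmetrically (ii) $v \leq \ddRankY{31/32}{\PntSet}$. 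A union bound over the $O(1)$ high-probability events completes correctness.

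The running time is routine: \lemref{longest:edge:partition} produces $\ccx$ in $O(n)$, two linear scans build $\XSet$, and $\Theta(\log n)$ scans of components of size $O(\log^2 n)$ contribute $O(\log^3 n)$, for $O(n)$ total. The subtlest step is reconciling the per-component nearest-neighbor scan, which in general only upper-bounds $\ddX{\pnt}{\PntSet}$, with the true distance that the lemma requires; this reduces precisely to the sub-$\rho$ argument of the preceding paragraph, together with the Chernoff claim that the sampled median is realized by a good point with high probability.
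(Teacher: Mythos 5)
Your proof follows the paper's argument almost step for step: the bound $\cardin{\XSet} \geq n/8$ is obtained, exactly as in the paper, by intersecting the set of $\geq n/2$ points whose nearest neighbor is closer than $\nxRad$ (certified by the decider call) with the set of $\geq 5n/8$ points lying in small components guaranteed by \lemref{est:log:dist}~(B); then \lemref{high:Prob} places the sampled median at rank in $\pbrc[]{\cardin{\XSet}/4, 3\cardin{\XSet}/4}$, and $\cardin{\XSet}\geq n/8$ converts this to rank $\pbrc[]{n/32, 31n/32}$ in $\PntSet$. The running time accounting is also identical.

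The one step that does not hold up is the claim that ``a standard Chernoff bound ensures that $\RSample$ is dominated by good points.'' The good points $\ZSet_1 \cap \ZSet_2$ are only guaranteed to be an $n/8$-sized subset of $\XSet$, while $\cardin{\XSet}$ can be close to $n$ (many points with nearest-neighbor distance in $[\nxRad, 2\rho)$ can sit in small non-trivial components). Chernoff then only yields that a constant --- possibly as small as $1/8$ --- fraction of $\RSample$ is good, so the sample median may well be realized by a bad point, for which the per-component scan over-reports $\ddX{\pnt}{\PntSet}$. To be fair, the paper's own proof silently identifies the scanned values with $\ddSetX{\RSample}{\PntSet}$ and thus leaves the same gap unaddressed. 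The clean repair is not to argue that bad points are rare but that their error is bounded: a bad point has its true nearest neighbor outside its component, hence $\ddX{\pnt}{\PntSet} > \rho$, while the scanned value is the distance to the nearest co-component point, which by the construction of \lemref{longest:edge:partition} is less than $2\rho$; so every scanned value is within a factor of $2$ of the true nearest-neighbor distance, and the rank argument goes through up to the constant factor that \lemref{high:Prob:Summary} tolerates anyway. Your sub-$\rho$ observation is precisely the right ingredient --- it just needs to be used as a multiplicative error bound on the bad points rather than as a claim about their scarcity in the sample.
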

\begin{proof}
    Computing the components of $\ccx$ takes linear time, and given $\ccx$, computing $\XSet$ takes linear time.  For each point in the sample $\RSample$, it takes $O\pth{ \log^2 n }$ time to compute its nearest neighbor since one only has to scan the point's $O(\log^2 n)$ sized connected component in $\ccx$.  Since $\cardin{\RSample} = O(\log n)$, computing the nearest neighbor distances and taking the median takes polylogarithmic time overall.

    Now, $\smallComp(\nRad, \PntSet)$ is called by \medianNN only if the following conditions holds:
    \begin{compactenum}[\qquad (C1)]
        \item $\nRad/64 \ceil{ \ln^2 n} > \ddRankY{1/2} {\PntSet}$.  Namely, at least half of the points of $\PntSet$ have their nearest neighbor in their own cluster in $\CCX{\PntSet}{\rho}$, where $\rho = \nRad / \pth[]{8\ceil{\ln^2 n}}$.  Let $\ZSet_1$ be the set of these points.

        \item Let $\Partition$ be the partition of $\PntSet$ computed by \smallComp. We have that $\CCX{\PntSet}{\rho} \sqsubseteq \Partition \sqsubseteq \CCX{\PntSet}{2\rho}$.  By \lemref{est_log_dist} (B), at least $(5/8) n$ of the points of $\PntSet$ are in clusters of $\Partition$ that contain at most $m$ points (note that some of these points might be a cluster containing only a single point).  Let $\ZSet_2$ be the set of these points.
    \end{compactenum}
    \smallskip%
    As such, $\cardin{\ZSet_1 \cap \ZSet_2} \geq n/8$. Namely, at least $n/8$ of the points of $\PntSet$, are in small clusters and these clusters have strictly more than one point.  Furthermore, $\ZSet_1 \cap \ZSet_2 \subseteq \XSet$, and we conclude that $\cardin{\XSet} \geq n/8$. Now, by \lemref{high_Prob}, with high probability, the returned value is in the range $\ddRankY{1/4}{\XSet}, \ldots, \ddRankY{3/4}{\XSet}$, which implies that the returned value is in the range $\ddRankY{1/32}{\PntSet}, \ldots, \ddRankY{31/32}{\PntSet}$, as desired.
\end{proof}

\subsection{The result}

The analysis in the previous section implies the following.

\begin{lemma}
    \lemlab{high_Prob_Summary}%
    Given a set $\PntSet \subseteq \Re^d$ of $n$ points, $\medianNN(\PntSet)$ computes, a value $x$, such that there is a value of $\ddRankY{1/32}{\PntSet}, \ldots, \ddRankY{31/32}{\PntSet}$ in the interval $[x/c, cx]$, where $c>1$ is a constant. The running time bound holds with high probability.
\end{lemma}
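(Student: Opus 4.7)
The plan is a direct case analysis on the return point of $\medianNN(\PntSet)$ (see \figrefpage{high:prob}), combined with the three supporting results already established: the guarantees on $\nRad = \estLogDist(\PntSet)$ from \lemref{est:log:dist}, the analysis of \lowSpread in \secref{logarithmic}, and the analysis of \smallComp in \secref{small}. Each of these statements holds with probability $\geq 1 - 1/n^{\Omega(1)}$, so a union bound over the (constantly many) events used in a single execution of $\medianNN$ keeps the overall failure probability polynomially small. Throughout, ``$c>1$ is a constant'' will just track the worst of the constant blow-ups appearing in these individual guarantees.

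First I would dispatch the two straightforward early exits: if either call to $\deciderRank$ returns an interval ``$\ddRankY{3/4}{\PntSet}\in[x,y]$'' or ``$\ddRankY{1/2}{\PntSet}\in[x,y]$'', then by the definition of a decider (\defref{decider}) the spread $y/x$ is $O(1)$, and returning $x$ places a genuine value of rank $3/4$ or $1/2$ in $[x/c,cx]$. The other easy branch is \lineref{easyOut}, where $\deciderRank$ certifies $\nRad < \ddRankY{3/4}{\PntSet}$; combined with part~(C) of \lemref{est:log:dist}, which gives $\nRad \geq \ddRankY{1/8}{\PntSet}$ with high probability, this means $\nRad$ itself lies in $[\ddRankY{1/8}{\PntSet},\ddRankY{3/4}{\PntSet}]$, so the returned $\nRad$ is a valid representative (as noted in \remref{easyOut}).

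Next I would handle the two substantive branches. When the algorithm reaches $\lowSpread(\PntSet,\nxRad/4,4\nRad)$, we know from the two decider calls that $\ddRankY{1/2}{\PntSet}$ and $\ddRankY{3/4}{\PntSet}$ both lie in the interval $[\nxRad/4,\,4\nRad]$, which has spread $O(\log^2 n)$ by construction of $\nxRad = \nRad/(64\ceil{\ln^2 n})$; this is exactly the precondition of the \lowSpread lemma in \secref{logarithmic}, which then returns an $O(1)$-approximation to some $\ddX{\pnt}{\PntSet}$ in the target range in $O(n)$ time. When instead the algorithm reaches $\smallComp(\nRad,\PntSet)$, both decider calls certified $\nxRad > \ddRankY{1/2}{\PntSet}$ and $\nRad > \ddRankY{3/4}{\PntSet}$, which are precisely conditions (C1) and the context in which the \smallComp analysis was carried out in \secref{small}; that analysis yields a returned distance in $[\ddRankY{1/32}{\PntSet},\ddRankY{31/32}{\PntSet}]$, again in $O(n)$ time with high probability.

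For the running time, \estLogDist runs in $O(n)$ time deterministically, each call to \deciderRank is $O(n)$, and in each of the remaining branches \lowSpread and \smallComp are both $O(n)$ by the analyses cited above. The only place where the linear bound depends on a random event is inside \smallComp (and inside the use of \lemref{est:log:dist}~(B) that justifies it), where a Chernoff-type concentration on the sample size is needed to ensure that $\cardin{\XSet}\geq n/8$; this is the main technical point, but it is already handled by \lemref{high:Prob} and \lemref{est:log:dist}. A union bound over the three high-probability events (the sampling guarantee of \estLogDist, the median concentration used inside \lowSpread, and the two-condition guarantee used inside \smallComp) completes the proof.
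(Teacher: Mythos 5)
Your proposal is correct and follows exactly the route the paper intends: the paper gives no explicit proof, stating only that ``the analysis in the previous section implies the following,'' and your case analysis over the branches of \medianNN{} --- the decider early exits, \lineref{easyOut} via \lemref{est:log:dist}~(C), the \lowSpread{} branch with its $O(\log^2 n)$-spread precondition, and the \smallComp{} branch via condition (C1) and \lemref{est:log:dist}~(B) --- together with a union bound over the constantly many high-probability events, is precisely that implicit argument made explicit.
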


As shown in \secref{f_n_n}, given a constant factor approximation to some nearest neighbor distance, we can compute the corresponding nearest neighbor distance exactly in linear time using a grid computation.  Let \medianNNExact denote the algorithm that performs this additional linear time step after running \medianNN.  As argued above, replacing \lineref{gen_random} and \lineref{rad} in \figref{a_algorithm} with ``$\nRad_i = \medianNNExact( \PntSet_{i-1} )$'', turns \ndpAlg into an algorithm that runs in linear time with high probability.

We thus have the following analogue of \thmref{result_eps}.

\begin{theorem}%
    For a constant $\cDecider > 3/2$, given an instance of a $\cDecider$-\NDP defined by a set of $n$ points in $\Re^d$, with a $\cDecider$-decider with (deterministic or high-probability) running time $O(n)$, one can get a $\cDecider$-approximation to its optimal value, in $O\pth{ n }$ time, with high probability.

    Similarly, for $\eps \in(0,1/2)$, given an instance of a $\,(1+\eps)$-\NDP defined by a set of $n$ points in $\Re^d$, with a $(1+\eps)$-decider having (deterministic, or high probability) running time $O(n/\eps^c)$, then one can $(1+\eps)$-approximate the optimal value for this \NDP, in $O(n/\eps^c)$ time, with high probability.
\end{theorem}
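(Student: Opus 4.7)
The plan is to modify \ndpAlg (\figref{a:algorithm}) by replacing the random point selection and the nearest-neighbor distance computation in \lineref{gen:random} and \lineref{rad} with a call to $\medianNNExact(\PntSet_{i-1})$. By \lemref{high:Prob:Summary}, this call runs in $O(\cardin{\PntSet_{i-1}})$ time and, with high probability, returns a value $\nRad_i$ that equals $\ddX{\pnt}{\PntSet_{i-1}}$ for some point $\pnt$ whose nearest-neighbor-distance rank lies in $\pbrc[]{\ddRankY{1/32}{\PntSet_{i-1}}, \ddRankY{31/32}{\PntSet_{i-1}}}$. The correctness arguments of \secref{linear} (in particular \lemref{translation} and \lemref{induction}) depend only on the scalar values $\nRad_i$ and the decider responses, never on the distribution from which $\nRad_i$ was drawn, so the output of the modified algorithm remains a valid $\cDecider$-approximation.

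The crucial payoff of this replacement is a deterministic constant-fraction shrinkage per iteration, conditional on the success of \medianNNExact in that iteration. In any iteration that does not terminate, either \delete is invoked in \lineref{delete}, removing the far points $\FarX{\PntSet_{i-1}}{\nRad_i}$, whose count is at least $\cardin{\PntSet_{i-1}}/32$ since $\nRad_i \leq \ddRankY{31/32}{\PntSet_{i-1}}$; or \net is invoked in \lineref{net}, in which case at least $\cardin{\PntSet_{i-1}}/32$ points lie in $\CloseY{\PntSet_{i-1}}{\nRad_i}$, of which \lemref{remove} guarantees at least half are discarded. Either way, $\cardin{\PntSet_i} \leq c\cardin{\PntSet_{i-1}}$ for some constant $c<1$ whenever the corresponding call to \medianNNExact succeeds.

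For the high-probability running-time guarantee in terms of the original $n$, I would parametrize the constants hidden inside \medianNNExact so that every invocation fails with probability at most $1/n^\kappa$ for a sufficiently large constant $\kappa$; once $\cardin{\PntSet_i}$ drops below $\polylog n$ the algorithm falls back to a brute-force evaluation, which still costs only $O(\polylog n)$ time. The algorithm thus terminates within $O(\log n)$ iterations, and by a union bound over these iterations all subroutine calls succeed simultaneously with probability at least $1 - O(\log n)/n^\kappa$. Summing the per-iteration linear work yields a geometric series bounded by $O(n)$, establishing the first statement of the theorem.

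For the $(1+\eps)$-\NDP statement, the plan is to reuse verbatim the doubling-$\eps$ refinement scheme from the proof of \thmref{result:eps}: first apply the high-probability algorithm above with $\eps$ fixed to a constant to obtain a bounded-spread interval around the optimum, then shrink $\eps_i$ geometrically while invoking the $\eps_i$-decider on a single midpoint each round. The decider costs form a geometric series dominated by $O(n/\eps^c)$, and a final union bound over the $O(\log\log(1/\eps))$ refinement rounds preserves the high-probability guarantee. The main obstacle is the careful bookkeeping of this union bound across iterations whose input sizes vary by many orders of magnitude; the brute-force fallback for tiny instances is precisely what allows this failure budget to close up cleanly into a single $O(n)$ or $O(n/\eps^c)$ bound.
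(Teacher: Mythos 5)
Your overall architecture is exactly the paper's: replace \lineref{gen:random} and \lineref{rad} of \ndpAlg with a call to \medianNNExact, observe that the correctness analysis of \secref{linear} uses only the fact that each $\nRad_i$ is \emph{some} nearest-neighbor distance in $\PntSetA_{i-1}$ together with the decider's answers, and note that a successful call forces a deterministic constant-fraction shrinkage (at least $\cardin{\PntSetA_{i-1}}/32$ far points are pruned when $\nRad_i \leq \ddRankY{31/32}{\PntSetA_{i-1}}$, and at least $\cardin{\PntSetA_{i-1}}/64$ points are netted away when $\nRad_i \geq \ddRankY{1/32}{\PntSetA_{i-1}}$, by \lemref{remove}). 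All of that is correct, as is the reduction of the $(1+\eps)$ case to the constant-factor case via the refinement scheme of \thmref{result:eps}.

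The gap is in the union bound. The failure probability of \medianNN is polynomially small in the size of \emph{its own input} $m = \cardin{\PntSetA_{i-1}}$, not in the original $n$: the sample sizes $\Theta(\log m)$ and $\Theta(m/\log m)$, the ring count $\ceil{\ln^2 m}$, and the Chernoff exponents in \lemref{high:Prob} and \lemref{est:log:dist} are all calibrated to $m$. ``Parametrizing the constants'' cannot convert $1/m^{O(1)}$ into $1/n^{\kappa}$ once $m$ is subpolynomial in $n$; for instance, when $m = 2^{\sqrt{\log n}}$ the constant $\cE$ in \lemref{est:log:dist}(A) would have to grow like $\log n/\log m = \sqrt{\log n}$. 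Your brute-force fallback only engages below $\polylog n$, so the entire intermediate regime $\polylog n \ll m \ll n^{\Omega(1)}$ is uncovered, and there the per-iteration failure probability is not $1/n^{\Omega(1)}$, so the final guarantee is weaker than high probability in $n$. The paper closes this hole differently: once the subproblem size drops below $n/\log n$, it reverts to the \emph{original} expected-time sampling step of \ndpAlg and argues via the Chernoff bound (as in the proof of \lemref{time}) that $\Theta(\log n)$ further iterations contain enough successful ones to terminate, with high probability in $n$; since each such iteration costs at most $O(n/\log n)$, the tail contributes only $O(n)$. Thus \medianNNExact is needed only for the initial ``heavy'' iterations, where high probability in $m$ coincides with high probability in $n$. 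Your proof needs either this switch-over, or a genuine re-parametrization of all of \medianNN's thresholds in terms of the original $n$ (which is more than adjusting constants and must be re-verified, e.g.\ that the second sample still has size $\Omega(1)$ and the per-point scan is still $\polylog n$).
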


\begin{corollary}
    All the applications of \secref{applications} can be modified so
    that their running time bound holds not only in expectation, but
    also with high probability.
\end{corollary}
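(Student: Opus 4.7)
The plan is to observe that the corollary is essentially a direct consequence of the preceding high-probability theorem, so what must be checked is that each application in \secref{applications} is set up in a way that actually invokes that theorem. Concretely, I would walk through the applications one at a time and verify two things: (i) each was framed as a $\cDecider$-\NDP or $(1+\eps)$-\NDP in the sense of \defref{N:D:P}, and (ii) the \decider constructed for it runs in \emph{deterministic} time $O(n)$ or $O(n/\eps^c)$, not merely in expected time. A glance at \lemref{k:center:WBDP}, \lemref{k:dist}, \thmref{k:th:m:nn}, \thmref{m:s:t:k:edge}, \thmref{min:cluster:MST}, \thmref{cluster:min:max}, and \thmref{cluster:connected} confirms that in every case the decider is built from grid counting, a net computation (\algref{net}), and an application of \lemref{longest:edge:partition} or \lemref{delete}, all of which are deterministic linear time. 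Thus the hypothesis of the high-probability theorem is met for each \NDP in the paper, and invoking that theorem in place of \thmref{result:eps} gives the desired high-probability bound.

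Next I would address the applications that require some additional processing beyond calling \ndpAlg. For the exact $k$\th nearest neighbor distance and furthest nearest neighbor computations in \secref{f:n:n}, the post-processing is a deterministic linear-time grid scan applied to the constant-factor approximation, so it carries over without change. The $2$-approximation upgrade for \kCenter in \thmref{k:center:2} likewise uses a deterministic grid computation on top of the constant-factor output. For \lemref{non:zero} (closest pair and smallest non-zero distance), the algorithm is still just \ndpAlg with the modification of \remref{multi}, and the replacement of \lineref{gen:random} and \lineref{rad} by $\medianNNExact$ is compatible with that modification, since grouping duplicates before calling the sampler affects neither correctness nor the high-probability running time analysis of \medianNN.

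The step I expect to need the most care is verifying that when \medianNN replaces random point sampling inside \ndpAlg, the properties used in the correctness proof of \secref{linear}, in particular \lemref{induction} and \corref{same:1}, remain intact. The new sampling subroutine is designed to return, with high probability, a value lying between the $\delta$-quantile and the $(1-\delta)$-quantile of nearest-neighbor distances, which is precisely the regime in which the original analysis went through with constant probability. Thus the entire chain from \lemref{remove} through \lemref{summary:1} adapts by replacing ``with probability $\geq 1/2$ a successful iteration removes a constant fraction of the points'' by ``with probability $\geq 1 - n^{-c}$ it does so.'' A union bound over the $O(\log n)$ iterations (each of which shrinks the point set by a constant factor in expectation, and now also with high probability) then yields the linear-time bound with high probability for each application, completing the proof.
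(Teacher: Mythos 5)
Your overall route is the paper's: the corollary is meant to follow by checking that each application's decider is deterministic (grid counting, \net, \delete, \lemref{longest:edge:partition}), so the high-probability theorem applies, and that the post-processing steps (the exact grid scan of \secref{f:n:n}, the $2$-approximation upgrade for $k$-center, the multiset modification of \remref{multi}) are themselves deterministic linear time. That part of your argument is fine and matches the paper.

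The step that would fail as written is the final union bound. You claim that a union bound over the $O(\log n)$ iterations yields the high-probability bound, but the guarantee of \medianNN in iteration $i$ is ``with probability $\geq 1 - 1/\cardin{\PntSetA_{i-1}}^{c}$,'' i.e., polynomially small in the \emph{current} subproblem size, not in $n$. After the subproblem has shrunk to, say, $O(\polylog n)$ points, the per-iteration failure probability can be as large as $1/\polylog(n)$, and a union bound over the remaining iterations gives nothing meaningful relative to $1/n^{c}$; moreover one can no longer assert that only $O(\log n)$ iterations remain. The paper patches exactly this hole in the remark following the corollary: the high-probability sampler is only needed for the initial ``heavy'' iterations, and once the subproblem size drops below $n/\log n$ one reverts to the original random-sampling analysis, observing via Chernoff that among the next $\Theta(\log n)$ iterations at least $\log_{16/15} n$ are successful with high probability, so the total remaining work is $O(n)$ regardless. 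Your proof needs this (or an equivalent) two-phase argument; without it the claimed high-probability bound does not follow from the stated guarantees of \medianNN.
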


\begin{remark}
    Note that the above high probability guarantee is in the size of the input. As such, in later iterations of the algorithm, when the input size of the subproblem is no longer polynomial in $n$ (say, the subproblem size is $O(\polylog n)$), the high probability guarantee is no longer meaningful.

    Fortunately, this is a minor technicality -- once the subproblem size drops below $n / \log n$, the remaining running time of \ndpAlg is $O(n)$, with high probability. A simple direct argument implies this -- by the Chernoff inequality, after $\Theta( \log n)$ iterations, with high probability at least $\log_{16/15} n$ iterations are successful, see the proof of \lemref{time} for details. After this number of successful iterations, the algorithm terminates. This implies that the remaining running time is $O(n)$, with high probability, once the subproblem size is $O(n / \log n)$.

    As such, our new high-probability algorithm is needed only in the initial ``heavy'' iterations of the \ndpAlg algorithm, until the subproblem size drops below $O(n / \log n)$.
\end{remark}

\section{Conclusions}
\seclab{conclusions}

There is further research to be done in investigating the technique presented here.  For example, looking into the implementation of this new algorithm in both standard settings and distributed settings (e.g., MapReduce).  Additionally, since one can now do approximate distance selection in linear time, maybe one can get a speed-up for other algorithms that do (not necessarily point-based) distance selection.

Our framework provides a new way of looking at distance-based optimization problems, in particular through the lens of nets.  We know how to compute nets efficiently for doubling metrics, and it seems one can compute approximate nets in near-linear time for planar graphs.  For example, it seems the new technique implies that approximate $k$-center clustering in planar graphs can be done in near-linear time.  This provides fertile ground for future research.

\paragraph{Open question: When is linear time not possible.}

Another interesting open problem arising out of the high probability linear time algorithm is the following: Given sets $\PntSet$ and $\XSet$ in $\Re^d$, of size $n$ and $O( \log n)$, respectively, how long does it take to approximate the values in the set $\ddSetX{\XSet}{\PntSet}$ (i.e., for each point of $\XSet$ we approximate its nearest-neighbor in $\PntSet$). This problem can be easily solved in $O(n \log n)$ time. The open question is whether it is possible to solve this problem in linear time.

The algorithm in this paper implies that we can compute exactly a value (of a specified rank) in this set in $O(n)$ time (even if $\cardin{\XSet}$ is of size $n$).  We currently have an algorithm with running time $O( n \log \cardin{\XSet}) = O( n \log \log n)$ for approximating $\ddSetX{\XSet}{\PntSet}$.

More generally, this is a proxy to the meta-question of when linear-time algorithms do not exist for distance problems induced by points.

\subsection*{Acknowledgments}

The authors thank Sergio Cabello for valuable discussions. In particular, the deterministic algorithm of \secref{bounded_spread} was suggested by him. The authors also thank the anonymous referees for their insightful and detailed comments.

\begin{sloppypar}
\printbibliography
\end{sloppypar}

\end{document}